\theoremstyle{definition}
\newtheorem{theorem}{Theorem}[section]
\newtheorem{lemma}[theorem]{Lemma}
\newtheorem{corollary}[theorem]{Corollary}
\newtheorem{definition}[theorem]{Definition}
\newcommand{\set}[1]{\left\{#1\right\}}
\newcommand{\eps}{\varepsilon}
\newcommand{\inner}[2]{\langle #1,#2\rangle}
\renewcommand{\mid}{\;\middle\vert\;}
\newcommand{\defeq}{:=}
\renewcommand{\emptyset}{\varnothing}
\renewcommand{\Pr}[2][]{ \ifthenelse{\isempty{#1}}
  {\mathbf{Pr}\left[#2\right]} {\mathbf{Pr}_{#1}\left[#2\right]} }
\newcommand{\E}[2][]{ \ifthenelse{\isempty{#1}}
  {\mathbf{E}\left[#2\right]}
  {\mathbf{E}_{#1}\left[#2\right]} }
\newcommand{\Var}[2][]{ \ifthenelse{\isempty{#1}}
  {\mathbf{Var}\left[#2\right]}
  {\mathbf{Var}_{#1}\left[#2\right]} }
\newcommand{\Ent}[2][]{ \ifthenelse{\isempty{#1}}
  {\mathbf{Ent}\left[#2\right]}
  {\mathbf{Ent}_{#1}\left[#2\right]} }
  \newcommand{\W}[1]{W_1\left( #1 \right)}
  \newcommand{\Wd}[2][d]{W^{#1}_1\left( #2 \right)}
  \newcommand{\Wv}[1]{\Wd[V]{ #1 }}
  \newcommand{\We}[1]{\Wd[E]{ #1 }}
\title{Low-Sensitivity Matching via Sampling from Gibbs Distributions}
\author{Yuichi Yoshida\thanks{National Institute of Informatics,  supported by JSPS KAKENHI Grant Number 24K02903, \texttt{yyoshida@nii.ac.jp}.}
   \and Zihan Zhang\thanks{National Institute of Informatics, supported by JST SPRING Grant Number JPMJSP2104, \texttt{zihan@nii.ac.jp}}}
\date{}
\begin{document}

\maketitle

\begin{abstract}
    In this work, we study the maximum matching problem from the perspective of sensitivity. 
    The sensitivity of an algorithm $A$ on a graph $G$ is defined as the maximum Wasserstein distance between the output distributions of $A$ on $G$ and on $G - e$, where $G - e$ is the graph obtained by deleting an edge $e$ from $G$. 
    The maximum is taken over all edges $e$, and the underlying metric for the Wasserstein distance is the Hamming distance.
    
    We first show that for any $\varepsilon > 0$, there exists a polynomial-time $(1 - \varepsilon)$-approximation algorithm with sensitivity $\Delta^{O(1/\varepsilon)}$, where $\Delta$ is the maximum degree of the input graph. 
    The algorithm is based on sampling from the Gibbs distribution over matchings and runs in time $O_{\varepsilon, \Delta}(m \log m)$, where $m$ is the number of edges in the graph.
    This result significantly improves the previously known sensitivity bounds.
    
    Next, we present significantly faster algorithms for planar and bipartite graphs as a function of $\varepsilon$ and $\Delta$, which run in time $\mathrm{poly}(n/\varepsilon)$.
    This improvement is achieved by designing a more efficient algorithm for sampling matchings from the Gibbs distribution in these graph classes, which improves upon the previous best in terms of running time.
    
    Finally, for general graphs with potentially unbounded maximum degree, we show that there exists a polynomial-time $(1 - \varepsilon)$-approximation algorithm with sensitivity $\sqrt{n} \cdot (\varepsilon^{-1} \log n)^{O(1/\varepsilon)}$, improving upon the previous best bound of $O(n^{1/(1+\varepsilon^2)})$.
\end{abstract}

\section{Introduction}\label{sec:intro}

In this work, we study the sensitivity of the maximum matching problem, which is recently proposed by Varma and Yoshida~\cite{varma2023average}.
Roughly speaking, the sensitivity of an algorithm quantifies how much its output changes, measured in Hamming distance, when a single edge is removed from the input graph.
For randomized algorithms, sensitivity is defined using the Wasserstein distance, where the underlying metric between outputs is the Hamming distance (see \Cref{sec:prelim} for details).
Note that the trivial upper bound on the sensitivity for the maximum matching problem is $O(n)$, where $n$ is the number of vertices, because the matching size is $O(n)$.
Although sensitivity is an intriguing concept on its own, it is tightly linked to differential privacy~\cite{dwork2006differential} and the \emph{non-signaling} model \cite{akbari2025online,arfaoui2014can,coiteux2024no,gavoille2009can} (see \Cref{subsec:related} for more details).
Moreover, low sensitivity is crucial for enabling trustworthy decision-making and reliable knowledge discovery.
See~\cite{varma2023average} for further discussion, and \cite{hara2023average,li2025average,peng2020average,yoshida2022average} for algorithms with small average sensitivity, a slightly weaker notion of sensitivity, applied to data mining and machine learning problems.

Although maximum matchings can be computed in polynomial time, it is known that any such algorithm must have  sensitivity $\Omega(n)$, even on simple structures like paths~\cite{varma2023average}. 
On the other hand, there exist algorithms with low sensitivity that achieve approximate solutions: specifically, a $1/2$-approximation algorithm with sensitivity $O(1)$~\cite{censor2016optimal,varma2023average}.
Moreover, for graphs with maximum degree bounded by $\Delta$, the connection between sensitivity and distributed algorithms in the $\mathsf{LOCAL}$ model~\cite{varma2023average}, combined with the $O(\varepsilon^{-3} \log \Delta)$-round distributed algorithm from~\cite{harris2019distributed}, yields a $(1 - \varepsilon)$-approximation algorithm with sensitivity $\Delta^{O(\varepsilon^{-3} \log \Delta)}$.

In this work, we achieve a significant improvement in the sensitivity of a $(1 - \varepsilon)$-approximation algorithm for the maximum matching problem on bounded-degree graphs.
We use $n$ and $m$ to denote the number of vertices and edges, respectively, of the input graph.
Then, we show the following:
\begin{theorem}\label{thm:bounded-degree-algorithm}
    For any $0<\eps<1$, there is a randomized $(1-\eps)$-approximation algorithm for the maximum matching problem with sensitivity $\Delta^{O(1/\eps)}$ and running time
    \[
    \min
    \left\{
    \exp(\Delta^{O(1/\varepsilon)})m\log n +O(n),
    \frac{1}{\eps} \Delta^{O(1/\eps)} n^2m\log n
    \right\}.
    \]
\end{theorem}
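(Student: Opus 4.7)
The plan is to output a sample from (an approximation of) the Gibbs distribution $\mu_\lambda$ on matchings of $G$, given by $\mu_\lambda(M)\propto \lambda^{|M|}$, with fugacity $\lambda=(C\Delta/\eps)^{\Theta(1/\eps)}$ for a sufficiently large constant $C$. The argument has three parts: an approximation guarantee for $\mu_\lambda$-samples, a sensitivity bound via a truncated coupling between $\mu_\lambda^G$ and $\mu_\lambda^{G-e}$, and efficient samplers realizing each of the two claimed running times.

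For the approximation guarantee, let $\mu^*(G)$ denote the maximum matching size and $m_k$ the number of matchings of size exactly $k$. Standard augmenting-path counting gives a bound of the form $m_k/m_{\mu^*(G)}\le (C\Delta)^{O(\mu^*(G)-k)}$, so for the chosen $\lambda$ the total weight of matchings of size below $(1-\eps)\mu^*(G)$ is negligible compared to $\lambda^{\mu^*(G)} m_{\mu^*(G)}$. Combining this with a concentration inequality for $|M|$ (e.g., derived from log-concavity of the matching polynomial) yields the $(1-\eps)$-approximation both in expectation and with high probability.

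To bound the sensitivity, I would construct an explicit coupling between $\mu_\lambda^G$ and $\mu_\lambda^{G-e}$ using strong spatial mixing (SSM) of the monomer-dimer model: deleting $e$ perturbs the marginal probability of every other edge $f$ by an amount that decays with the graph distance $d(e,f)$. A Weitz-style self-avoiding walk tree around $e$, or an analogous truncated marginal recursion, produces a coupling that agrees outside the ball $B(e,r)$ of radius $r=O(1/\eps)$ with total variation defect $O(\eps)$. Since $|B(e,r)|\le\Delta^{O(1/\eps)}$, the Wasserstein distance in Hamming metric between the two coupled samples is at most $\Delta^{O(1/\eps)}$. For the running times, two samplers approximate $\mu_\lambda^G$ within $o(1)$ total variation: (a) a local sampler that processes edges in random order and computes each conditional marginal by enumerating matchings on $B(e,r)$, costing $\exp(\Delta^{O(1/\eps)})$ per edge plus $O(m\log n+n)$ bookkeeping; (b) the Jerrum--Sinclair Markov chain on matchings of $G$, which mixes in $O(n^2 m\log n)$ steps, with an $\eps^{-1}\Delta^{O(1/\eps)}$ overhead absorbing $\log\lambda$ and the required TV tolerance. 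The sensitivity bound then transfers to the algorithm's output distributions by the triangle inequality.

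The main obstacle will be the coupling step: establishing quantitative SSM for the monomer-dimer model at fugacity as large as $(C\Delta/\eps)^{\Theta(1/\eps)}$ with correlation length only $O(1/\eps)$, and packaging this decay as an explicit low-Hamming-distance coupling rather than a mere total variation bound. A careful analysis of the self-avoiding walk tree for matchings should suffice, exploiting that we need only a $(1-\eps)$-approximation so an $O(\eps)$ TV defect outside $B(e,r)$ is tolerable, and that only the product $\lambda\Delta$ governs the decay rate for the monomer-dimer model.
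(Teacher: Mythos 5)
Your proposal shares the high-level template with the paper (choose a large fugacity $\lambda=\Delta^{\Theta(1/\eps)}$, sample from the Gibbs distribution, couple the two distributions), but both of the key technical steps contain genuine errors, and the paper takes different routes around exactly these obstacles.

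\textbf{The claimed counting bound is false.} You use ``$m_k/m_{\nu(G)}\le (C\Delta)^{O(\nu(G)-k)}$'' to argue that matchings of size below $(1-\eps)\nu(G)$ have negligible total weight. This bound fails even on bounded-degree graphs: the chain-of-hexagons graph $G_\ell$ in \Cref{sec:lower-bound} has maximum degree $3$, a unique maximum matching of size $\nu=2\ell+2$, and at least $2^\ell$ matchings of size $\nu-1$, so $m_{\nu-1}/m_{\nu}\ge 2^{\ell}$ is exponential in $\nu$ rather than bounded by $(C\Delta)^{O(1)}$. The paper explicitly flags that no Karger-style counting lemma is available for matchings, and sidesteps the issue by writing $\E[M\sim\mu_\lambda]{|M|}=\sum_i\lambda/(\lambda-\lambda_i)$ where $\lambda_1,\dots,\lambda_\nu$ are the roots of the matching generating polynomial, then invoking Heilmann--Lieb plus Vieta's formula to show a $(1-\eps)$-fraction of the $|\lambda_i|$ is at most $(4\Delta)^{O(1/\eps)}$ (\Cref{lem:matching-size-root}, \Cref{lem:upper-bounds-of-roots-of-MGP}). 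Note also that, because the Gibbs distribution does not concentrate on near-maximum matchings, the approximation guarantee must be in expectation; the high-probability claim you add is also not supportable via concentration around $\nu$.

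\textbf{The SSM/radius-$O(1/\eps)$ coupling does not give the claimed Wasserstein bound.} At fugacity $\lambda=\Delta^{\Theta(1/\eps)}$ the monomer--dimer model does satisfy SSM, but the decay rate behaves like $1-\Theta\bigl(1/\mathrm{poly}(\lambda\Delta)\bigr)$, so the correlation length is $\mathrm{poly}(\lambda\Delta)=\Delta^{\Theta(1/\eps)}$, not $O(1/\eps)$. At radius $r=O(1/\eps)$ there is essentially no decay yet. Moreover, even granting your ``TV defect $O(\eps)$ outside $B(e,r)$,'' a coupling that fails (or differs somewhere outside the ball) with probability $O(\eps)$ gives expected Hamming distance up to $\Delta^{O(1/\eps)}+O(\eps n)$, and the $\eps n$ term is not $\Delta^{O(1/\eps)}$ on general $n$. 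The paper instead runs a recursion on the \emph{number of edges} of the graph (not graph distance): splitting the deleted edge into two pendant half-edges, conditioning via \Cref{lem:expansion}, and bounding the single-step contraction by $\mu(v\gets+)\le\lambda\Delta/(1+\lambda\Delta)$ (\Cref{lem:marginal-bound}, \Cref{lem:kappa-recursion}). The resulting geometric series $\sum_k\bigl(\lambda\Delta/(1+\lambda\Delta)\bigr)^k$ converges to $\lambda\Delta$ uniformly in $n$, giving the clean $1+2\lambda\Delta$ sensitivity bound of \Cref{thm:sensitivity}. This is a genuinely different mechanism from truncating a self-avoiding-walk tree: it is a size-based recursion whose contraction is mild (close to~$1$) but whose partial sums stay summable, rather than a distance-based decay that would have to be strong. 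Your running-time item (b) (Jerrum--Sinclair plus a $\log\lambda$ overhead) does match the paper's second bound, but item (a) rests on the same unsupported radius-$O(1/\eps)$ locality claim; the paper instead cites the Chen et al.\ mixing time $\exp(O(\lambda^3\Delta^3\log(\lambda^2\Delta^3)))m\log n$ (\Cref{lem:glauber-edge-time}).
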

Our algorithm draws inspiration from statistical physics. Specifically, we consider the Gibbs distribution over matchings, where a matching $M$ is sampled with probability proportional to $\lambda^{|M|}$, for a parameter $\lambda > 0$, which is a special case of the \emph{monomer-dimer} model. 
When $\lambda$ is large, the distribution tends to favor large matchings. 
\Cref{thm:bounded-degree-algorithm} is obtained by carefully tuning the parameter to $\lambda = \Delta^{\Theta(1/\varepsilon)}$.
We also complement this result by showing that achieving a $(1-\varepsilon)$-approximation requires setting $\lambda = 2^{\Omega(1/\varepsilon)}$. 
Hence, the exponential dependence on $1/\varepsilon$ in the sensitivity bound of~\Cref{thm:bounded-degree-algorithm} is inherent for algorithms based on the Gibbs distribution.

Although the sensitivity in \Cref{thm:bounded-degree-algorithm} is low, the algorithm’s running time is relatively high, as the best known methods for sampling matchings incur a polynomial dependence on $\lambda$.
To address this, we show that our algorithm can be significantly accelerated for graph classes where perfect matchings can be efficiently sampled.
As representative examples, we describe the results for planar graphs and bipartite graphs here.
\begin{theorem}\label{thm:planar-algorithm}
    For any $0<\eps<1$, there is a randomized $(1-\eps)$-approximation algorithm for the maximum matching problem on planar graphs with sensitivity $\Delta^{O(1/\eps)}$ and running time $O(\eps^{-1} n^5m\log\Delta)$.
\end{theorem}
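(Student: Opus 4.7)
The plan is to follow the same high-level strategy as Theorem~\ref{thm:bounded-degree-algorithm}: output a sample from the Gibbs distribution over matchings of $G$ with fugacity $\lambda = \Delta^{\Theta(1/\varepsilon)}$. The approximation guarantee $(1-\varepsilon)$ and the sensitivity bound $\Delta^{O(1/\varepsilon)}$ depend only on the output distribution, so both carry over verbatim from Theorem~\ref{thm:bounded-degree-algorithm}. What must change is the sampler itself: on planar graphs we wish to avoid the polynomial-in-$\lambda$ cost incurred by a generic Markov-chain-based monomer-dimer sampler.

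The key reduction is that sampling a matching of $G$ from the Gibbs distribution reduces to sampling a weighted perfect matching of an auxiliary planar graph $G^*$. To build $G^*$, attach a pendant vertex $v'$ to each $v\in V(G)$; give every original edge weight $\lambda$ and every pendant edge weight $1$. Perfect matchings of $G^*$ are in weight-preserving bijection with matchings of $G$, since every vertex of $G$ left unmatched is forced to pair with its pendant, and $G^*$ remains planar because attaching pendants does not destroy planarity. We may therefore apply Kasteleyn's theorem: fix a Pfaffian orientation of $G^*$ (constructible in linear time from a planar embedding) and compute $Z(G^*) = \sum_M \lambda^{|M|}$ exactly as the Pfaffian of the weighted skew-adjacency matrix.

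A sample is then produced by the standard self-reducibility scheme. Iteratively pick an uncovered vertex $v$, compute for each neighbor $u$ of $v$ in $G^*$ the marginal $\mathbf{Pr}[\{v,u\}\in M^*] = w(v,u)\cdot Z(G^*\setminus\{v,u\})/Z(G^*)$ via two Pfaffian evaluations, sample an outgoing edge accordingly, delete the matched pair, and recurse. This requires $O(n)$ rounds, each performing $O(\deg(v))$ Pfaffian computations on graphs of size $O(n)$, so one exact sample costs $O(m\cdot n^3)$ arithmetic operations. The integer entries have bit length $O(n\log\lambda) = O(\varepsilon^{-1} n \log \Delta)$, which explains the $\log\Delta$ factor. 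Boosting over $O(1/\varepsilon)$ independent samples and returning the largest matching then fits the claimed $O(\varepsilon^{-1} n^5 m \log\Delta)$ running time.

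The main technical obstacle I foresee is maintaining \emph{exact} sampling so that the sensitivity bound inherited from Theorem~\ref{thm:bounded-degree-algorithm} remains valid under the Wasserstein metric. Pfaffians can be evaluated in exact integer arithmetic, but the ratios used in the marginals must be truncated to finite precision; one must either argue that rational arithmetic at the required precision fits within the stated bound, or couple the implemented distribution to the true Gibbs distribution within total variation $1/\mathrm{poly}(n)$ and show that this perturbation contributes only a negligible additive term to the Wasserstein sensitivity. A secondary point to verify is that the pendant construction preserves whatever structural properties of the Gibbs distribution the analysis of Theorem~\ref{thm:bounded-degree-algorithm} relies upon, which should be routine since pendants do not interact with the edges of $G$.
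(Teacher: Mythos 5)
Your proposal takes a genuinely different route from the paper (Pfaffian-based exact sampling via an auxiliary graph and self-reducibility, versus the paper's Markov chain on matched vertex sets implemented with FKT as a subroutine), but it contains a fatal flaw: the auxiliary construction does not do what you claim. After attaching a degree-one pendant $v'$ to each $v\in V$, any perfect matching of $G^*$ must include every pendant edge $(v,v')$ (each $v'$ has no other neighbor), so $G^*$ has exactly one perfect matching, namely the set of all pendant edges. The converse direction fails for the same reason: if $M$ is a matching of $G$ and $v$ is covered by $M$, then $v'$ has no available partner, so $M\cup\{(v,v'):v\text{ unmatched}\}$ is not a perfect matching of $G^*$. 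There is no weight-preserving bijection, and no quick fix exists: computing the matching generating polynomial $m_G(\lambda)$ of a planar graph exactly is $\#\mathbf{P}$-hard (Jerrum, ``Two-dimensional monomer-dimer systems are computationally intractable,'' J.\ Stat.\ Phys.\ 1987), so no Kasteleyn-style reduction can yield an exact polynomial-time sampler from the Gibbs distribution over all matchings of a planar graph.

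The paper avoids this obstruction by never attempting to count matchings; it only counts \emph{perfect} matchings of induced subgraphs $G[U]$, which FKT handles in $O(n^3)$. Concretely, it runs a $2$-vertex Glauber dynamics on the distribution $\mu_{V;\lambda,G}$ over sets of matched vertices, shows via a recursive coupling in the $d_V$ metric that this distribution is $2$-spectrally-independent (independently of $\lambda$ and $\Delta$), and derives the $O(n^2(m\log\lambda+\log(1/\delta)))$ mixing time; each step of the chain calls the perfect-matching counter twice. A secondary issue in your write-up: the final ``boost over $O(1/\varepsilon)$ independent samples and return the largest'' step is both unnecessary (a single Gibbs sample already achieves expected size $(1-\varepsilon)\nu(G)$ by \Cref{thm:matching-approximation}) and harmful, since taking the argmax over independent samples does not inherit the Wasserstein coupling bound of a single sample and would invalidate the sensitivity claim.
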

In particular, the algorithm runs in polynomial time even on general planar graphs, where we have no restriction on vertex degree (though the sensitivity is still of the form $\Delta^{O(1/\varepsilon)}$).
As a byproduct, we obtain an algorithm that outputs a random matching whose distribution is $\delta$-close in total variation distance to the Gibbs distribution induced by parameter $\lambda$, with running time $O(n^5 (m \log \lambda + \log (1/\delta)))$. 
This improves upon previous work~\cite{alimohammadi2021fractionally}, which only states that the running time is polynomial, although we suspect it to be as high as $O(n^5m^6 \log(1/\delta))$ (independent of $\lambda$).

For the bipartite graph case, we show the following:
\begin{theorem}\label{thm:bipartite-algorithm}
    For any $0<\eps<1$, there is a randomized $(1-\eps)$-approximation algorithm for the maximum matching problem on bipartite graphs with sensitivity $\Delta^{O(1/\eps)}$ and running time $\tilde O(\eps^{-3} n^{16}m^3\log^3\Delta)$.\footnote{$\tilde O(\cdot)$ hides a polylogarithmic factor in $n$.}
\end{theorem}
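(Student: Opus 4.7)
The plan is to apply the general Gibbs-sampling framework that yields \Cref{thm:bounded-degree-algorithm}, with the fugacity set to $\lambda = \Delta^{\Theta(1/\eps)}$. The $(1-\eps)$-approximation guarantee of the matching drawn from $\mu(M)\propto \lambda^{|M|}$ and the sensitivity bound $\Delta^{O(1/\eps)}$ are properties of the Gibbs distribution itself (not of the sampling procedure), so they transfer directly from \Cref{thm:bounded-degree-algorithm}. The only thing that changes between \Cref{thm:bounded-degree-algorithm}, \Cref{thm:planar-algorithm}, and the present statement is the cost of producing a sample whose law is $\delta$-close in total variation to $\mu$; hence the whole proof reduces to exhibiting a fast sampler tailored to bipartite graphs and then composing the running times.

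For the sampler I would follow the classical reduction from the monomer-dimer model to perfect-matching sampling on bipartite graphs. Given $G=(L\cup R,E)$, I would embed it into an augmented bipartite graph by adding $|L|+|R|$ auxiliary ``hole'' vertices and weighting the edges so that perfect matchings of the augmented graph are in weight-preserving bijection with matchings of $G$ weighted by $\lambda^{|M|}$; this is a standard construction that turns the fugacity $\lambda$ into edge weights bounded by $\sqrt{\lambda}$. I would then invoke the Jerrum--Sinclair--Vigoda algorithm for the permanent of a non-negative matrix, which gives a polynomial-time approximate sampler for perfect matchings in weighted bipartite graphs together with an FPRAS for the partition function. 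Reading off the restriction of the sampled perfect matching to $E$ yields a sample from a distribution $\delta$-close to $\mu$.

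To translate this into the claimed running time, I would set $\delta$ small enough that the expected size of the sampled matching is preserved to within a $(1-\eps/2)$ factor, for which $\delta = O(\eps/n)$ suffices (losing an additive error at most $\delta \cdot n$ in matching size). The JSV sampler requires a simulated-annealing schedule that gradually raises the edge weights to $\sqrt{\lambda}$; the length of this schedule contributes the $\log^3\Delta$ factor (via $\log\lambda = O(\eps^{-1}\log\Delta)$ cubed through the three nested stages of the permanent FPRAS), while the per-stage mixing time and estimator variance contribute the high polynomial factor $\tilde O(n^{16}m^3)$. Combining the Gibbs-framework approximation analysis, the sensitivity argument from \Cref{thm:bounded-degree-algorithm}, and the sampler cost produces the stated bound.

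The main obstacle I anticipate is controlling the mixing time and estimator variance uniformly in $\lambda = \Delta^{\Theta(1/\eps)}$: a naive application of the JSV bounds would introduce a $\mathrm{poly}(\lambda)$ factor, which would destroy the $\log\Delta$ dependence and replace it by $\Delta^{O(1/\eps)}$. To avoid this, the annealing schedule has to be chosen so that consecutive distributions have bounded relative density, which requires a careful telescoping of the partition-function estimates along the schedule. A secondary issue is verifying that the TV-distance error from the approximate sampler propagates into only an $O(\delta)$ additive Wasserstein error for sensitivity, which follows from the coupling interpretation used already in \Cref{thm:bounded-degree-algorithm} but must be rechecked for the augmented-graph reduction.
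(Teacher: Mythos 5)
Your proposal takes a genuinely different route from the paper, but it contains a gap at its central step. The paper proves a general reduction (\Cref{thm:perfect-reduction}) that converts any perfect-matching \emph{counter} on subgraphs into a Gibbs sampler: it introduces a $2$-vertex Glauber dynamics on the set of matched \emph{vertices}, proves $2$-spectral independence of the matched-vertex distribution $\mu_{V;\lambda,G}$ via a recursive coupling in the vertex Hamming metric, and implements each step by calling the FPRAS for perfect matchings on induced subgraphs $G[U]$. It then does a final counting-to-sampling reduction on $G[U]$ to recover a matching. Your proposal instead tries to do a single global reduction from the $\lambda$-weighted monomer–dimer model to sampling a weighted perfect matching of a fixed augmented bipartite graph.

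The gap is in the claimed ``weight-preserving bijection.'' If you add one hole vertex per original vertex on the opposite side, and connect holes to holes so that a perfect matching of $G'$ can always be completed from any matching $M$ of $G$, then the number of completions depends on $|M|$: the $|M|$ leftover holes on each side must be matched among themselves, contributing a factor $|M|!$ (or $(n-|M|)!$, depending on which variant you use). Concretely, with weight $\lambda$ on original edges and weight $1$ on all hole edges, the induced distribution on $M = P\cap E$ is proportional to $\lambda^{|M|}\,|M|!$, not $\lambda^{|M|}$. You cannot kill this factor with a fixed reweighting of the hole--hole edges, because the factor is combinatorial in $|M|$ and not a product over hole edges; and restricting each hole to a single designated partner hole does not work either, since the set of leftover hole indices on the two sides need not coincide (e.g.\ $M=\{(l_1,r_2)\}$ leaves $b$-hole index $1$ and $a$-hole index $2$ free, with no edge between them). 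Because the sampled law is $\mu''(M)\propto \lambda^{|M|}|M|!$ rather than the Gibbs law $\mu(M)\propto\lambda^{|M|}$, neither \Cref{thm:matching-approximation} nor, more importantly, the sensitivity bound of \Cref{thm:sensitivity} applies to it; the sensitivity proof (recursive coupling plus \Cref{lem:marginal-bound}) uses the exact product structure $\lambda^{|M|}$ and does not transfer to a size-dependent reweighting. So the ``approximation and sensitivity transfer directly'' step would not be justified.

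Two secondary issues. First, the running time in the statement is not a black-box consequence of invoking JSV once on a single augmented graph; the paper's $\tilde O(\eps^{-3}n^{16}m^3\log^3\Delta)$ comes from $O(n^2(m\log\lambda+\log n))$ Glauber steps, each of which calls the Bezáková--Štefankovič--Vazirani--Vigoda counter with an error parameter $\delta'$ that must be inversely proportional to the number of steps (so that the TV error does not accumulate), and the cost of that counter scales as $\delta'^{-2}$. You would need to redo such an accounting for whatever sampler you use, and your sketch leaves this unjustified. Second, you correctly identify that a TV error $\delta$ propagates to an additive $O(\delta n)$ in the Wasserstein sensitivity; that part is fine and is handled the same way in the paper.

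If you want to rescue a ``one global reduction'' version of your plan, you would have to either (i) give an explicit weighted augmented graph whose perfect matchings are in genuine weight-preserving bijection with matchings of $G$ (which, for the reasons above, I do not think exists), or (ii) re-derive the $(1-\eps)$-approximation and the $\Delta^{O(1/\eps)}$ sensitivity bound directly for the reweighted law $\mu''(M)\propto\lambda^{|M|}|M|!$, neither of which is immediate.
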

As in the planar case, we obtain as a byproduct an algorithm that outputs a random matching whose distribution is $\delta$-close (in total variation distance) to the Gibbs distribution induced by parameter $\lambda$, with running time $O(n^{12}(m \log \lambda + \log(1/\delta)))$.  
To the best of our knowledge, no algorithm is known for sampling a matching from the Gibbs distribution more efficiently on bipartite graphs than in the general case, where the best known runtime is $O(nm \lambda (n(\log n + \log \lambda)) + \log(1/\delta))$~\cite[Proposition 12.4]{jerrum1996markov}.  
In contrast, our algorithm achieves exponentially better time complexity as a function of $\lambda$.
As we choose $\lambda=\Delta^{\Theta(1/\varepsilon)}$, this difference is crucial.

Next, we show that sublinear sensitivity can be achieved on general graphs by first sparsifying the graph into a low-degree one with low sensitivity, while approximately preserving the matching size, using the technique developed in~\cite{chen2025entropy}, and then applying the algorithm from \Cref{thm:bounded-degree-algorithm}:
\begin{theorem}\label{thm:unbounded-degree-algorithm}
    For any $0<\eps<1$, there is a randomized $(1-\varepsilon)$-approximation algorithm for the maximum matching problem with edge sensitivity $\sqrt{n} (\varepsilon^{-1} \log n)^{O(1/\varepsilon)}$ and running time $O(n^2 m (\varepsilon^{-1}\log n)^{O(1/\varepsilon)})$.
\end{theorem}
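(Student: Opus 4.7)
The plan is to reduce to \Cref{thm:bounded-degree-algorithm} via a low-sensitivity, matching-preserving sparsifier from \cite{chen2025entropy}. First I would set $\Delta = c\,\eps^{-1}\log n$ for a sufficiently large constant $c$ and invoke the sparsifier of \cite{chen2025entropy} on the input $G$ with approximation parameter $\eps/2$, producing a random subgraph $H\subseteq G$ of maximum degree at most $\Delta$ whose maximum matching size is at least $(1-\eps/2)\OPT(G)$ in expectation. The property of this sparsifier I would need is that its edge sensitivity---the $W_1$ distance between the induced distributions of $H$ on $G$ and on $G-e$ under the Hamming metric on edge sets---is at most $\sqrt{n}\cdot\mathrm{polylog}(n,\eps^{-1})$, which can be absorbed into the $(\eps^{-1}\log n)^{O(1/\eps)}$ factor in the final bound.

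Then I would run the $(1-\eps/2)$-approximate matching algorithm of \Cref{thm:bounded-degree-algorithm} on the sparsified graph $H$. Its expected output size is at least $(1-\eps/2)\OPT(H) \geq (1-\eps)\OPT(G)$, and its sensitivity on fixed inputs is $\Delta^{O(1/\eps)} = (\eps^{-1}\log n)^{O(1/\eps)}$. To bound the end-to-end sensitivity on an input edge $e$, I would first couple the sparsifier outputs on $G$ and $G-e$ so that their expected Hamming distance equals $\W{H(G),\,H(G-e)}$, and then, for each realized pair of subgraphs, transform one into the other by a sequence of single-edge insertions and deletions, applying the sensitivity bound of \Cref{thm:bounded-degree-algorithm} at each step (insertion and deletion sensitivities being equivalent by a standard symmetry). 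The triangle inequality for $W_1$ then yields, for the final matching $M$,
\[
\W{M(G),\; M(G-e)} \;\leq\; \Delta^{O(1/\eps)}\cdot \W{H(G),\; H(G-e)} \;\leq\; \sqrt{n}\,(\eps^{-1}\log n)^{O(1/\eps)}.
\]
The running time is dominated by the second branch of \Cref{thm:bounded-degree-algorithm}, which for $\Delta = O(\eps^{-1}\log n)$ is $\Delta^{O(1/\eps)} n^2 m \log n = n^2 m (\eps^{-1}\log n)^{O(1/\eps)}$; the sparsifier's polynomial running time is absorbed into this estimate.

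The main obstacle I expect is the composition lemma itself: the sensitivity bound of \Cref{thm:bounded-degree-algorithm} is stated for a single edge modification to a \emph{fixed} input, so lifting it to a Lipschitz-type bound against a \emph{random} pair of inputs (the coupled sparsifier outputs) requires writing down the coupling explicitly and verifying that the sensitivity factors multiply, rather than add, along the random intermediate path. Two minor points also need care: ensuring the sparsifier's matching-preservation guarantee holds in expectation (not merely with high probability) so that it composes cleanly inside $W_1$, and checking that the borderline choice $\Delta = \Theta(\eps^{-1}\log n)$ is large enough for the sparsifier to maintain the $(1-\eps/2)$ approximation while still keeping $\Delta^{O(1/\eps)}$ polylogarithmic in $n$.
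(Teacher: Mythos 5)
Your high-level plan---sparsify to maximum degree $\mathrm{polylog}(n)$ with low sensitivity, then apply \Cref{thm:bounded-degree-algorithm}, with the two sensitivity factors multiplying via a coupling and triangle-inequality argument---is exactly what the paper does, and your reasoning about why the factors multiply (rather than add) is correct and handled as you anticipate. However, you have assumed away the key technical difficulty. You write that ``the property of this sparsifier I would need is that its edge sensitivity \ldots is at most $\sqrt{n}\cdot\mathrm{polylog}(n,\eps^{-1})$,'' treating this as a known property of the sparsifier from \cite{chen2025entropy}. It is not. That sparsifier samples each edge $e$ with probability proportional to $x(e)$, where $x$ is an optimal fractional matching (an LP solution), and the LP solution itself can change drastically when a single edge is deleted---the LP optimum is generically non-unique and discontinuous in the instance. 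The central contribution of this section of the paper is therefore to replace the plain LP with an \emph{entropy-regularized} LP (adding $\alpha\sum_{v}H(\{x(e)\}_{e\in E(v)})$ to the objective, with $\alpha=\Theta(\eps/\log n)$) and then to use the $1$-strong concavity of entropy under the $\ell_1$ norm to show that the regularized optimum moves by at most $\tilde O(\sqrt{n/\alpha})$ in $\ell_1$ when one edge is removed (\Cref{lem:lp-sensitivity}). This stability bound, combined with a Cauchy--Schwarz argument, is precisely what yields the $\sqrt{n}\,\mathrm{polylog}$ sensitivity of the sparsifier. One must also check (\Cref{lem:lp-approximation}) that the regularization does not degrade the approximation ratio by more than a $1-\alpha\log\nu(G)$ factor. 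Without the regularization step your proof does not go through.

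Two smaller points: the sparsifier of \cite{chen2025entropy} (see \Cref{lem:chen}) produces maximum degree $O(\eps^{-2}\log n)$, not $O(\eps^{-1}\log n)$, though this only changes the constant inside the $O(1/\eps)$ exponent and does not affect the stated bound. And the paper handles the ``only with high probability'' concern you raise by simply adding the contribution of the $1/n^2$-probability failure event (which costs at most $n$ in Hamming distance) to the sensitivity, so no ``in expectation'' version of the sparsifier guarantee is needed.
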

To see the significance of \Cref{thm:unbounded-degree-algorithm}, consider a weaker notion of \emph{average} sensitivity, where the sensitivity is averaged over all possible edge deletions. 
Note that average sensitivity is always at most the (worst-case) sensitivity.
For this measure, a bound of $O(n^{1/(1 + \varepsilon^2)})$ is known~\cite{varma2023average}, which is only slightly sublinear. 
In contrast, our result establishes a \emph{worst-case} sensitivity of $\sqrt{n} \cdot (\varepsilon^{-1} \log n)^{O(1/\varepsilon)}$, providing a substantial improvement.\footnote{Although~\cite{yoshida2021sensitivity} claims a sensitivity of $O(3^K)$, where $K = (1/\varepsilon)^{2^{O(1/\varepsilon)}}$, there is a flaw in their argument concerning the sensitivity of the randomized greedy algorithm under vertex deletion. We will elaborate on this in \Cref{sec:flaw}.}

\subsection{Technical Overview}
\paragraph{Bounded-Degree Graphs}

We first discuss the approximation ratio of the matching sampled from the Gibbs distribution.
Let $m_G(\lambda) = \sum_{k \geq 0} m_k \lambda^k$ be the \emph{matching generating polynomial} of a graph $G$, where $m_k$ denotes the number of matchings of size $k$ in $G$. The expected size of a matching drawn from the Gibbs distribution with parameter $\lambda$ is given by
\[
    \frac{\sum_M |M| \lambda^{|M|}}{\sum_M \lambda^{|M|}} = \frac{\lambda m_G'(\lambda)}{m_G(\lambda)},
\]
where $M$ runs over matchings of $G$ and $m_G'$ is the derivative of $m_G$.
Let $\lambda_1, \ldots, \lambda_\nu$ be the (negative) roots of $m_G$, where $\nu$ is the size of a maximum matching in $G$. 
Then, by a standard calculation, the expected matching size can be expressed as
\[
    \sum_{i=1}^\nu \frac{\lambda}{\lambda - \lambda_i}.
\]

Using the celebrated result of Heilmann and Lieb on the location of the roots of the matching polynomial in bounded-degree graphs~\cite{heilmann1972theory}, we can show that a $(1 - \varepsilon)$-fraction of the roots $\lambda_i$ satisfy $|\lambda_i| \leq \Delta^{O(1/\varepsilon)}$, where $\Delta$ is the maximum degree of $G$. Hence, by choosing $\lambda = \Delta^{\Theta(1/\varepsilon)} / \varepsilon = \Delta^{\Theta(1/\varepsilon)}$, we ensure that the expected matching size achieves a $(1 - \varepsilon)$-approximation to the maximum matching size.

To bound the sensitivity of the Gibbs distribution, we establish a recursion on the sensitivity upper bounds for graphs of different sizes.
Similar techniques have been used to bound the spectral independence of the Gibbs distribution on matchings, as well as on other Holant problems and edge colorings~\cite{chen2023near,chen2024fast,chen2025decay}.

Before explaining the details, we introduce some notation.
Let $\mu_{E; G}$ denote the Gibbs distribution on matchings of a graph $G=(V,E)$.
For any $F \subseteq E$, let $\mu_{F; G}$ be the marginal distribution of $\mu_{E; G}$ on $2^F$ (see \Cref{def:marginal-conditional} for detail).
For an edge $i \in E$, we write $i \gets +$ to indicate that $i$ is matched and $i \gets -$ to indicate that $i$ is unmatched.
We call such an event a \emph{pinning}.
It is not hard to show that bounding the Wasserstein distance $\W{\mu_{E;G}, \mu_{E;G-i}}$ for an edge $i \in E$ can be reduced to bounding $\W{\mu_{E-i;G}^{i \gets +}, \mu_{E-i;G}^{i \gets -}}$.

A key property of the Gibbs distribution that yields low sensitivity is that conditioning on a pinning is equivalent to performing certain edge deletions on the underlying graph.
Hence, to bound the Wasserstein distance $\W{\mu_{E-i;G}^{i\gets +}, \mu_{E-i;G}^{i\gets -}}$ between the two distributions $\mu_{E-i;G}^{i\gets +}$ and $\mu_{E-i;G}^{i\gets -}$, we expand the distributions by the law of total probability over the matched/unmatched edges in $N(i)$, which gives a weighted sum of Wasserstein distances on distributions conditioned on pinnings of $N(i)$, where $N(i)$ is the set of edges incident to $i$.
Because of the equivalence of pinning and edge deletions, each of these conditional distributions is simply a Gibbs distribution on a smaller graph.
Thus, the Wasserstein distance between pinnings of the Gibbs distribution on $G$ can be controlled by a weighted sum of Wasserstein distances on those smaller graphs.
If this sum has a sufficient decay property, we obtain a universal bound on the Wasserstein distance between pinnings.

Formally, to track the growth of the Wasserstein distance at each step of the recursion, we define
\[
    \kappa_{s, \Delta} = \max_{G, i \in E} \W{\mu_{E-i; G}^{i\gets +},\mu_{E-i; G}^{i \gets -}},
\]
where the maximum is taken over all graphs $G$ with at most $s$ edges and maximum degree at most $\Delta$.
Intuitively, $\kappa_{s,\Delta}$ represents the worst-case Wasserstein distance when the graph has at most $s$ edges.
Using our expansion into weighted sums, we obtain a recursion of the form
\begin{align*}
    \kappa_{s,\Delta} \le 
    \begin{cases}
        0 & \text{if } s=1,\\
        \alpha (\kappa_{s-1,\Delta} + 1) & \text{otherwise},
    \end{cases}
\end{align*}
where $\alpha = \lambda \Delta / (1+\lambda \Delta)$.
Solving this shows that
\[
    \kappa_{s,\Delta} \le \frac{\alpha}{1-\alpha} = O(\lambda \Delta).
\]
Because we choose $\lambda = \Delta^{\Theta(1/\varepsilon)}$, we obtain the desired sensitivity bound.

\paragraph{Planar and Bipartite Graphs}
Recent work~\cite{chen2023near} showed that sensitivity can also be used to bound the spectral independence of the Gibbs distribution, which in turn yields a mixing-time bound for the Glauber dynamics.
An intriguing observation is that if we adopt the same recursion as in the bounded-degree setting, but measure the Wasserstein distance using the Hamming distance of matched \emph{vertices} (as opposed to edges), then the sensitivity can be bounded by $2$, independent of the parameter~$\lambda$ and the maximum degree!

Motivated by this observation, we propose a new form of Glauber dynamics that operates on \emph{matched vertices}.
In contrast to the standard edge-based Glauber dynamics, our approach employs a $2$-site update rule to maintain the connectivity of the Markov chain.
A similar Markov chain was studied in~\cite{alimohammadi2021fractionally} to count the number of size-$k$ matchings on planar graphs, and their method can be extended to approximately sample from the Gibbs distribution on planar graphs without dependence on $\lambda$.
But the time complexity of this sampling is high with respect to the size of the graph, as it requires a reduction from sampling to counting, followed by another reduction back to sampling.
We bypass these reductions, and our algorithm achieves significantly better time complexity with respect to the size of the graph, while maintaining only a logarithmic dependence on $\lambda$.

Concretely, we define a $2$-Glauber dynamics, denoted $\mathcal{P}_G$, on the distribution of matched vertices induced by the Gibbs distribution of matchings, which we denote by $\mu_{V; G}$. 
We then show that $\mathcal{P}_G$ is connected and converges to $\mu_{V; G}$.
By slightly modifying our earlier recursion for bounded-degree graphs, we derive an $O(1)$ bound on the sensitivity of $\mu_{V; G}$ and obtain a mixing time of
\[
   O\left(n^2\left(m\,\log \lambda +\log \frac{1}{\delta}\right)\right).
\]

Although a similar spectral independence bound has been established in \cite{alimohammadi2021fractionally}, this work presents the first sensitivity bound for $\mu_{V; G}$. Notably, our proof is grounded in the combinatorial structure of the distribution, differing fundamentally from the zero-freeness-based approach employed in \cite{alimohammadi2021fractionally}.

Furthermore, if there exists a fully polynomial-time (deterministic or randomized) approximation scheme (FPTAS/FPRAS) for counting perfect matchings in subgraphs of $G$, as is known for planar and bipartite graphs, we can implement $\mathcal{P}_G$ in polynomial time with sufficient accuracy to approximate $\mu_{V; G}$.
After obtaining a sample $U$ from $\mu_{V; G}$, we can then sample from $\mu_{E; G}$ by approximately sampling a perfect matching of $G[U]$, where $G[U]$ is a subgraph of $G$ induced by $U$, using a standard reduction that employs the approximate counting algorithm for perfect matchings on subgraphs of $G$.

\paragraph{General Graphs}
To obtain a low-sensitivity algorithm for general graphs, we consider sparsifying the input graph into a low-degree subgraph that approximately preserves the maximum matching size, while ensuring the sparsification process itself has low sensitivity. 
To this end, we leverage the recent notion of an \emph{$\varepsilon$-degree sparsifier}~\cite{chen2025entropy}.
Intuitively, a subgraph $H$ of $G = (V, E)$ is an $\varepsilon$-degree sparsifier if its maximum degree is small, say, $O(\log n)$, and the maximum matching size in $H$ approximates that in $G$ within a factor of $\varepsilon$.

Let $x \in \mathbb{R}^E$ be an optimal solution to the LP relaxation of the maximum matching problem with odd set constraints. 
That is, we maximize $\sum_{e \in E} x(e)$ subject to:
\[
\sum_{e \ni v} x(e) \leq 1 \quad \text{for all } v \in V, \quad \text{and} \quad \sum_{e \in E(G[B])} x(e) \leq \left\lfloor \frac{|B|}{2} \right\rfloor \quad \text{for all odd-sized } B \subseteq V.
\]
It was shown in~\cite{chen2025entropy} that sampling each edge $e \in E$ independently with probability $\Theta(x(e) /\gamma)$ yields an $\varepsilon$-degree sparsifier of $G$ with maximum degree $\Theta(\varepsilon^{-2} \log n)$ with high probability, where $\gamma= \Theta(\varepsilon^2/\log n)$.

However, this sparsification process does not inherently have low sensitivity: small changes in the input graph can lead to large changes in the LP solution $x$. 
To mitigate this, we introduce a entropy regularization term $\alpha \sum_{v \in V}H(\{x(e)\}_{e \in E: v \in E})$ to stabilize the LP, where $H(\{x(e)\}_{e \in F}) = \sum_{e \in F}x(e) \log (1/x(e))$ is the entropy function and $\alpha = \Theta(\varepsilon/\log n)$.
This does not significantly degrade the maximum matching size of the sparsifier $H$.
Let $x \in \mathbb{R}^E$ and $\tilde x \in \mathbb{R}^E$ denote the solutions to the regularized LPs for $G=(V,E)$ and $\tilde G = G-e:=(V,E \setminus \{e\})$ for some $e \in E$, respectively. 
We can then show that $\|x - \tilde x\|_1 = \tilde O(\sqrt{n/\alpha}) = \tilde O(\sqrt{\varepsilon^{-1} n})$.
As a result, the Wasserstein distance between the edge sets of the corresponding sparsifiers $H$ and $\tilde H$ is also $\tilde O(\sqrt{\varepsilon^{-1}n}/\gamma) = \varepsilon^{-O(1)} \sqrt{n}$. 
Finally, by applying our algorithm for low-degree graphs (\Cref{thm:bounded-degree-algorithm}) to the sparsifier, we obtain \Cref{thm:unbounded-degree-algorithm}.

\subsection{Related Work}\label{subsec:related}

\paragraph{Sensitivity}
In addition to the sensitivity upper bounds for maximum matching discussed above, it is known that any deterministic constant-factor approximation algorithm must incur a sensitivity of $\Omega(\log^* n)$, even on constant-degree graphs~\cite{yoshida2021sensitivity}, where $\log^* n$ denotes the iterated logarithm.
This result highlights the necessity of randomness in establishing \Cref{thm:bounded-degree-algorithm}, which achieves constant sensitivity for constant-degree graphs.

There are few problems for which algorithms with low sensitivity are known to exist. 
One example is the minimum $s$-$t$ cut problem, for which there is an algorithm that achieves an additive error of $O(n^{2/3})$ with sensitivity $O(n^{2/3})$~\cite{varma2023average}. 
Another example is a $(1 + \varepsilon)$-approximation algorithm for the shortest path problem with sensitivity $O(\varepsilon^{-1} \log^3 n)$, though in this case, sensitivity is measured with respect to edge contractions rather than edge deletions~\cite{kumabe2023lipschitz}. 
(Indeed, it is known that any algorithm that outputs an explicit $s$-$t$ path must have sensitivity $\Omega(n)$ with respect to edge deletions; see~\cite{varma2023average}.)

The idea of using the Gibbs distribution has previously been applied to design an algorithm for the minimum cut problem with low average sensitivity, where the average is taken over the choice of deleted edges~\cite{varma2023average}. 
However, the analysis of the approximation ratio, sensitivity, and running time in that setting relies heavily on Karger's cut-counting lemma, which states that the number of cuts of size at most $\alpha$ times the minimum cut is at most $O(n^{2\alpha})$~\cite{karger1993global}. 
In contrast, for the matching problem, such structure does not hold: there exist graphs with a unique maximum matching, yet an exponential number of matchings of size just one less (see \Cref{sec:lower-bound}). 
As a result, we take a completely different approach to analyzing the approximation ratio, sensitivity, and running time in our setting.

\paragraph{Sampling from the Gibbs Distribution over Matchings}
We discuss the problem of sampling from the Gibbs distribution over matchings, where $\lambda$ is the model parameter, $\delta$ is the desired total variation distance, and $\Delta$ denotes the maximum degree of the graph.  
In their classical work~\cite{jerrum1989approximating}, Jerrum and Sinclair established the rapid mixing of Glauber dynamics for this model. The best known mixing time until recently was  
\[
    O\left(nm \max\{1,\lambda\} \left(n(\log n + \log \max\{1,\lambda\}) + \log \frac{1}{\delta}\right)\right)
\]  
for arbitrary graphs~\cite[Proposition 12.4]{jerrum1996markov}.
Very recently, this was improved to $O(m \log n)$ on all bounded-degree graphs~\cite{chen2024fast,chen2024spectral}. 
Based on our own calculation, the precise mixing time is  
\[
\exp\left(O(\lambda^3\Delta^3 \log(\lambda^2\Delta^3))\right) \cdot m\left(\log n + \log \frac{1}{\delta}\right).
\]  
Our algorithm for bounded-degree graphs (\Cref{thm:bounded-degree-algorithm}) builds on this result.

For planar graphs, it has been shown that one can sample a matching in time $\mathrm{poly}(n, \log (1/\delta))$ even from the monomer-dimer model~\cite{alimohammadi2021fractionally}, where each matching $M$ is sampled with probability proportional to $\prod_{e \in M} w(e) \cdot \prod_{v \in V \setminus V(M)} \lambda(v)$ for weight functions $w:E \to \mathbb R_{\geq 0}$ and $\lambda:V \to \mathbb R_{\geq 0}$.

\paragraph{Differential Privacy}

Sensitivity is closely related to differential privacy~\cite{dwork2006differential}.  
In the context of graphs, for $\eta>0$, a randomized algorithm $A$ is said to be \emph{$(\eta,\delta)$-(edge) differentially private} if, for any graph $G, G'$ different by only one edge, and any set of outcomes $\mathcal{O}$ of the algorithm, the following inequality holds:
\[
    \Pr{A(G) \in \mathcal{O}} \leq \exp(\eta) \cdot \Pr{A(G) \in \mathcal{O}} + \delta.
\]

One way to design a differentially private algorithm for releasing statistics of a matching is to first compute a matching $M$ using \Cref{thm:bounded-degree-algorithm}, then compute a statistic of interest on $M$, and finally add Laplace noise calibrated to its sensitivity.  
To guarantee differential privacy in this way, we need the output of the algorithms to be stable not only in expectation but also with a high probability.
We can show that the algorithm presented in \Cref{thm:bounded-degree-algorithm} has such a guarantee (the proof is postponed to \Cref{sec:dp}):
\begin{lemma}\label{lem:DP}
    Let $0<\eps<1$ and let $G, G'$ be graphs different on only one edge with maximum degree $\Delta$.
    Denote by $\mathcal A_{\eps}$ the algorithm presented in \Cref{thm:bounded-degree-algorithm} with parameter $\eps$. 
    Then for any $c>0$, there exists a coupling between the output distributions $\mathcal A_\eps(G)$ and $\mathcal A_\eps(G')$ such that, with probability at least $1-n^{-c}$, the Hamming distance between their outputs is bounded by $c\Delta^{O(1/\eps)} \log n$.
\end{lemma}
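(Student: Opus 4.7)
The plan is to lift the expected Hamming-distance bound of $\Delta^{O(1/\varepsilon)}$ implicit in \Cref{thm:bounded-degree-algorithm} to a high-probability bound of $c\Delta^{O(1/\varepsilon)}\log n$ by exploiting the fact that the recursion $\kappa_{s,\Delta}\le\alpha(\kappa_{s-1,\Delta}+1)$ with $\alpha = \lambda\Delta/(1+\lambda\Delta)$ used to prove $\kappa_{s,\Delta} = O(\lambda\Delta)$ is not merely an inequality on expectations but arises from an explicit coupling. In this coupling, at each level of recursion, with probability at most $\alpha$ a further incident edge must be pinned (contributing $O(1)$ to the Hamming distance and descending to a smaller subproblem), while with probability at least $1-\alpha$ the coupling terminates. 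Consequently the Hamming distance under the coupling is stochastically dominated by (a constant times) a geometric random variable, which admits sharp tail bounds.

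Concretely, I would first make this coupling explicit: given $G$ and $G'=G-e$, couple $M\sim\mu_{E;G}$ and $M'\sim\mu_{E;G'}$ by conditioning jointly on the status of $e$ and then iteratively descending via the pinning-edge-deletion equivalence, at each step choosing one edge $j$ in $N(i)$ of the currently active vertex $i$ and using an optimal coupling of its marginal status under the two residual Gibbs distributions. Second, I would show that the depth $T$ of this recursion is stochastically dominated by a geometric random variable with success probability $1-\alpha \ge 1/(1+\lambda\Delta)$; hence $\Pr[T \ge t] \le \alpha^t$, and setting $t = Cc(1+\lambda\Delta)\log n$ for a sufficiently large absolute constant $C$ (using $\log(1/\alpha)\ge\Omega(1/(1+\lambda\Delta))$) gives $\Pr[T \ge t] \le n^{-c}$. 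Third, since each level contributes at most $O(\Delta)$ units of Hamming distance (the pinned edge together with the neighbors already coupled), the total Hamming distance on the event $\{T<t\}$ is at most $O(\Delta\cdot \lambda\Delta\log n\cdot c) = c\cdot \Delta^{O(1/\varepsilon)}\log n$.

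Finally, I would pass from the idealized Gibbs distribution to the actual output of $\mathcal{A}_\varepsilon$, which is only $\delta$-close in total variation. By running the Glauber dynamics with $\delta \le n^{-c-1}$ on both $G$ and $G'$ — which costs only a polylogarithmic overhead absorbed into the running time of \Cref{thm:bounded-degree-algorithm} — a standard coupling between the algorithmic output and the exact Gibbs sample fails with probability at most $n^{-c-1}$ on each side, so a union bound yields the claimed overall failure probability $n^{-c}$.

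The main obstacle is verifying that the recursion underlying $\kappa_{s,\Delta} = O(\lambda\Delta)$ can be executed as a genuine random walk rather than a branching process: that is, at each level the coupling should commit to a single neighbor in $N(i)$, and the apparent ``sum over $N(i)$'' in the law-of-total-probability expansion should correspond to the random choice of which neighbor is activated. If the natural recursion actually branches, one would instead need to bound the total progeny of a subcritical branching process, which still has an exponential tail but via a slightly more delicate argument (e.g., a direct generating-function estimate or the Dwass identity); either way the same final bound $c\Delta^{O(1/\varepsilon)}\log n$ should fall out once the per-level disagreement cost is absorbed into the $\Delta^{O(1/\varepsilon)}$ prefactor.
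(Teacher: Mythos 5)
Your proposal is essentially the same argument as the paper's (terse) proof: the paper likewise observes that the recursive coupling underlying $\kappa^E_{s,\Delta,\lambda}\le \alpha(\kappa^E_{s-1,\Delta,\lambda}+1)$ with $\alpha=\lambda\Delta/(1+\lambda\Delta)$ is a geometric-type process, so $\Pr[\text{discrepancy}\ge t]\le\alpha^t$, and setting $t = c\log n / \log(1+1/(\lambda\Delta)) = O(c\lambda\Delta\log n) = c\Delta^{O(1/\eps)}\log n$ gives failure probability $n^{-c}$. Two small remarks. First, your worry about walk-versus-branching resolves in your favor: the expansion in \Cref{lem:kappa-recursion} is a law-of-total-probability decomposition over which single edge of $E(v)$ (if any) is matched in $\mm$, so the coupling descends along exactly one neighbor per level and the per-level Hamming cost is $1$ (see \Cref{lem:to-kappa-edge}), not $O(\Delta)$; your looser per-level bound is absorbed into $\Delta^{O(1/\eps)}$ but is not needed. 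Second, the paper's one-paragraph proof tacitly couples the exact Gibbs samples and does not address the total-variation error incurred by running the Glauber dynamics (the algorithm in \Cref{thm:bounded-degree-algorithm} uses a fixed $\delta=1/n^2$, which does not suffice for arbitrary $c$); your final step of tightening $\delta$ to $n^{-c-1}$ and taking a union bound is therefore a genuine addition that makes the claim hold for the actual algorithm $\mathcal A_\eps$, and should be considered part of a complete proof.
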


For instance, this approach can be used to design an $(\eta,n^{-c})$-differentially private algorithm that releases the number of matched vertices with a given attribute with additive error of $c\eta^{-1}\Delta^{O(1/\varepsilon)}\log n$ for graphs of maximum degree $\Delta$.

We mention that it is known that outputting any subgraph, including a matching, is impossible under standard differential privacy. 
Consequently, previous works have explored relaxed notions, such as joint differential privacy~\cite{hsu2014private}, or the release of an implicit representation of a matching, from which each node can locally recover its matched edge using its own private information~\cite{dinitz2025differentially}.

\paragraph{Non-Signaling Model}

Although sensitivity is an intriguing concept on its own, it is tightly linked to the \emph{non-signaling} model \cite{akbari2025online,arfaoui2014can,coiteux2024no,gavoille2009can}.
In this framework, for a graph $G=(V,E)$, an algorithm may output any probability distribution, provided it respects the non-signaling principle: for every vertex set $S\subseteq V$, changes made to the input graph at distance $t+1$ or more from $S$ must leave the output distribution on $S$ unchanged.
The integer $t$ is called the model’s \emph{locality}. 
Because it imposes only this non-signaling constraint, the model is strictly more powerful than any ``physical'' synchronous distributed model with the same locality: it subsumes the classical $\mathsf{LOCAL}$ model \cite{linial1992locality}, the quantum-$\mathsf{LOCAL}$ model \cite{gavoille2009can}, and the bounded-dependence model \cite{holroyd2017finitary}, among others \cite{akbari2025online}.
It is recently observed in \cite{fleming2024sensitivity} that, if there is a distribution in the non-signaling model with locality $t$ on graphs with maximum degree at most $\Delta$, then we can use it to design an algorithm sensitivity $\Delta^{O(t)}$.

Recent work has shown that the KMW lower bounds \cite{coupette2021breezing,kuhn2016local}, originally established for the $\mathsf{LOCAL}$ model, also hold in the non-signaling model \cite{balliu2025new}.\footnote{They showed this fact for fractional problems, but it holds for constant-factor approximation to the maximum matching problem because the problem has an integrality gap of two, which is constant.}
Consequently, any non-signaling distribution that achieves a $(1/\mathrm{polylog} \Delta)$-approximation for the maximum matching problem must have locality $\Omega (\log \Delta / \log\log \Delta)$.
Even if such an optimal-locality distribution existed, the general correspondence between the non-signaling model and sensitivity \cite{fleming2024sensitivity} would still give only the sensitivity bound $\Delta^{O(\log \Delta / \log\log \Delta)}$ for $(1/\mathrm{polylog}\Delta)$-approximation.
By contrast, \Cref{thm:bounded-degree-algorithm} gives sensitivity $\Delta^{O(1/\varepsilon)}$ for $(1-\varepsilon)$-approximation, which implies that the Gibbs distribution is strictly more powerful than any non-signaling distribution for the maximum matching problem in terms of the trade-off between approximation ratio and sensitivity.

\subsection{Organization}
In \Cref{sec:prelim}, we introduce the key notions and basic properties used throughout the paper.
In \Cref{sec:sensitivity}, we present algorithms for bounded-degree graphs and prove \Cref{thm:bounded-degree-algorithm}.
In \Cref{sec:fast-algorithms}, we develop faster algorithms for planar and bipartite graphs, establishing \Cref{thm:planar-algorithm,thm:bipartite-algorithm}.
In \Cref{sec:sparsification}, we address general graphs and prove \Cref{thm:unbounded-degree-algorithm}.
Finally, in \Cref{sec:lower-bound}, we show that the parameter $\lambda$ in the Gibbs distribution must be at least $2^{\Omega(1/\varepsilon)}$ to achieve a $(1 - \varepsilon)$-approximation.

\section{Preliminaries}\label{sec:prelim}
All the proofs in this section are postponed to \Cref{sec:prelim-proof}.
\subsection{Graph Theory}
For a graph $G=(V,E)$, a matching is a set of edges in $E$ such that no two edges share a common vertex.
We denote the set of all matchings in $G$ by $\mathcal M(G)$.
Let $\nu(G)$ denote the maximum matching size of $G$.
For a vertex $v\in V$, we denote its incident edges in $G$ by $E_G(v)$.
We omit the subscript if it is clear from the context.
For a vertex set $S\subseteq V$, let $G[S]$ denote the subgraph induced by $S$.
For a set of edges $F\subseteq E$, let $V(F) \subseteq V$ denote the set of all vertices incident to at least one edge in $F$.
We denote by $G[F]$ the subgraph of $G$ consisting of $F$ and vertices incident to $F$, i.e., $G[F]=(V(F),F)$.
We denote by $G\setminus F$ the subgraph obtained by deleting $F$ from $G$.
When $F = \set e$, $G\setminus F$ is also denoted by $G - e$.
Let $\Delta(v)$ denote the degree of vertex $v \in V$.
We say an edge $e$ in a graph $G$ is \emph{pendant} if one of its endpoints is of degree $1$.

\subsection{Sensitivity}
In this section, we formally define the sensitivity of a graph algorithm.
\begin{definition}[Coupling between distributions]
    For two distributions $\mu, \mu'$ on the same probability space $\Omega$, we say a distribution $\pi$ on $\Omega\times\Omega$ is a \emph{coupling} between $\mu$ and $\mu'$ if the marginal distribution of $\pi$ on the first and second coordinates are $\mu$ and $\mu'$, respectively.
    We denote the set of all couplings between $\mu$ and $\mu'$ by $\Pi(\mu, \mu')$.
\end{definition}

\begin{definition}[Hamming distances on matchings]
    For a graph $G=(V, E)$, we define two metrics on $2^E$:
    \[
    d_V(F_1, F_2)\defeq |V(F_1)\mathbin \triangle V(F_2)|,
    \quad
    d_E(F_1, F_2)\defeq |F_1\mathbin \triangle F_2|,
    \]
    for any $F_1, F_2\in 2^E$,
    where $\triangle$ denotes the symmetric difference.
\end{definition}

\begin{definition}[Wasserstein distance]
    For two distributions $\mu, \mu'$ on the same probability space $\Omega$ with metric $d$, the ($1$-)Wasserstein distance is defined by
    \[
        \Wd{\mu, \mu'}\defeq \inf_{\pi\in\Pi(\mu, \mu')}\E[(x, y)\sim \pi]{d(x, y)}.
    \]
    When $\Omega = \mathcal M(G)$ for a graph $G=(V, E)$, we denote
    $\Wd[d_E]{\cdot,\cdot}$ and $\Wd[d_V]{\cdot,\cdot}$ by $\We{\cdot,\cdot}$ and $\Wv{\cdot,\cdot}$, respectively.
    We omit the superscript when the metric is clear from the context.
\end{definition}

\begin{definition}[Sensitivity]
    For a (randomized) algorithm $A$ that accepts a graph as an input
    and outputs an edge set, we define the sensitivity of $A$ on a graph $G$ by
    \[  
        \max_{e\in E(G)} \We{A(G), A(G-e)}.
    \]
\end{definition}

\subsection{Wasserstein Distance}
We regard a distribution $\mu$ on a finite set $\Omega$ as a function from $\Omega$ to $\mathbb R$ whose function value on $x$ is $\mu(x)$.
Hence, we can perform addition and scalar multiplication on distributions on a common space (though the resulting function may not represent a distribution).
The following lemma plays a vital role in this work, whose proof can be found in e.g., Theorem~1.14 in~\cite{villani2021topics}.
\begin{lemma}[Kantrovich-Rubinstein duality theorem]\label{lem:duality}
    For any two distributions $\pi, \pi'$ on a metric space $\Omega$, we have
    \[
    \W{\pi, \pi'} = \sup_{f\in L^1(\Omega)} \inner{f}{\pi-\pi'} = \sup_{f\in L^1(\Omega)}\sum_{x\in\Omega}f(x)(\pi(x)-\pi'(x)),
    \]
    where $L^1(\Omega)\defeq \set{f\in\mathbb R^{\Omega}\mid \forall x, y\in \Omega : |f(x)-f(y)| \le d(x, y)}$ denotes the set of $1$-Lipschitz functions. 
\end{lemma}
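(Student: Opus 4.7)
The plan is to cast the claim as finite-dimensional LP duality, which is legitimate since every distribution considered in the paper lives on a finite set $\Omega$. First I would rewrite $\W{\pi,\pi'}$ as the value of the transportation linear program: minimize $\sum_{x,y\in\Omega} d(x,y)\gamma(x,y)$ over $\gamma \geq 0$ subject to $\sum_y \gamma(x,y) = \pi(x)$ for every $x$ and $\sum_x \gamma(x,y) = \pi'(y)$ for every $y$. This LP is feasible (take the product coupling $\gamma = \pi\otimes\pi'$) and bounded since $d$ takes values on a finite set, so strong LP duality applies.

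Next I would compute the LP dual. Attaching multipliers $\phi(x)$ and $\psi(y)$ to the two families of marginal constraints gives the dual of maximizing $\sum_x \phi(x)\pi(x) + \sum_y \psi(y)\pi'(y)$ over all $\phi,\psi:\Omega\to\mathbb{R}$ satisfying $\phi(x) + \psi(y) \leq d(x,y)$ for all $x,y\in\Omega$. Strong duality yields $\W{\pi,\pi'}$ equal to this dual optimum. It now remains only to show that the dual optimum is achieved by a pair of the form $(f,-f)$ with $f\in L^1(\Omega)$.

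The final step is a c-transform argument. Given any feasible $(\phi,\psi)$, I would sharpen $\psi$ to $\psi^{\star}(y) := \inf_x (d(x,y) - \phi(x))$ and then sharpen $\phi$ to $\phi^{\star\star}(x) := \inf_y (d(x,y) - \psi^{\star}(y))$; both substitutions preserve feasibility and cannot decrease the objective, because $\psi^{\star} \geq \psi$ and $\phi^{\star\star} \geq \phi$ pointwise while $\pi,\pi' \geq 0$. A direct application of the triangle inequality shows $\phi^{\star\star}$ is 1-Lipschitz, and combining this with $d(y,y)=0$ forces $\psi^{\star}(y) = -\phi^{\star\star}(y)$. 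Thus the dual optimum is attained at a pair $(f,-f)$ with $f\in L^1(\Omega)$, whose objective evaluates to exactly $\inner{f}{\pi-\pi'}$, proving $\W{\pi,\pi'}\leq \sup_{f\in L^1(\Omega)}\inner{f}{\pi-\pi'}$. The reverse inequality is immediate: any 1-Lipschitz $f$ makes $(f,-f)$ dual-feasible with objective $\inner{f}{\pi-\pi'}$, so the supremum is a lower bound on the dual optimum.

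I expect the c-transform step to be the only subtle part, since one must check simultaneously that each infimum-pass preserves or improves the dual value \emph{and} produces a 1-Lipschitz output; everything else reduces to invoking strong LP duality and routine bookkeeping.
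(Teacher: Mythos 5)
The paper does not prove this lemma; it simply cites Theorem~1.14 of Villani's \emph{Topics in Optimal Transportation} and moves on, so there is no in-paper argument for you to match against. What you have written is a self-contained proof of the finite-space Kantorovich--Rubinstein duality via the transportation LP and a $c$-transform argument, which is the standard textbook route. The LP formulation, feasibility of the product coupling, the form of the dual, and the two $\inf$-passes are all correct, and you are right that the $c$-transform step is the only delicate part. One small precision issue: after forming $(\phi^{\star\star},\psi^{\star})$, what feasibility at $(y,y)$ directly gives you is only $\psi^{\star}(y)\le -\phi^{\star\star}(y)$, not equality; literal equality $\psi^{\star}=(\phi^{\star\star})^c=-\phi^{\star\star}$ does hold but requires the additional (easily checked, but unstated) idempotency $\inf_x\bigl(d(x,y)-\phi^{\star\star}(x)\bigr)=\psi^{\star}(y)$. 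The cleaner and gap-free way to finish is simply to observe that replacing $\psi^{\star}$ by $-\phi^{\star\star}$ cannot decrease the objective (since $-\phi^{\star\star}\ge\psi^{\star}$ and $\pi'\ge 0$) and preserves feasibility precisely because $\phi^{\star\star}$ is $1$-Lipschitz. With that one-line adjustment the proof is complete and correct.
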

\begin{lemma}\label{lem:wd-sum}
    Let $\mu, \mu'$ be two distributions on the same finite metric space $\Omega$.
    Suppose that there exist two sets $\set{\mu_i}_{1\le i\le n}$ and $\set{\mu'_i}_{1\le i\le n}$ of distributions on $\Omega$ and a set $\set{\lambda_i}_{1\le i \le n}$
    of non-negative real numbers such that
    \[
    \mu - \mu' = \sum_{i=1}^n \lambda_i(\mu_i - \mu'_i).
    \]
    Then, we have
    \[
    \W{\mu, \mu'} \le \sum_{i=1}^n \lambda_i\W{\mu_i, \mu'_i}.
    \]
\end{lemma}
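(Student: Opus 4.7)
The plan is to deduce the lemma directly from the Kantorovich--Rubinstein duality (\Cref{lem:duality}) by testing the signed measure $\mu-\mu'$ against an arbitrary $1$-Lipschitz function and then decomposing via the given identity. Since $W_1$ is defined as a supremum of linear functionals, and the hypothesis expresses $\mu-\mu'$ as a non-negative combination of analogous signed measures $\mu_i-\mu_i'$, linearity of the pairing $\inner{f}{\cdot}$ will immediately yield the bound.

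Concretely, I would proceed as follows. Fix an arbitrary $f\in L^1(\Omega)$. First, use linearity of the inner product together with the decomposition $\mu-\mu'=\sum_{i=1}^n \lambda_i(\mu_i-\mu_i')$ to write
\[
\inner{f}{\mu-\mu'} \;=\; \sum_{i=1}^n \lambda_i\, \inner{f}{\mu_i-\mu_i'}.
\]
Next, for each $i$, apply \Cref{lem:duality} to the pair $(\mu_i,\mu_i')$ to obtain $\inner{f}{\mu_i-\mu_i'} \le \W{\mu_i,\mu_i'}$. Because every $\lambda_i\ge 0$, these inequalities may be combined without flipping signs, giving
\[
\inner{f}{\mu-\mu'} \;\le\; \sum_{i=1}^n \lambda_i \W{\mu_i,\mu_i'}.
\]
Finally, take the supremum over $f\in L^1(\Omega)$ and invoke \Cref{lem:duality} once more, this time for the pair $(\mu,\mu')$, to conclude $\W{\mu,\mu'}\le \sum_{i=1}^n \lambda_i \W{\mu_i,\mu_i'}$.

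Honestly, there is no real obstacle here: the statement is a soft consequence of the fact that $W_1$ has a dual representation as a supremum of a \emph{linear} functional of the signed measure, so any convex (indeed, any non-negative linear) decomposition of the signed measure automatically gives a subadditive bound. The only mild point to keep in mind is that the non-negativity of the $\lambda_i$ is essential for preserving the inequality under summation; negative coefficients would require bounding $|\inner{f}{\mu_i-\mu_i'}|$ instead, which is also permitted by duality (since $-f$ is $1$-Lipschitz whenever $f$ is), but is not needed under the stated hypothesis.
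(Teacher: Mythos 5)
Your proof is correct and is essentially identical to the paper's: both apply the Kantorovich--Rubinstein duality (\Cref{lem:duality}), use linearity of $\inner{f}{\cdot}$ to decompose $\inner{f}{\mu-\mu'}$, and exploit the non-negativity of the $\lambda_i$ to bound the supremum of the sum by the sum of suprema. The paper simply phrases the argument starting from $\W{\mu,\mu'}=\sup_f\inner{f}{\mu-\mu'}$ and bounding the supremum of a sum by a sum of suprema, whereas you fix $f$ first and take the supremum at the end; these are the same argument.
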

One immediate corollary is the triangle inequality:
\begin{corollary}[Triangle inequality]\label{lem:triangle}
    Let $\mu_1, \ldots,\mu_n$ be distributions on a common finite metric space $\Omega$,
    We have
    \[
    \W{\mu_1, \mu_n} \le \sum_{i=1}^{n-1} \W{\mu_i, \mu_{i+1}}.
    \]
\end{corollary}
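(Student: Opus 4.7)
The plan is to derive this triangle inequality as a direct corollary of \Cref{lem:wd-sum}, precisely as the paper's phrasing ``one immediate corollary'' suggests. The key observation is the telescoping identity
\[
\mu_1 - \mu_n = \sum_{i=1}^{n-1}(\mu_i - \mu_{i+1}),
\]
which holds pointwise on $\Omega$ because all intermediate terms cancel in pairs.

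With this identity in hand, I would invoke \Cref{lem:wd-sum} by setting $\lambda_i = 1$ for every $1 \le i \le n-1$, and taking the two families of distributions in the hypothesis of that lemma to be $\{\mu_i\}_{1 \le i \le n-1}$ and $\{\mu_{i+1}\}_{1 \le i \le n-1}$. All coefficients $\lambda_i$ are non-negative, each $\mu_i$ is a distribution on the common space $\Omega$ by assumption, and the displayed identity matches the required decomposition of $\mu_1 - \mu_n$. \Cref{lem:wd-sum} then yields
\[
\W{\mu_1, \mu_n} \le \sum_{i=1}^{n-1} \lambda_i \W{\mu_i, \mu_{i+1}} = \sum_{i=1}^{n-1}\W{\mu_i, \mu_{i+1}},
\]
which is exactly the stated inequality.

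I do not anticipate any real obstacle: both the telescoping decomposition and the application of \Cref{lem:wd-sum} are purely formal. The only minor bookkeeping is to confirm that the $\mu_i$'s really do live on a common finite metric space, so that Wasserstein distances between consecutive pairs are well-defined and \Cref{lem:wd-sum} applies verbatim; this is granted by the hypothesis of the corollary.
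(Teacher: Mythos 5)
Your proof is correct and matches the paper's argument exactly: the paper also applies \Cref{lem:wd-sum} to the telescoping decomposition $\mu_1 - \mu_n = \sum_{i=1}^{n-1}(\mu_i - \mu_{i+1})$ with unit coefficients. Nothing to add.
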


\subsection{Gibbs Distributions}
We recall the definition of the Gibbs distribution for matchings, a special case of the monomer-dimer model, and derive its basic properties.

\begin{definition}[Gibbs distribution on matchings]
    Let $G=(V,E)$ be a graph.
    For a parameter $\lambda > 0$, the Gibbs distribution $\mu_{E;\lambda,G}$ over matchings in $G$ is defined as 
    \[
        \mu_{E; \lambda, G}(M) \defeq \frac1Z \lambda^{|M|}
    \]
    for every matching $M \in \mathcal M(G)$, where $Z\defeq\sum_{M\in\mathcal M(G)}\lambda^{|M|}$ is the partition function.
    When $\lambda$ is clear from the context, we sometimes omit $\lambda$ and 
    simply write $\mu_{E;G}$.
\end{definition}

\begin{definition}[Marginal and conditional distribution]\label{def:marginal-conditional}
    The marginal distribution of $\mu_{E;\lambda,G}$ on a subset $F \subseteq E$, denoted $\mu_{F;\lambda, G}$, is defined as a distribution over $\mathcal M(G[F])$ such that
    \[
    \mu_{F;\lambda, G}(N) \defeq \frac1Z
    \sum_{\substack{M\in\mathcal M(G)\\M\cap F= N}}
    \lambda^{|M|}
    \]
    for every matching $N \in \mathcal M(G[F])$, where $Z\defeq\sum_{M\in\mathcal M(G)}\lambda^{|M|}$ is the same partition function as $\mu_{E;\lambda,G}$.
    
    We denote the events that an $e\in E$ is matched and is not matched by $e\gets +$ and $e\gets -$, respectively, which are  referred to as \emph{pinnings}.
    With a bit abuse of notation, for a subset $F\in E$, we denote by $F\gets -$ (resp., $F \not  \gets -)$ the event that none (resp., some) of the edges in $F$ is matched.
    For a vertex $v \in V$, we denote by $v \gets +$ (resp., $v \gets -)$ the event that some (resp., none) of the edges incident to $v$ is matched. 
    
    We write the event that a conditional distribution is conditioned on as
    a superscript. For example $\mu_{E;\lambda,G}^{e\gets +}$ is the distribution
    conditioned on $e$ is matched. When we condition on multiple pinnings
    at the same time, we simply stack the superscripts up.
\end{definition}
One basic property of the Gibbs distribution is the following lemma that asserts the equivalence between graph modification and pinning.
For an edge $i \in E$ (we often use $i$ and $j$ to denote edges when analyzing properties of the Gibbs distribution), let $N(i)$ denote the set of edges that are incident to $i$, excluding $i$ itself.
\renewcommand{\mp}[3]{\ifblank{#3}{\mu_{#1; \lambda, #2}^{\tau }}{\mu_{#1; \lambda, #2}^{\tau, #3} }}
\newcommand{\mt}[2]{\ifblank{#2}{ \mathcal M^{\tau} (#1) }{ \mathcal M^{\tau, #2} (#1) }}
\begin{lemma}\label{lem:pinning-deletion}
    Let $G=(V, E)$ be a graph, $i\in E$ be an edge, $\lambda > 0$,
    $E' \subseteq E$ and $F\subseteq N(i)$. Then
    \[
        \mp {E'} G {i\gets -} = 
        \mp {E'-i} {G-i} {},
        \quad
        \mp {E - i} G {i\gets +} = 
        \mp {E\setminus F - i} {G\setminus F} {i\gets +},
        \quad
        \mp {E - i} G {i\gets +} = 
        \mp {E\setminus N(i) - i} {G\setminus N(i)-i} {}.
    \]
\end{lemma}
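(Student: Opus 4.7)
The plan is to prove each of the three equalities by unfolding the definition of $\mu_{F;\lambda,G}(N)$ and checking that, once the conditioning is absorbed, both sides sum over the same underlying matchings with the same Gibbs weights (up to a common normalizing constant). The unifying principle is that pinning an edge $i$ to $-$ is equivalent to deleting $i$ from $G$, while pinning $i$ to $+$ simultaneously forces $i$ into the matching and forbids every edge in $N(i)$; hence any subset of $N(i)$ may be freely deleted from the graph without changing the distribution.

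For the first identity, I would expand $\mu_{E';\lambda,G}^{i \gets -}(N)$ as a sum over matchings $M \in \mathcal{M}(G)$ satisfying $i \notin M$ and $M \cap E' = N$. The conjunction ``$M \in \mathcal{M}(G)$ and $i \notin M$'' is equivalent to $M \in \mathcal{M}(G - i)$, and under $i \notin M$ the condition $M \cap E' = N$ collapses to $M \cap (E' - i) = N$. Since $\lambda^{|M|}$ is unchanged by this rewriting, the sum matches the one defining $\mu_{E' - i;\lambda, G - i}(N)$, and the normalization constants agree.

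For the second identity, the conditioning $i \gets +$ forces $i \in M$ and hence $M \cap N(i) = \emptyset$, so no matching in the support uses any edge of $F \subseteq N(i)$. The identity map $M \mapsto M$ therefore puts matchings of $G$ in the support of the LHS in bijection with matchings of $G \setminus F$ in the support of the RHS, preserving $|M|$; on this support, intersecting with $E - i$ and intersecting with $E \setminus F - i$ give the same answer, because the $F$-component is always empty. The third identity follows by applying the second identity with $F = N(i)$ to reach a graph in which $i$ has no incident edges besides itself; in that graph, pinning $i \gets +$ and marginalizing on $E \setminus N(i) - i$ amounts to the bijection $M \leftrightarrow M \setminus \{i\}$ between matchings of $G \setminus N(i)$ containing $i$ and matchings of $G \setminus N(i) - i$, and the uniform factor of $\lambda$ is absorbed into the normalization.

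The only genuine obstacle is bookkeeping: one must keep straight three moving pieces --- the ambient graph, the marginal edge set, and the active pinnings --- and verify that both sides are distributions on supports that can be naturally identified. Once this setup is in place, the combinatorial content is routine and requires no inequalities or nontrivial estimates.
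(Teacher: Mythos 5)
Your proposal is correct and follows essentially the same strategy as the paper: unfold the definition of the marginal/conditional Gibbs distribution, then exhibit a weight-preserving bijection between the relevant sets of matchings (in particular, using that $i\gets -$ is the same as deleting $i$, and that $i\gets +$ already forbids every edge of $N(i)$, so deleting a subset $F\subseteq N(i)$ is vacuous). The only organizational difference is minor: for the second identity the paper rewrites $\mp{E-i}{G}{i\gets +}$ as $\mp{E-i}{G}{i\gets +, F\gets -}$ and then applies the first identity edge-by-edge over $F$, whereas you argue directly via the identity bijection on the common support; and for the third identity you route through the second with $F=N(i)$ and then peel off $i$, while the paper exhibits the single bijection $N\cup\{i\}\leftrightarrow N$ in one step. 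Both routes are valid and of the same difficulty; no gap.
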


Since for a vertex $v \in V$ there is at most one edge matched in $E(v)$,
we can decompose the Gibbs distribution according to the 
matched edges in $E(v)$, resulting in the following lemma.
\begin{lemma}\label{lem:expansion}
    Let $G = (V, E)$ be a graph, $v\in V$ be a vertex, and $\lambda>0$ be the parameter of the Gibbs distributions.
    Then, we have
    \[
        \mu_{E;G} =
        \sum_{i\in E(v)}\mu_{E;G}(i\gets+)\mu_{E; G}^{i\gets +} + \mu_{E;G}(E(v)\gets-)\mu_{E;G}^{E(v)\gets -}.
    \]
\end{lemma}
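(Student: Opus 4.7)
The plan is to prove the identity pointwise on each $M \in \mathcal{M}(G)$ via the law of total probability. The key combinatorial observation is that the events $\{i \gets +\}_{i \in E(v)}$ together with $\{E(v) \gets -\}$ partition $\mathcal{M}(G)$: because any $M \in \mathcal{M}(G)$ is a matching, at most one edge of $E(v)$ can lie in $M$, so the events $\{i \gets +\}$ are pairwise disjoint, and their union's complement is precisely $\{E(v) \gets -\}$.

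First I would unfold the definition of conditioning: for any pinning $\tau$ with $\mu_{E;G}(\tau) > 0$, Definition~\ref{def:marginal-conditional} yields the identity $\mu_{E;G}(\tau) \cdot \mu_{E;G}^{\tau}(M) = \mu_{E;G}(M)$ when $M$ is consistent with $\tau$, and the product is zero otherwise. Applying this termwise to the right-hand side of the claimed equation and fixing an arbitrary $M \in \mathcal{M}(G)$, exactly one summand is nonzero: the summand indexed by the unique edge $i \in M \cap E(v)$ if such an edge exists, or the $E(v) \gets -$ summand otherwise. In either case, that unique nonzero summand equals $\mu_{E;G}(M)$, which matches the left-hand side, so the identity of distributions follows.

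The only subtlety is the degenerate case where some prefactor $\mu_{E;G}(i \gets +)$ or $\mu_{E;G}(E(v) \gets -)$ is zero, in which case the corresponding conditional distribution $\mu_{E;G}^{\tau}$ is undefined; but under the standard convention $0 \cdot \mu_{E;G}^{\tau} = 0$, the term simply drops out and the identity remains valid. There is no real obstacle here, as the lemma is in essence a restatement of the law of total probability for the finite partition of $\mathcal{M}(G)$ induced by the status of the vertex $v$ in a matching.
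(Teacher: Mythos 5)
Your proof is correct and takes essentially the same approach as the paper's: the paper likewise invokes the law of total probability for the partition of $\mathcal{M}(G)$ by the matched status at $v$ (the events $\{i \gets +\}_{i\in E(v)}$ together with $\{E(v)\gets -\}$ being mutually exclusive and exhaustive). Your version simply spells out the pointwise verification and the $0\cdot\mu^{\tau}=0$ convention that the paper leaves implicit.
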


\subsection{Glauber dynamics}
To sample from the Gibbs distribution, a natural approach is to use a Markov chain whose stationary distribution coincides with the target distribution. 
In this context, one of the most widely studied dynamics is the Glauber dynamics, a local Markov chain that updates the configuration by modifying a small part at each step:
\begin{definition}\label{def:glauber-dynamics}
    Let $\mu$ be a distribution over $2^\Omega$ for some finite set $\Omega$ and $1\le k\le |\Omega|$, we define the $k$-site Glauber dynamics $\mathcal P$ of $\mu$ as a Markov chain defined on the support of $\mu$ as follows.
    Let $\binom{\Omega}{k}$ denote the family of subsets of size $k$.
    For $A\in \binom \Omega k$ and $x\in 2^\Omega$, define $B_A(x) \defeq \set{y\in 2^\Omega\mid x\mathbin \triangle y\subseteq A}$.
    Then for any $x, y$ with non-zero probability we define
    \begin{align*}
        \mathcal P(x\to y)\defeq
           \binom{|\Omega|}{k}^{-1} 
           \sum_{A\in \binom\Omega k : y\in B_A(x)}
           \frac{\mu(y)}{\sum_{z\in B_A(x)} \mu(z)}.
    \end{align*}
\end{definition}
Intuitively, each step of $P$ with the current state $x$ does the following:
\begin{enumerate}
    \item uniformly sample a size-$k$ subset $A$ of $\Omega$;
    \item transit to the next state $y$ with probability proportional to $\mu(y)$ under the condition that $x\mathbin \triangle y\subseteq A$.
\end{enumerate}
Notice that if $\mathcal P$ is the $k$-site Glauber dynamics of $\mu$, then $\mu(x) \mathcal P(x\to y) = \mu(y)\mathcal P(y\to x)$ for all $x, y\in 2^\Omega$, and hence if $\mathcal P$ is irreducible, then $\mu$ is its unique stationary distribution, and we have $\lim_{t\to\infty}\mu_0\mathcal P^t = \mu$ for any initial distribution $\mu_0$ on the support of $\mu$.
What we care about is the speed of convergence.
Letting $\delta > 0$, we define the \emph{mixing time} of $\mathcal P$ as
\begin{align*}
    t_{\mathrm{mix}}(\mathcal P, \delta) \defeq
    \min
    \set{
    t\in \mathbb Z_{\ge 0}
    \mid
    \forall \mu_0 : d_{\mathrm{tv}}(\mu_0 \mathcal P^t, \mu)< \delta
    }.
\end{align*}

\section{Low-Sensitivity Algorithms for Bounded-Degree Graphs}\label{sec:sensitivity}
\newcommand{\m}{\mu_{E;\lambda, G}}
\renewcommand{\mp}{ \mu_{E-i; \lambda, G}^{\substack{            i\gets +}}}
\newcommand{\mm}{ \mu_{E-i; \lambda, G}^{\substack{            i\gets -}}}
\newcommand{\mmp}{\mu_{E-i; \lambda, G}^{\substack{j\gets + \\ i\gets -}}}
\newcommand{\mmm}{\mu_{E-i; \lambda, G}^{\substack{v\gets - \\ i\gets -}}}
\newcommand{\pmp}{\mu_{E-i; \lambda, G}^{\substack{            i\gets -}}(j\gets +)}
\newcommand{\pmm}{\mu_{E-i; \lambda, G}^{\substack{            i\gets -}}(v\gets -)}
\newcommand{\se}{\sum_{j\in E(v)}}
\newcommand{\ke}[1]{\kappa^E_{#1,\lambda,\Delta}}
\newcommand{\megp}[3]{\mu_{#1 ; \lambda, #2}^{ #3 }}
In this section, we prove \Cref{thm:bounded-degree-algorithm}.

We consider an (inefficient) algorithm that, given a graph $G=(V,E)$, simply outputs a matching $M \subseteq E$ sampled from $\mu_{E;\lambda,G}$ for $\lambda=\Delta^{\Theta(1/\varepsilon)}$.
We show that this algorithm has approximation ratio $1-\varepsilon$ in Section~\ref{subsec:matching-approximation}, and we analyze its sensitivity  in Section~\ref{subsec:matching-sensitivity}.
Then in Section~\ref{subsec:matching-polynomial-time}, we discuss a modification to the algorithm that enables it to run in polynomial time.

\subsection{Approximation Guarantee}\label{subsec:matching-approximation}
In this section, we show the following:
\begin{theorem}\label{thm:matching-approximation}
    We have 
    \[
        \E[M \sim \mu_\lambda]{|M|} \geq  \frac{(1-\varepsilon/2)\lambda}{\lambda + (4\Delta)^{2\varepsilon^{-1}}} \cdot \nu(G).
    \]
    In particular, for $\lambda \geq 2\varepsilon^{-1}(4\Delta)^{2\varepsilon^{-1}}$, we have
    \[
        \E[M \sim \mu_\lambda]{|M|} \geq \frac{1-\varepsilon/2}{1+\varepsilon/2} \cdot \nu(G) \geq (1 - \varepsilon) \nu(G).
    \]
\end{theorem}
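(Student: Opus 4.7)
The plan is to express the expected matching size analytically in terms of the roots of the matching generating polynomial and then use the Heilmann--Lieb theorem to show that only a small fraction of those roots can have large magnitude. Let $Z_G(\lambda) = \sum_{k=0}^{\nu} m_k \lambda^k$ be the partition function, with $\nu = \nu(G)$, factor it as $Z_G(\lambda) = m_\nu \prod_{i=1}^\nu (\lambda - \lambda_i)$, and take the logarithmic derivative:
\[
    \E[M\sim\mu_\lambda]{|M|} = \frac{\lambda Z_G'(\lambda)}{Z_G(\lambda)} = \sum_{i=1}^\nu \frac{\lambda}{\lambda - \lambda_i}.
\]
This reduces the problem to controlling the locations of the roots $\lambda_i$.

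The first ingredient is the Heilmann--Lieb theorem, which asserts that each $\lambda_i$ is real and negative with $|\lambda_i| \geq 1/(4(\Delta-1))$; this is the standard consequence of the root-location bound $[-2\sqrt{\Delta-1},\,2\sqrt{\Delta-1}]$ on the matching polynomial $\sum_k (-1)^k m_k x^{n-2k}$, via the substitution $\lambda = -1/x^2$. In particular $\lambda - \lambda_i = \lambda + |\lambda_i| > 0$, so each summand lies in $(0,1]$.

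The second ingredient is Vieta's identity $\prod_{i=1}^\nu |\lambda_i| = m_0/m_\nu = 1/m_\nu \leq 1$, which gives $\sum_i \log|\lambda_i| \leq 0$. Suppose $k$ of the roots satisfy $|\lambda_i| > L := (4\Delta)^{2/\varepsilon}$. Combining this with the lower bound $\log|\lambda_i| \geq -\log(4\Delta)$ from Heilmann--Lieb yields
\[
    0 \;\geq\; \sum_i \log|\lambda_i| \;>\; k\log L - (\nu-k)\log(4\Delta) \;=\; \bigl(2k/\varepsilon - (\nu-k)\bigr)\log(4\Delta),
\]
forcing $k \leq \varepsilon\nu/2$. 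Therefore at least $(1-\varepsilon/2)\nu$ roots have $|\lambda_i| \leq L$, each contributing at least $\lambda/(\lambda+L)$ to the sum; discarding the remaining nonnegative terms gives the first inequality of the theorem. The ``in particular'' statement then follows by plugging $\lambda \geq 2\varepsilon^{-1} L$ into $(1-\varepsilon/2)\lambda/(\lambda+L)$, which is at least $(1-\varepsilon/2)/(1+\varepsilon/2) \geq 1-\varepsilon$.

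I expect no serious obstacle here: the only nontrivial external ingredient is the Heilmann--Lieb bound on the roots of $Z_G$, and the remainder is an elementary pigeonhole on $\sum_i \log|\lambda_i|$ using $\prod_i |\lambda_i| = 1/m_\nu \leq 1$. The key insight is that Heilmann--Lieb pins each $\log|\lambda_i|$ away from $-\infty$, so the constraint that these logarithms sum to a nonpositive number forces almost all of them to remain small in absolute value.
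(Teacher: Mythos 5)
Your proof is correct and follows essentially the same route as the paper: express $\E{|M|}$ via the roots of $m_G$, use Heilmann--Lieb to lower-bound $|\lambda_i|$, and then pigeonhole on $\sum_i \log|\lambda_i| \leq 0$ (Vieta) to show at most an $\varepsilon/2$-fraction of roots exceed $(4\Delta)^{2/\varepsilon}$. The only cosmetic difference is that the paper states the root-count bound as a standalone lemma (its \Cref{lem:upper-bounds-of-roots-of-MGP}) and then invokes it with parameter $\varepsilon/2$, whereas you inline the pigeonhole directly with the $\varepsilon/2$ threshold.
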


Let $G=(V,E)$ be a graph on $n$ vertices with maximum degree $\Delta$.
Let $m_k$ be the number of matchings of size $k$ in $G$.
The \emph{matching polynomial} $M_G$ and \emph{matching generating polynomial} $m_G$ are defined as 
\begin{align*}
    M_G(x) :=\sum_{k\geq 0}(-1)^{k}m_k x^{n-2k} \quad  \text{and} \quad
    m_G(x) :=\sum_{k\geq 0}m_k x^k,
\end{align*}
respectively.
Note that $M_{G}(x)=x^n m_{G}(-x^{-2})$ holds and that $m_G(\lambda)$ coincides with the partition function of the Gibbs distribution.
We start with the following simple fact:
\begin{lemma}\label{lem:matching-size-root}
    For the maximum matching size $\nu := \nu(G)$, let $\lambda_1,\ldots,\lambda_\nu$ be the roots of $m_G$.
    Then, we have 
    \[
        \E[M \sim \mu_\lambda]{|M|} = \sum_{i=1}^\nu \frac{\lambda}{\lambda - \lambda_i}.
    \]
\end{lemma}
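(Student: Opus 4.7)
\medskip

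\noindent\textbf{Proof proposal.} The plan is a short direct calculation via the logarithmic derivative of $m_G$. First I would rewrite the expectation as
\[
    \E[M \sim \mu_\lambda]{|M|} = \frac{\sum_{M\in\mathcal M(G)} |M|\,\lambda^{|M|}}{\sum_{M\in\mathcal M(G)} \lambda^{|M|}} = \frac{\sum_{k\ge 0} k\,m_k\,\lambda^k}{\sum_{k\ge 0} m_k\,\lambda^k} = \frac{\lambda\,m_G'(\lambda)}{m_G(\lambda)},
\]
using that the numerator in the middle expression is precisely $\lambda$ times the derivative of the polynomial $m_G(x)=\sum_k m_k x^k$ evaluated at $\lambda$, and the denominator is the partition function $m_G(\lambda)$.

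Next I would exploit that $m_G$ has degree exactly $\nu = \nu(G)$, since by definition $m_\nu \ge 1$ and $m_k=0$ for $k>\nu$. Therefore $m_G$ factors over $\mathbb C$ as
\[
    m_G(x) = m_\nu \prod_{i=1}^{\nu} (x-\lambda_i),
\]
where $\lambda_1,\ldots,\lambda_\nu$ are precisely the roots listed in the statement (with multiplicity). Taking the logarithmic derivative gives
\[
    \frac{m_G'(x)}{m_G(x)} = \sum_{i=1}^{\nu} \frac{1}{x-\lambda_i},
\]
valid at $x=\lambda$ since $\lambda>0$ and all roots $\lambda_i$ are real and non-positive (by Heilmann--Lieb), so no denominator vanishes.

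Finally, multiplying both sides by $\lambda$ and combining with the first display yields
\[
    \E[M \sim \mu_\lambda]{|M|} = \lambda\cdot\frac{m_G'(\lambda)}{m_G(\lambda)} = \sum_{i=1}^{\nu} \frac{\lambda}{\lambda - \lambda_i},
\]
as desired. There is no real obstacle here; the only thing to be a little careful about is ensuring $\deg m_G = \nu$ (so that the product over roots has exactly $\nu$ factors and the leading constant $m_\nu$ cancels in the logarithmic derivative), which follows directly from the definition of $\nu(G)$.
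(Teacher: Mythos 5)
Your proof is correct and takes essentially the same route as the paper: compute $\E[M\sim\mu_\lambda]{|M|}=\lambda m_G'(\lambda)/m_G(\lambda)$, then expand the logarithmic derivative of the factored polynomial as $\sum_i 1/(\lambda-\lambda_i)$. The paper writes out $m_G'$ as a sum of products rather than invoking the logarithmic-derivative identity by name, but the calculation is the same.
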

\begin{proof}
    Let $m'_G(\lambda)$ be the derivative of $m_G$ at $\lambda$.
    Then, we have
    \[
        \E[M \sim \mu_\lambda]{|M|}
    = \frac{\sum_{k \geq 0}k m_k \lambda^k}{\sum_{k \geq 0}m_k \lambda^k}  
    = \frac{\lambda m'_G(\lambda)}{m_G(\lambda)} 
    = \frac{\lambda m_\nu \sum_{i=1}^\nu \prod_{j\neq i}(\lambda-\lambda_i) }{m_\nu \prod_{i=1}^\nu (\lambda-\lambda_i)}
    =
    \sum_{i=1}^\nu \frac{\lambda}{\lambda-\lambda_i}.
    \]
\end{proof}
Next, we show that the absolute values of the roots of the matching generating polynomial are bounded.
To this end, we use the following property of the matching polynomial.
\begin{lemma}[\cite{godsil1981matchings,heilmann1972theory}]
    The matching polynomial is real-rooted.
    Moreover, all the roots have absolute values at most $2\sqrt{\Delta-1}$.
\end{lemma}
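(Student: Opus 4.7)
The plan is to combine two classical ingredients: the Heilmann--Lieb three-term recurrence, which delivers real-rootedness via interlacing, together with Godsil's path-tree identity, which reduces the root-location problem to bounding the spectral radius of a tree.

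First I would establish the two basic matching-polynomial recurrences. For an edge $e = uv$, $M_G(x) = M_{G-e}(x) - M_{G-u-v}(x)$, and for a vertex $v$,
\[
    M_G(x) = x\, M_{G-v}(x) - \sum_{u \sim v} M_{G-u-v}(x).
\]
Using the second recurrence and induction on $|V(G)|$, I would prove simultaneously that (i) $M_G$ is real-rooted and (ii) for every $v$, $M_G$ and $M_{G-v}$ interlace strictly (in the sense that between any two consecutive roots of $M_G$ there is exactly one root of $M_{G-v}$). The inductive step uses the fact that, by the hypothesis, each $M_{G-u-v}$ interlaces $M_{G-v}$, and a nonnegative combination of polynomials that commonly interlace a fixed real-rooted polynomial remains real-rooted with the expected interlacing pattern.

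Second, to control the magnitudes of the roots, I would bring in Godsil's path tree $T(G, v)$, whose vertices are the simple paths in $G$ starting at $v$, with two paths adjacent when one extends the other by a single edge. The key identity is
\[
    \frac{M_G(x)}{M_{G-v}(x)} = \frac{M_{T(G,v)}(x)}{M_{T(G,v) - r}(x)},
\]
where $r$ denotes the root of the path tree. I would prove this by expanding both sides with the vertex recurrence at $v$ on the left and at $r$ on the right, noting that each summand $M_{G-u-v}(x)/M_{G-v}(x)$ corresponds bijectively to a summand coming from a child of $r$ in $T(G,v)$, and then recursing. Combined with the interlacing guarantee from the first step, this cross-multiplication shows that $M_G$ divides $M_{T(G,v)}$ in $\mathbb{R}[x]$, so it suffices to bound the roots of the matching polynomial of the tree $T := T(G,v)$.

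Third, since $T$ is a forest, a standard sign-cancellation argument shows that $M_T(x)$ coincides with $\det(xI - A(T))$, so its roots are the eigenvalues of the adjacency matrix $A(T)$. I would then observe that $T$ has maximum degree at most $\Delta$: the root has degree $\deg_G(v) \leq \Delta$, and any non-root vertex, corresponding to a path ending at some $u$, has at most $\deg_G(u) - 1 \leq \Delta - 1$ children plus one parent. Every finite tree of maximum degree $\leq \Delta$ embeds into the infinite $\Delta$-regular tree $\mathbb{T}_\Delta$, whose adjacency operator has spectral radius exactly $2\sqrt{\Delta - 1}$ (computable explicitly via spherical functions or by recognizing $A(\mathbb{T}_\Delta)$ as a sum of creation/annihilation-type operators). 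Monotonicity of the spectral radius under subgraph inclusion, applied through the Rayleigh quotient on compactly supported test vectors, then gives $\|A(T)\| \leq 2\sqrt{\Delta - 1}$, completing the root bound. The main obstacle will be the path-tree identity: the underlying bijection between matchings on $G - v$ (when $v$ is matched away) and matchings on suitably pruned subtrees of $T(G,v)$ is genuinely delicate, and the cleanest route is a simultaneous induction on $G$ and $T(G,v)$ via the continued-fraction expansion of $M_G/M_{G-v}$.
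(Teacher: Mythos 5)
The paper states this lemma purely as a citation to Heilmann--Lieb (1972) and Godsil (1981) and supplies no proof of its own, so there is no internal argument to compare against; the relevant question is simply whether your reconstruction is sound. It is: you have laid out the standard Godsil path-tree proof. The vertex recurrence $M_G = xM_{G-v} - \sum_{u\sim v} M_{G-u-v}$ is correct, the interlacing induction is the classical route to real-rootedness, the quotient identity $M_G/M_{G-v} = M_T/M_{T-r}$ with $T=T(G,v)$ is Godsil's theorem and is proved exactly by the continued-fraction-style recursion you indicate (the children of $r$ decompose $T-r$ into copies of $T(G-v,u)$), the fact that $M_T = \det(xI - A(T))$ for a forest follows from the Sachs/coefficient formula since forests have no cycles, the degree bound $\Delta(T(G,v)) \le \Delta$ is correct, and the embedding of a finite max-degree-$\Delta$ tree into $\mathbb{T}_\Delta$ together with Kesten's computation $\|A(\mathbb{T}_\Delta)\| = 2\sqrt{\Delta-1}$ gives the root bound.

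One small point of logic worth tightening: you say the divisibility $M_G \mid M_{T(G,v)}$ follows from "the interlacing guarantee from the first step" combined with cross-multiplication. In fact interlacing is not what drives the divisibility; it follows from the quotient identity together with a further induction, namely that $M_{G-v} \mid M_{T-r}$ (since $M_{T-r} = \prod_{u\sim v} M_{T(G-v,u)}$ and each factor is divisible by $M_{G-v}$ by the inductive hypothesis), after which $M_G \cdot M_{T-r} = M_{G-v}\cdot M_T$ immediately yields $M_G \mid M_T$. This also means your Step 1 (real-rootedness via interlacing) is, strictly speaking, redundant: once $M_G \mid M_{T(G,v)}$ is established, real-rootedness of $M_G$ comes for free from real-rootedness of the characteristic polynomial of the symmetric matrix $A(T)$. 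Either organization is fine, but the cleaner one makes the path-tree machinery do all the work.
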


\begin{corollary}\label{cor:lower-bounds-of-roots-of-MGP}
    All the roots of generating matching polynomial are real and negative.
    Moreover, all the roots have absolute value at least $1/(4(\Delta-1))$.
\end{corollary}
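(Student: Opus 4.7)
The plan is to derive the corollary directly from the preceding lemma about the matching polynomial $M_G$ via the substitution identity $M_G(x)=x^n m_G(-x^{-2})$, which is immediate from the definitions: expanding gives $x^n\sum_{k\ge 0} m_k(-x^{-2})^k=\sum_{k\ge 0}(-1)^k m_k x^{n-2k}=M_G(x)$.

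First I would observe that, for any $\alpha\neq 0$, the identity $M_G(\alpha)=\alpha^n m_G(-\alpha^{-2})$ shows $M_G(\alpha)=0$ if and only if $m_G(-\alpha^{-2})=0$. This gives a map $\alpha\mapsto -\alpha^{-2}$ from the nonzero roots of $M_G$ to the roots of $m_G$. By the previous lemma, every such $\alpha$ is real with $|\alpha|\le 2\sqrt{\Delta-1}$, so $-\alpha^{-2}$ is real and negative with $|{-\alpha^{-2}}|=\alpha^{-2}\ge 1/(4(\Delta-1))$. Hence every root of $m_G$ that lies in the image of this map has the stated properties.

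Next I would argue that the map is surjective onto the roots of $m_G$, so every root is captured. The relation $M_G(-x)=(-1)^n M_G(x)$ (again read off from $M_G(x)=\sum(-1)^k m_k x^{n-2k}$) shows that the nonzero roots of $M_G$ come in pairs $\{\alpha,-\alpha\}$, both mapping to the same value $-\alpha^{-2}$. Since $M_G$ has degree $n$ with $0$ as a root of multiplicity $n-2\nu$, there are exactly $2\nu$ nonzero roots of $M_G$ (with multiplicity), which produce $\nu$ values under $\alpha\mapsto -\alpha^{-2}$. As $m_G$ has degree $\nu$, these values must be precisely the $\nu$ roots of $m_G$ with the correct multiplicities, completing the argument.

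I do not anticipate a real obstacle here; the only point that needs a little care is the multiplicity count showing that the map exhausts all roots of $m_G$. Once that is in place, the real-negativity and the lower bound $|\lambda_i|\ge 1/(4(\Delta-1))$ are immediate consequences of the corresponding real-rootedness and upper bound from the Heilmann--Lieb / Godsil result quoted just above.
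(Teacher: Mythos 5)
Your proposal is correct and follows the same route as the paper: the paper's proof is the single sentence ``Obvious from the formula $M_{G}(x)=x^n m_{G}(-x^{-2})$,'' and you have simply spelled out the details (the substitution, the sign-parity pairing $\{\alpha,-\alpha\}$, and the multiplicity count showing the map onto roots of $m_G$ is exhaustive). Nothing to correct; a slightly more direct variant would start from a root $\lambda$ of $m_G$ and observe that any square root $\alpha$ of $-\lambda^{-1}$ is a root of $M_G$, so real-rootedness of $M_G$ forces $\lambda<0$ with $|\lambda|=\alpha^{-2}\ge 1/(4(\Delta-1))$, avoiding the counting step altogether.
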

\begin{proof}
    Obvious from the formula $M_{G}(x)=x^n m_{G}(-x^{-2})$.
\end{proof}

\begin{lemma}\label{lem:upper-bounds-of-roots-of-MGP}
    For any $\varepsilon > 0$, at least a $(1-\varepsilon)$-fraction of the roots of the matching generating polynomial has absolute value at most $(4\Delta)^{\varepsilon^{-1}}$.
\end{lemma}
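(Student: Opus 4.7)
The plan is to combine the lower bound on root magnitudes from \Cref{cor:lower-bounds-of-roots-of-MGP} with the product of roots, computed via Vieta's formulas, to argue that only a small fraction of roots can be large in absolute value.

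First I would compute $\prod_{i=1}^\nu |\lambda_i|$. Writing $m_G(x) = m_\nu \prod_{i=1}^\nu (x-\lambda_i)$ and evaluating at $x=0$ gives $m_0 = m_\nu (-1)^\nu \prod_i \lambda_i$. Since $m_0 = 1$ (the empty matching) and $m_\nu \geq 1$, and since all $\lambda_i$ are real and negative by \Cref{cor:lower-bounds-of-roots-of-MGP}, this yields
\[
\prod_{i=1}^\nu |\lambda_i| \;=\; \frac{1}{m_\nu} \;\leq\; 1.
\]

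Next, let $k$ denote the number of roots with $|\lambda_i| > (4\Delta)^{1/\varepsilon}$. The remaining $\nu - k$ roots satisfy $|\lambda_i| \geq 1/(4(\Delta-1)) \geq 1/(4\Delta)$ by \Cref{cor:lower-bounds-of-roots-of-MGP}. Splitting the product according to these two groups gives
\[
1 \;\geq\; \prod_{i=1}^\nu |\lambda_i| \;>\; (4\Delta)^{k/\varepsilon} \cdot (4\Delta)^{-(\nu - k)} \;=\; (4\Delta)^{k/\varepsilon + k - \nu}.
\]
Taking $\log_{4\Delta}$ of both sides yields $k/\varepsilon + k - \nu < 0$, i.e., $k(1+\varepsilon^{-1}) < \nu$, so $k < \varepsilon\nu/(1+\varepsilon) \leq \varepsilon\nu$. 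Hence at least a $(1-\varepsilon)$-fraction of the roots have absolute value at most $(4\Delta)^{1/\varepsilon}$, as claimed.

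There is no real obstacle here; the argument is a one-shot application of Vieta together with the Heilmann--Lieb-based lower bound from \Cref{cor:lower-bounds-of-roots-of-MGP}. The only point requiring minor care is that $m_\nu \geq 1$ (since a maximum matching exists) so the product of absolute values of roots is bounded by $1$, which is what lets the lower bound $1/(4\Delta)$ on the "small" roots force an upper bound on the count of "large" roots.
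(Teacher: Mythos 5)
Your proposal is correct and essentially matches the paper's argument: both compute $\prod_i|\lambda_i| = 1/m_\nu \le 1$ via Vieta, combine it with the Heilmann--Lieb lower bound $|\lambda_i| \ge 1/(4\Delta)$ from \Cref{cor:lower-bounds-of-roots-of-MGP}, and deduce the count of large roots must be small. The paper phrases this in logarithmic form (fixing the count $\varepsilon\nu$ and bounding the threshold $\tau$) while you fix the threshold $(4\Delta)^{1/\varepsilon}$ and bound the count $k$, but the two are just rearrangements of the same inequality.
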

\begin{proof}
    Let $\lambda_1,\ldots,\lambda_\nu$ be the roots of $m_G$ for   $\nu := \nu(G)$.
    Then, we have $\prod_{i=1}^\nu \lambda_i = (-1)^\nu m_0/m_\nu = (-1)^\nu/m_\nu$ by Vieta's formula.
    As all the roots are negative by \Cref{cor:lower-bounds-of-roots-of-MGP}, this implies $\sum_{i=1}^\nu \log |\lambda_i| = -\log m_\nu$.
    By \Cref{cor:lower-bounds-of-roots-of-MGP}, we have $\log |\lambda_i| \geq -\log (4(\Delta-1)) \geq -\log (4\Delta)$ for every $i \in [\nu]$.

    Suppose that there are $\varepsilon \nu$ many roots $\lambda_i$ with $\log |\lambda_i| > \tau$ for some $\tau \in \mathbb{R}$.
    Then, we have
    \[
        -\log m_\nu = \sum_{i=1}^\nu \log |\lambda_i| \geq \varepsilon \nu \tau - (1-\varepsilon) \nu \log (4\Delta).
    \]
    Hence, we have 
    \[
        \tau \leq \frac{(1-\varepsilon) \log (4\Delta)}{\varepsilon} - \frac{\log m_\nu }{\varepsilon \nu }
        \leq \frac{1-\varepsilon}{\varepsilon} \log (4\Delta).
    \]
    This implies that at least a $(1-\varepsilon)$-fraction of roots have absolute value at most $(4\Delta)^{(1-\varepsilon)/\varepsilon} \leq (4\Delta)^{\varepsilon^{-1}}$.
\end{proof}

\begin{proof}[Proof of \Cref{thm:matching-approximation}]
    Let $\lambda_1,\ldots,\lambda_\nu$ be the roots of $m_G$ for $\nu := \nu(G)$.
    Then, the expected size of a matching sampled from the Gibbs distribution is 
    \begin{align*}
        & \E[M \sim \mu_\lambda]{|M|}
        = \sum_{i=1}^\nu \frac{\lambda}{\lambda-\lambda_i} \tag{by \Cref{lem:matching-size-root}} \\
        & \geq 
        \sum_{i=1}^{(1-\varepsilon/2)\nu} \frac{\lambda}{\lambda + (4\Delta)^{2\varepsilon^{-1}}} \tag{by \Cref{lem:upper-bounds-of-roots-of-MGP}} \\
        & =  \frac{(1-\varepsilon/2)\lambda}{\lambda + (4\Delta)^{2\varepsilon^{-1}}}\cdot \nu. 
    \end{align*}
\end{proof}

\subsection{Sensitivity Analysis}\label{subsec:matching-sensitivity}

In this section, we prove the following theorem about sensitivity of the Gibbs distribution over matchings:
\begin{theorem}\label{thm:sensitivity}
    Let $G = (V, E)$ be a graph of maximum degree $\Delta$, $i\in E$ be an edge, and $\lambda > 0$. 
    Then, we have
    \[
        \We{\mu_{E; \lambda, G}, \mu_{E; \lambda, G-i}} \le 1 + 2\lambda\Delta.
    \]
\end{theorem}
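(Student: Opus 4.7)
The plan is to prove the theorem in two stages: first reduce $\We{\mu_{E;\lambda,G}, \mu_{E;\lambda,G-i}}$ to a Wasserstein distance between two pinned distributions on $E-i$, and then bound the latter by induction on the number of edges through vertex-pinning arguments. For the reduction, set $p := \mu_{E;\lambda,G}(i \gets +)$ and $\mu_\pm := \mu_{E-i;\lambda,G}^{i \gets \pm}$. Identifying $\mu_{E;\lambda,G-i}$ with $\mu_{E;\lambda,G}^{i \gets -}$ via \Cref{lem:pinning-deletion}, combining $\mu_{E;\lambda,G} = p\,\mu_{E;\lambda,G}^{i \gets +} + (1-p)\,\mu_{E;\lambda,G}^{i \gets -}$ with \Cref{lem:wd-sum}, and coupling the two conditional distributions by forcing the disagreement on $i$ and matching the remaining coordinates optimally, I obtain $\We{\mu_{E;\lambda,G}, \mu_{E;\lambda,G-i}} \le p\bigl(1 + \We{\mu_+, \mu_-}\bigr)$. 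It therefore suffices to show $\We{\mu_+, \mu_-} \le 2\lambda\Delta$.

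Let $u, v$ be the endpoints of $i$. By iterated application of \Cref{lem:pinning-deletion}, $\mu_+$ is the Gibbs distribution on $G \setminus (E(u) \cup E(v))$, which equals $\mu_-^{v \gets -, u \gets -}$ (vertex pinnings realized as deletions of all incident edges). The triangle inequality \Cref{lem:triangle} through the intermediate distribution $\mu_-^{v \gets -}$ then reduces the bound to two terms, each of the form $\We{\nu, \nu^{x \gets -}}$ where $\nu$ is a Gibbs distribution on a subgraph of $G$ and $x$ is a vertex. Hence it is enough to prove $\We{\nu, \nu^{x \gets -}} \le \lambda\Delta$ for every such $\nu$ and $x$.

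This in turn follows from the inductive claim: for every Gibbs distribution $\nu$ on a graph of maximum degree at most $\Delta$ and every vertex $x$,
\[
    \We{\nu^{x \gets +}, \nu^{x \gets -}} \le \frac{1}{1-\alpha}, \qquad \alpha := \frac{\lambda\Delta}{1+\lambda\Delta}.
\]
A ratio-of-partition-functions estimate (each edge $k \in E(x)$ contributes at most $\lambda$ to the odds of $x$ being matched, over at most $\Delta$ such edges) gives $\nu(x \gets +) \le \alpha$, so the claim combined with $\We{\nu, \nu^{x \gets -}} \le \nu(x \gets +)\, \We{\nu^{x \gets +}, \nu^{x \gets -}}$ yields the desired $\We{\nu, \nu^{x \gets -}} \le \alpha/(1-\alpha) = \lambda\Delta$. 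For the inductive step I expand $\nu^{x \gets +}$ at $x$ via \Cref{lem:expansion} as $\sum_{k \in E(x)}(q_k/q_{x+}) \nu^{k \gets +}$; pairing term-by-term against $\nu^{x \gets -}$, the distributions $\nu^{k \gets +}$ (which always contain $k$) and $\nu^{x \gets -}$ (which forbids every edge at $x$) disagree on $k$, contributing a forced $+1$, and by \Cref{lem:pinning-deletion} the residual on $E \setminus \{k\}$ equals $\We{\mu', (\mu')^{u_k \gets -}}$ on the strictly smaller graph $H - E(x)$, where $u_k$ is the other endpoint of $k$. The inductive hypothesis bounds this by $\alpha/(1-\alpha)$, yielding the recursion $\tau_s \le 1 + \alpha\,\tau_{s-1}$ for the supremum $\tau_s$ over $s$-edge graphs, which solves to $1/(1-\alpha)$.

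The main obstacle I anticipate is careful bookkeeping of the domain identifications under the pinning-to-deletion correspondence in \Cref{lem:pinning-deletion}: I need to verify that stripping the forced edge $k$ from a sample of $\nu^{k \gets +}$ yields exactly the Gibbs distribution on $H - E(x) - E(u_k)$ on the correct coordinate set, and that the Hamming identity $|M \triangle M'| = 1 + |(M \setminus \{k\}) \triangle M'|$ holds in the induced coupling, so the geometric recursion closes without any additional slack.
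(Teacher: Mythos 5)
Your proposal is correct, and it takes a route closely related to, but structurally distinct from, the paper's proof. Both arguments are recursive-coupling arguments governed by the same decay factor $\alpha = \lambda\Delta/(1+\lambda\Delta)$, both rest on the bound $\nu(x\gets+)\le\alpha$ (the paper's \Cref{lem:marginal-bound}), and both use the pinning-to-deletion equivalence (\Cref{lem:pinning-deletion}) to descend to a strictly smaller subgraph. The paper, however, phrases the recursion in terms of \emph{pendant-edge} pinnings: it defines $\kappa^E_{s,\Delta,\lambda}$ over graphs with a designated pendant edge $i$, proves the recursion $\kappa^E_{s+1}\le\alpha(\kappa^E_s+1)$ (\Cref{lem:kappa-recursion}), and then handles a general edge by the \emph{edge-splitting trick}: it introduces an auxiliary graph $G'$ with $i$ broken into two pendant half-edges $i_1,i_2$ and applies the triangle inequality through the mixed pinning $i_1\gets-,\,i_2\gets+$. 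You avoid the pendant-edge normalization and the auxiliary graph entirely by working with \emph{vertex} pinnings $x\gets\pm$ throughout; your identification $\mu_+ = \mu_-^{v\gets-,\,u\gets-}$ followed by the triangle inequality through $\mu_-^{v\gets-}$ plays exactly the role of the edge-splitting trick, and your recursion $\tau_s\le 1+\alpha\tau_{s-1}$ (with $\tau$ measured over the full edge set, so the forced disagreement on $k$ is inside $\tau$) is an equivalent repackaging of $\kappa^E_{s+1}\le\alpha(\kappa^E_s+1)$ (where $\kappa^E$ is measured over $E-i$). Interestingly, the paper introduces a vertex-pinning quantity $\kappa^V_s$ later in Section~4, but only for the $d_V$ metric; you have effectively worked out the vertex-pinning version for the $d_E$ metric, which streamlines the bookkeeping here.

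Two small points to tighten. First, the phrase ``the inductive hypothesis bounds this by $\alpha/(1-\alpha)$'' is technically circular as written: the hypothesis should bound the residual by $\alpha\tau_{s-1}$, from which $\tau_s\le 1+\alpha\tau_{s-1}$ and hence $\tau_s\le 1/(1-\alpha)$ follow; quoting the steady-state value $\alpha/(1-\alpha)$ as the inductive hypothesis presupposes what you are proving. Second, the identification $\mu_+=\mu_-^{v\gets-,\,u\gets-}$ and the equality $\mu_{F;G}=\mu_{F';G'}$ for nested scopes both rely on the standard convention of viewing the Gibbs distribution on a subgraph as a distribution on the larger edge set with the missing coordinates deterministically $0$; this should be made explicit since the Wasserstein distance is computed in $d_E$ on the larger space, though it is exactly how the paper treats \Cref{lem:pinning-deletion} as well.
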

We first consider sensitivity with respect to pinning in \Cref{susubsec:pinning}, using an auxiliary lemma that will be proved in \Cref{subsubsec:recursion}. Then, we prove \Cref{thm:sensitivity} in \Cref{subsubsec:deletion}.

\subsubsection{Sensitivity with Respect to Pinning}\label{susubsec:pinning}
In this section, we consider the Wasserstein distance with respect to pinning, and show the following:
\begin{lemma}\label{lem:pinning}
    Let $G = (V, E)$ be a graph of maximum degree $\Delta$, $i\in E$ be an edge, and $\lambda > 0$. 
    Then, we have
    \begin{equation*}
        \We{\mu_{E; \lambda, G}^{i\gets +}, \mu_{E; \lambda, G}^{i\gets -}} \le  1 + 2\lambda\Delta.
    \end{equation*}
\end{lemma}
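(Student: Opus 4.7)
My plan is to prove \Cref{lem:pinning} by induction on the number of edges $s$ of $G$, following the recursion outlined in the Technical Overview. I will define
\[
\kappa_s := \max_{G, i \in E(G)} \We{\mu_{E-i;\lambda,G}^{i\gets +}, \mu_{E-i;\lambda,G}^{i\gets -}},
\]
where the maximum ranges over graphs $G$ with at most $s$ edges and maximum degree at most $\Delta$. Because edge $i$ is always matched under $\mu_{E;\lambda,G}^{i\gets +}$ and never matched under $\mu_{E;\lambda,G}^{i\gets -}$, every coupling pays exactly $1$ at coordinate $i$, so $\We{\mu_{E;\lambda,G}^{i\gets +}, \mu_{E;\lambda,G}^{i\gets -}} = 1 + \We{\mu_{E-i;\lambda,G}^{i\gets +}, \mu_{E-i;\lambda,G}^{i\gets -}} \le 1 + \kappa_{|E|}$. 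It therefore suffices to show $\kappa_s \le 2\lambda\Delta$ uniformly in $s$.

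The base case $s = 1$ is immediate because $E - i = \varnothing$ and both pinned distributions are point masses on the empty matching. For the inductive step with $i = uv$, I will apply \Cref{lem:expansion} to $\mu^- := \mu_{E-i;\lambda,G}^{i\gets -}$ at the endpoint $v$, giving
\[
\mu^- = \sum_{j \in E(v) - i} p_j \, \mu^{-, j\gets +} + p_\emptyset \, \mu^{-, v\gets -},
\]
with $p_j := \mu^-(j\gets +)$ and $p_\emptyset := \mu^-(v\gets -)$. Since $i \gets +$ forces every edge of $E(v) - i$ to be unmatched, $\mu^+ := \mu_{E-i;\lambda,G}^{i\gets +}$ coincides with $\mu^{+, v\gets -}$; expanding $\mu^{-, v\gets -}$ further at the other endpoint $u$ and using the identity $\mu^{-, v\gets -, u\gets -} = \mu^+$ (which follows from iterating \Cref{lem:pinning-deletion}) writes $\mu^- - \mu^+$ as a signed combination of the differences $\mu^{-, j\gets +} - \mu^+$ for $j \in E(v) - i$ together with $\mu^{-, v\gets -, j'\gets +} - \mu^+$ for $j' \in E(u) - i$.

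\Cref{lem:wd-sum} then bounds $\We{\mu^-, \mu^+}$ by the weighted sum of the Wasserstein distances of these summands. Each summand compares two Gibbs distributions on strictly smaller subgraphs of $G$, and a further triangle-inequality step, passing through an auxiliary distribution in which both endpoints of the newly pinned edge are pinned, brings it into the canonical form $\We{\mu_{\cdot}^{j\gets +}, \mu_{\cdot}^{j\gets -}}$ on a graph with at most $s-1$ edges, which by induction is bounded by $\kappa_{s-1}$ plus a small constant Hamming contribution for the auxiliary pinnings. The total weight of the non-trivial summands is at most $\alpha := \lambda\Delta/(1+\lambda\Delta)$, because for any Gibbs distribution on a graph of maximum degree $\Delta$ the probability that a fixed vertex is matched is at most $\lambda\Delta/(1+\lambda\Delta)$; this follows from the closed form $\mu_G(v\gets +) = \lambda \sum_{j \in E(v)} m_{G - j - N(j)}(\lambda)/m_G(\lambda)$ together with the monotonicity $m_{G - j - N(j)} \le m_{G - E(v)}$. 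Putting these pieces together yields a recursion of the shape $\kappa_s \le \alpha(\kappa_{s-1} + c)$ with a small constant $c$, whose fixed point $\alpha c/(1-\alpha) = c \lambda \Delta$ supplies the desired bound.

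The main obstacle is that $\mu^{-, j\gets +}$ and $\mu^+$ are Gibbs distributions on different subgraphs of $G$: conditioning on $j \gets +$ inside $\mu^-$ prunes $N_{G-i}(j)$, whereas $\mu^+$ prunes $N_G(i)$, so the summands are not literally of the form $(\mu^{i'\gets +}, \mu^{i'\gets -})$ that the definition of $\kappa_{s-1}$ demands. Correcting this by inserting an auxiliary pinning such as $u \gets -$, or equivalently by a second application of \Cref{lem:expansion} at the far endpoint of $j$, injects an extra unit Hamming contribution at each level of the recursion. Bookkeeping these contributions carefully — one for the original edge $i$, and further units for each newly pinned edge along the recursion — is what ultimately produces the precise constant $2$ in the bound $1 + 2\lambda\Delta$ rather than a bare $1 + \lambda\Delta$.
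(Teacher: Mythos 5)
Your outline shares the paper's high-level strategy — a recursion on edge counts driven by \Cref{lem:expansion}, \Cref{lem:wd-sum}, and the marginal bound — and your closed-form derivation of $\mu_G(v\gets+) \le \lambda\Delta/(1+\lambda\Delta)$ via $m_{G-j-N(j)} \le m_{G-E(v)}$ is a valid alternative to the paper's bijection argument. However, there is a genuine quantitative gap in the recursive step that stems from omitting the paper's restriction to \emph{pendant} edges.

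You define $\kappa_s$ as a maximum over \emph{all} edges $i$, and in the inductive step you expand at \emph{both} endpoints $u$ and $v$: the summands are $\mu^{-,j\gets+}-\mu^+$ for $j\in E(v)-i$ with total weight $\mu^-(v\gets+)$, plus $\mu^{-,v\gets-,j'\gets+}-\mu^+$ for $j'\in E(u)-i$ with total weight $\mu^-(v\gets-)\cdot\mu^{-,v\gets-}(u\gets+)$. Each of these two blocks is at most $\alpha=\lambda\Delta/(1+\lambda\Delta)$, but they are not mutually exclusive events within a single expansion, so the combined weight can be as large as $\alpha+(1-\alpha)\alpha=\alpha(2-\alpha)$, not $\alpha$ as you claim. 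The resulting recursion then has the form $\kappa_s \le \alpha(2-\alpha)\bigl(\kappa_{s-1}+c\bigr)$, whose fixed point is $\Theta(c\,\lambda^2\Delta^2)$ rather than $O(\lambda\Delta)$. This still gives $\Delta^{O(1/\eps)}$ downstream, but it does not prove the stated bound $1+2\lambda\Delta$, and the claim that ``the total weight of the non-trivial summands is at most $\alpha$'' is simply false whenever $\Delta(u)>1$.

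The missing idea is the paper's \emph{edge-splitting trick}. In \Cref{def:kappa} the quantity $\kappa^E_{s,\Delta,\lambda}$ is defined only over \emph{pendant} edges $i=(u,v)$ with $\Delta(u)=1$; for such $i$ the pinning $i\gets+$ is equivalent, on $E-i$, to $v\gets-$, so after one application of \Cref{lem:expansion} at $v$ the ``$v$ unmatched'' term is \emph{exactly} $\mu^+$ and contributes nothing — no second expansion at $u$ is needed, and the weight of the remaining terms is a single $\alpha$. The factor of two in the final bound $1+2\lambda\Delta$ then arises \emph{additively}, from a single top-level triangle inequality: any edge $i$ is broken into two pendant edges $i_1,i_2$, giving
$\We{\mu^{i_1\gets+,i_2\gets+},\mu^{i_1\gets-,i_2\gets-}} \le \We{\mu^{i_1\gets+,i_2\gets+},\mu^{i_1\gets-,i_2\gets+}} + \We{\mu^{i_1\gets-,i_2\gets+},\mu^{i_1\gets-,i_2\gets-}} \le 2\lambda\Delta$.
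Pushing that factor of two into the recursion, as your plan does by expanding at both endpoints inside one inductive step, worsens the weight multiplicatively rather than the final constant additively and loses the linear-in-$\lambda\Delta$ bound.
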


We prove \Cref{lem:pinning} by a recursive coupling similar to that used in \cite{chen2023near,chen2024fast}, but 
instead of using an algorithm that outputs the coupling then analyzing the output of the algorithm as they did, we construct the recursion of Wasserstein distance
directly with the help of \Cref{lem:wd-sum}.

For each $s\ge 1$, we define an upper bound on the Wasserstein distance on graphs with at most $s$ edges, and put them into a recursion to derive a universal upper bound for graphs of any size.
\begin{definition}[$\kappa^E_{s, \Delta, \lambda}$]\label{def:kappa}
    For each $s, \Delta\in \mathbb Z_{>0}$, $\lambda>0$, we define
    \begin{align*}
    \kappa^E_{s, \Delta, \lambda}
    &\defeq
    \max_{\substack{G=(E, V)\\i\in E\\i\text{ is pendant}}} \We{\mu_{E-i;\lambda, G}^{i\gets +},\mu_{E-i;\lambda, G}^{i\gets -}},
    \end{align*}
    where $G=(V,E)$ ranges over all graphs such that $|E|\le s$ and $\Delta(G)\le \Delta$.
\end{definition}

The following lemma shows the recurrence relation among the upper bounds.
\begin{lemma}\label{lem:kappa-recursion}
    For any $s, \Delta\in \mathbb Z_{>0}$, $\lambda>0$, we have
    \begin{align*}
    \kappa^E_{s+1, \Delta, \lambda}
    &\le \frac{\lambda\Delta}{1+\lambda\Delta}
    \left(\kappa^E_{s, \Delta, \lambda}+1\right).
    \end{align*}
\end{lemma}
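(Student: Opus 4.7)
The plan is to expand $\mu_{E-i;\lambda,G}^{i\gets -}$ at the non-leaf endpoint $u$ of $i$, identify the ``$u$-unmatched'' summand with $\mu_{E-i;\lambda,G}^{i\gets +}$, and reduce each remaining summand to the pendant-edge sensitivity on an auxiliary graph of size at most $s$. Fix a witness $G=(V,E)$ with $|E|\le s+1$ and $\Delta(G)\le\Delta$, and let $i=\{u,v\}\in E$ be pendant with $\deg_G(v)=1$; write $N(i)$ for the other edges at $u$. If $N(i)=\emptyset$, then by \Cref{lem:pinning-deletion} both distributions collapse to Gibbs on $G-i$, so the recursion is trivial; hence I assume $N(i)\ne\emptyset$ throughout.

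First I apply \Cref{lem:expansion} at $u$ to $\mu_{E-i;\lambda,G}^{i\gets -}$ (which equals Gibbs on $G-i$) to obtain
\[
\mu_{E-i;\lambda,G}^{i\gets -} \;=\; \sum_{j\in N(i)} p_j\,\mu_{E-i;\lambda,G}^{i\gets -,\,j\gets +} \;+\; p_-\,\mu_{E-i;\lambda,G}^{i\gets -,\,u\gets -}.
\]
The crucial observation is that $\mu_{E-i;\lambda,G}^{i\gets -,\,u\gets -}$ equals $\mu_{E-i;\lambda,G}^{i\gets +}$: since $v$ has only the edge $i$, both pinnings forbid exactly the edge set $\{i\}\cup N(i)$, so by \Cref{lem:pinning-deletion} both reduce to the Gibbs distribution on $G-i-N(i)$. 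Subtracting $\mu_{E-i;\lambda,G}^{i\gets +}$ via $p_-+\sum_j p_j=1$ and applying \Cref{lem:wd-sum} then yields
\[
\We{\mu_{E-i;\lambda,G}^{i\gets +},\,\mu_{E-i;\lambda,G}^{i\gets -}} \;\le\; \sum_{j\in N(i)} p_j\;\We{\mu_{E-i;\lambda,G}^{i\gets +},\,\mu_{E-i;\lambda,G}^{i\gets -,\,j\gets +}}.
\]

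Next I show each summand is at most $1+\kappa^E_{s,\Delta,\lambda}$ by realizing it as the pendant-edge sensitivity on a smaller graph. For fixed $j\in N(i)$ with other endpoint $w$, define $G'':=G-\bigl(\{i\}\cup (N(i)\setminus\{j\})\bigr)$. Then $u$ is incident only to $j$ in $G''$, so $j$ is pendant, and $|E(G'')|=|E|-|N(i)|\le s$ with $\Delta(G'')\le\Delta$. Viewing $\mu_{E-i;\lambda,G}^{i\gets +}$ and $\mu_{E-i;\lambda,G}^{i\gets -,\,j\gets +}$ as distributions on $2^{E-i}$, they split into three coordinate blocks: they are both deterministically zero on $N(i)\setminus\{j\}$; they are deterministically zero and one, respectively, on $\{j\}$; and by \Cref{lem:pinning-deletion} their restrictions to $E-i-N(i)$ equal $\mu_{E(G'')-j;\lambda,G''}^{j\gets -}$ and $\mu_{E(G'')-j;\lambda,G''}^{j\gets +}$, respectively, since in the former both sides reduce to Gibbs on $G-i-N(i)$, and in the latter both sides reduce to Gibbs on $G-i-j-N_{G-i}(j)$. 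Independent couplings on the three blocks contribute $0$, $1$, and at most $\kappa^E_{s,\Delta,\lambda}$ to the expected $d_E$-cost, delivering $\We{\mu_{E-i;\lambda,G}^{i\gets +},\,\mu_{E-i;\lambda,G}^{i\gets -,\,j\gets +}}\le 1+\kappa^E_{s,\Delta,\lambda}$.

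Finally I bound $\sum_{j\in N(i)}p_j=\Pr[M\sim\mu_{E-i;\lambda,G}^{i\gets -}]{u\text{ is matched by }M}$ by $\lambda\Delta/(1+\lambda\Delta)$. Conditioning on the restriction of $M$ to edges not incident to $u$, let $k$ denote the number of edges of $E_{G-i}(u)$ whose other endpoint is free under this restriction; the conditional probability that $u$ gets matched is then $\lambda k/(1+\lambda k)$, which is monotone in $k$ and hence bounded by $\lambda\Delta/(1+\lambda\Delta)$ since $k\le\deg_{G-i}(u)\le\Delta$. Taking expectations and combining with the previous step delivers $\kappa^E_{s+1,\Delta,\lambda}\le\tfrac{\lambda\Delta}{1+\lambda\Delta}(\kappa^E_{s,\Delta,\lambda}+1)$. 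The main subtlety I anticipate is the bookkeeping of pinning-to-deletion identities in the middle step, ensuring that the restrictions of $\mu_{E-i;\lambda,G}^{i\gets +}$ and $\mu_{E-i;\lambda,G}^{i\gets -,\,j\gets +}$ to $E-i-N(i)$ line up exactly with pendant-$j$ Gibbs distributions on $G''$; once this alignment is in hand, the rest reduces to the independent coupling and an elementary conditional-probability estimate.
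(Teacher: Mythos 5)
Your proof is correct and follows essentially the same approach as the paper: expand $\mu_{E-i;\lambda,G}^{i\gets-}$ at the non-leaf endpoint of the pendant edge, observe that the ``all-unmatched'' term coincides with $\mu_{E-i;\lambda,G}^{i\gets+}$ so that it cancels in the difference, and reduce each remaining $j\gets+$ summand to a pendant-edge instance on a graph with at most $s$ edges, finishing with the $\lambda\Delta/(1+\lambda\Delta)$ bound on the probability that the non-leaf endpoint is matched. Your in-place derivations of the two auxiliary facts differ cosmetically from the paper's (you use an explicit three-block product coupling on $N(i)\setminus\{j\}$, $\{j\}$, and $E-i-N(i)$ where the paper chains a triangle inequality in \Cref{lem:to-kappa-edge}; and you give a direct conditional-probability computation where the paper proves \Cref{lem:marginal-bound} by a bijection), and your bookkeeping of edge counts via $|E(G'')|=|E|-|N(i)|\le s$ is actually cleaner than the paper's slightly imprecise indexing; but these are presentational variations on the same argument, not a different route.
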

We will prove \Cref{lem:kappa-recursion} in \Cref{subsubsec:recursion}; here, we first use it to prove \Cref{lem:pinning}.
\begin{proof}[Proof of \Cref{lem:pinning}]
    Notice that $\kappa^E_{1,\Delta,\lambda}=0$ holds trivially.
    By \Cref{lem:kappa-recursion}, we can prove by induction that for any $s\ge 1$,
    \begin{align*}
    \kappa^E_{s, \Delta, \lambda}
    &\le \sum_{k=1}^{s-1}\left(\frac{\lambda\Delta}{1+\lambda\Delta}\right)^k
    \le \sum_{k=1}^{\infty}\left(\frac{\lambda\Delta}{1+\lambda\Delta}\right)^k
    = \lambda\Delta.
    \end{align*}
    By \Cref{def:kappa}, this indicates that for any graph $G=(V, E)$ such that $\Delta(G)\le \Delta$,
    and any pendant edge $i\in E$, we have
    \begin{align}
      &\We{\mu_{E-i;\lambda, G}^{i\gets +},\mu_{E-i;\lambda, G}^{i\gets -}} \le \lambda\Delta.
      \label{eq:unibound-kappa-e}
    \end{align}
    
    To show the claim, it suffices to reduce deleting an arbitrary edge $e \in E$ from a graph $G=(V, E)$ to the case where pendant edges are deleted.
    To this end, we define a new graph $G' = (V, E')$ by breaking the edge $i$ into two edges $i_1, i_2$,
    each incident to one of its endpoints. Then we have 
    \[
    \mu_{E-i; G, \lambda}^{i\gets +}
    = \mu_{E'\setminus\set{i_1, i_2}; G', \lambda}^{\substack{i_1\gets + \\ i_2\gets +}},
    \qquad
    \mu_{E-i; G, \lambda}^{i\gets -}
    = \mu_{E'\setminus\set{i_1, i_2}; G', \lambda}^{\substack{i_1\gets - \\ i_2\gets -}}.
    \]
    Notice that $i_1, i_2$ are pendant edges in $G'$.
    By triangle inequality and \Cref{eq:unibound-kappa-e}, we have
   \begin{align*}
   \We{\mu_{E-i;\lambda, G}^{i\gets +},\mu_{E-i;\lambda, G}^{i\gets -}}
   &=
   \We{
   \mu_{E'\setminus\set{i_1, i_2}; G', \lambda}^{\substack{i_1\gets + \\ i_2\gets +}},
   \mu_{E'\setminus\set{i_1, i_2}; G', \lambda}^{\substack{i_1\gets - \\ i_2\gets -}}
   }
   \\&\le
   \We{
   \mu_{E'\setminus\set{i_1, i_2}; G', \lambda}^{\substack{i_1\gets + \\ i_2\gets +}},
   \mu_{E'\setminus\set{i_1, i_2}; G', \lambda}^{\substack{i_1\gets - \\ i_2\gets +}}
   }
   +\We{
   \mu_{E'\setminus\set{i_1, i_2}; G', \lambda}^{\substack{i_1\gets - \\ i_2\gets +}},
   \mu_{E'\setminus\set{i_1, i_2}; G', \lambda}^{\substack{i_1\gets - \\ i_2\gets -}}
   }
   \\&\le
   2\lambda\Delta.
   \end{align*}
   Moreover, we have 
   \begin{align*}
       \We{\megp {E-i} G {i\gets +}, \megp {E} G {i\gets +} } = 1,\quad
       \We{\megp {E-i} G {i\gets -}, \megp {E} G {i\gets -} } = 0.
   \end{align*}
   Then by triangle inequality, 
   \begin{align*}
   \We{\mu_{E;\lambda, G}^{i\gets +},\mu_{E;\lambda, G}^{i\gets -}}
   &\le
   \We{\megp {E} G {i\gets +}, \megp {E-i} G {i\gets +} }
   +
   \We{\mu_{E-i;\lambda, G}^{i\gets +},\mu_{E-i;\lambda, G}^{i\gets -}}
   +
   \We{\megp {E-i} G {i\gets -}, \megp {E} G {i\gets -} }
   \\&\le 1 + 2\lambda\Delta.
   \end{align*}
   
\end{proof}

\subsubsection{Recursive Coupling under Edge Pinnings}\label{subsubsec:recursion}
In this section, we prove \Cref{lem:kappa-recursion}.
The following lemma states that the probability that a vertex is matched is bounded away from one.
\begin{lemma}\label{lem:marginal-bound}
    Let $G=(V, E)$ be a graph, $v\in V$ be a vertex and $\lambda>0$.
    Then $\mu_{E; \lambda, G}(v\gets +)\le \frac{\lambda\Delta(v)}{1+\lambda\Delta(v)}$.
\end{lemma}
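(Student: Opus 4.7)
The plan is to compare the partition-function mass of matchings in which $v$ is matched with the mass of those in which $v$ is unmatched, and to produce an injective, weight-preserving-up-to-a-factor map from the former to the latter. The key ingredient is the standard observation that deleting the (unique) edge covering $v$ from a matching produces a valid matching in which $v$ is unmatched.

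Concretely, I will decompose $\mu_{E;\lambda,G}(v\gets +)=\sum_{i\in E(v)}\mu_{E;\lambda,G}(i\gets +)$ and, for each $i\in E(v)$ with other endpoint $u$, observe the bijection
\[
\set{M\in\mathcal M(G)\cmid i\in M}\;\longleftrightarrow\;\set{M'\in\mathcal M(G)\cmid v,u\text{ unmatched in }M'},
\]
given by $M\mapsto M\setminus\set i$ (and $M'\mapsto M'\cup\set i$). Since $|M|=|M'|+1$, this yields
\[
\sum_{M\ni i}\lambda^{|M|}\;=\;\lambda\sum_{\substack{M'\in\mathcal M(G)\\v,u\text{ unmatched}}}\lambda^{|M'|}\;\le\;\lambda\sum_{M':v\gets -}\lambda^{|M'|}.
\]

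Summing over the at most $\Delta(v)$ edges $i\in E(v)$ and dividing by the partition function $Z$ gives
\[
\mu_{E;\lambda,G}(v\gets +)\;\le\;\lambda\Delta(v)\cdot\mu_{E;\lambda,G}(v\gets -)\;=\;\lambda\Delta(v)\bigl(1-\mu_{E;\lambda,G}(v\gets +)\bigr),
\]
and rearranging yields $\mu_{E;\lambda,G}(v\gets +)\le \lambda\Delta(v)/(1+\lambda\Delta(v))$, as desired.

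I do not anticipate any real obstacle: the whole argument is a one-step combinatorial injection followed by an algebraic rearrangement. The only mildly delicate point is checking that $M\setminus\set i$ leaves both $v$ \emph{and} $u$ unmatched (which is immediate from $M$ being a matching), so that the map $M'\mapsto M'\cup\set i$ is well-defined on the image and the bijection is correct; this is why the inequality in the displayed chain above is at the step of relaxing ``$v,u$ unmatched'' to ``$v$ unmatched,'' not at the injection itself.
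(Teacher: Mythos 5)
Your proof is correct and takes essentially the same approach as the paper: both rely on the bijection $M\mapsto M\setminus\{i\}$ between matchings containing $i$ and matchings avoiding every edge incident to an endpoint of $i$, relax to obtain $\mu_{E;\lambda,G}(i\gets+)\le\lambda\,\mu_{E;\lambda,G}(v\gets-)$, sum over $i\in E(v)$, and rearrange. The paper phrases the final step via the ratio $\mu(i\gets+)/\mu(v\gets-)\le\lambda$ rather than the additive inequality $\mu(v\gets+)\le\lambda\Delta(v)(1-\mu(v\gets+))$, but the two are algebraically identical.
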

\begin{proof}
    In this proof, we write $\mathcal M$ instead of $\mathcal M(G)$ for notational convenience.
    For an edge $i\in E$, define 
    \begin{align*}
        \mathcal M^{i \gets +} & := \set{M\in \mathcal M\mid i\in M}, \\
        \mathcal M^{N(i) \gets -} & := \set{M\in\mathcal M \mid i\notin M, N(i)\cap M=\varnothing}.
    \end{align*}
    Then, there is a bijection between $\mathcal M^{i \gets +}$ and $\mathcal M^{N(i) \gets -}$.
    \begin{align*}
        \mathcal M^{i \gets +} \ni M &\leftrightarrow M - i \in \mathcal M^{N(i) \gets -}.
    \end{align*}
    Notice that if $v\gets -$ then we must have $E(v)\gets -$, so $\m(v\gets -)\ge \m(N(i)\gets -)$.
    Hence, we have
    \begin{align*}
        \frac{\m(i\gets +)}{\m(v\gets -)}
        &\le \frac{\m(i\gets +)}{\m(N(i)\gets -)}
        \\&\le
        \frac
        {\sum_{M \in \mathcal M^{i \gets +}}\m(M)}
        {
        \sum_{M\in \mathcal M^{N(i)\gets -}}\m(M)
        }
        \\&=
        \frac
        {\sum_{M \in \mathcal M^{i \gets +}}\m(M)}
        {
        \sum_{M\in \mathcal M^{i\gets +}}\m(M)/\lambda
        }= \lambda.
    \end{align*}
    Then, we have
    \begin{align*}
        \m(v\gets + )
        &=
        \frac{\sum_{i\in E(v)}\m(i\gets +)}{\m(v\gets -) + \sum_{i\in E(v)}\m(i\gets +)}
        \\&=
        \frac{\sum_{i\in E(v)}\m(i\gets +)/\m(v\gets -) }{1 + \sum_{i\in E(v)}\m(i\gets +)/\m(v\gets -) }
        \\&\le
        \frac{\lambda\Delta(v)}{1+\lambda\Delta(v)}.
    \end{align*}
\end{proof}
The following lemma is useful for our recursive argument.
\begin{lemma}\label{lem:to-kappa-edge}
    Let $G=(V, E)$ be a graph such that $\Delta(G)\le \Delta$, $|E|\le s$, $i = (u, v) \in E$ be a pendant edge such that $\Delta(u)=1$
    and $j\in E(v)$ be an edge other than $i$.
    Then
    \begin{enumerate}
        \item 
        $\We{\mu_{E-i; \lambda, G}^{i\gets +}, \mu_{E-i; \lambda, G}^{\substack{v\gets -\\i\gets -}}} = 0$;
        \item 
        $\We{\mu_{E-i; \lambda, G}^{i\gets +}, \mu_{E-i; \lambda, G}^{\substack{j\gets +\\i\gets -}}} \le \kappa_{s-1; \Delta, \lambda} + 1$.
    \end{enumerate}
\end{lemma}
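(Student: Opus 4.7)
The plan is to express each distribution as a Gibbs distribution on a carefully chosen subgraph via Lemma~\ref{lem:pinning-deletion}, and, for part~2, to introduce an auxiliary pendant edge so that the resulting Wasserstein distance fits Definition~\ref{def:kappa}. Throughout, set $G_A := G - E_G(v)$; since $\Delta(u) = 1$, we have $N(i) = E_G(v)\setminus\{i\}$, so the third identity of Lemma~\ref{lem:pinning-deletion} gives $\mu_{E-i;\lambda,G}^{i\gets +} = \mu_{E(G_A);\lambda,G_A}$.

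For part~1, the first identity of Lemma~\ref{lem:pinning-deletion} gives $\mu_{E-i;\lambda,G}^{i\gets -} = \mu_{E-i;\lambda,G-i}$, and further conditioning on $v\gets -$ forbids every edge of $E_{G-i}(v) = E_G(v)\setminus\{i\}$; iterating the pinning-deletion equivalence over these edges once again produces $\mu_{E(G_A);\lambda,G_A}$. The two distributions therefore coincide, so their Wasserstein distance is $0$.

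For part~2, write $j=(v,w)$ and $G_B := G_A - E_{G_A}(w) = G - E_G(v) - E_G(w)$. A similar application of Lemma~\ref{lem:pinning-deletion} shows that a sample from $\mu_{E-i;\lambda,G}^{\substack{j\gets + \\ i\gets -}}$ has the form $\{j\}\cup M'$ with $M' \sim \mu_{E(G_B);\lambda,G_B}$. Because $j\in E_G(v)$ lies outside $E(G_A)$, no matching of $G_A$ contains $j$; hence for every coupling $(M_A, M')$ we have $d_E(M_A,\{j\}\cup M') = 1 + d_E(M_A, M')$, and taking infima yields
\[
    \We{\mu_{E-i;\lambda,G}^{i\gets +}, \mu_{E-i;\lambda,G}^{\substack{j\gets + \\ i\gets -}}} = 1 + \We{\mu_{E(G_A);\lambda,G_A}, \mu_{E(G_B);\lambda,G_B}}.
\]

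It remains to bound $\We{\mu_{E(G_A);\lambda,G_A}, \mu_{E(G_B);\lambda,G_B}}$ by $\kappa^E_{s-1,\Delta,\lambda}$. Form an auxiliary graph $G_A'$ by adjoining a fresh vertex $u'$ to $G_A$ together with a pendant edge $i' = (u', w)$. Two applications of Lemma~\ref{lem:pinning-deletion} give $\mu_{E(G_A);\lambda,G_A'}^{i'\gets -} = \mu_{E(G_A);\lambda,G_A}$ and $\mu_{E(G_A);\lambda,G_A'}^{i'\gets +} = \mu_{E(G_B);\lambda,G_B}$, since forcing $i'$ to be matched additionally removes the edges of $E_{G_A}(w)$. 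The edge $i'$ is pendant with $\Delta(u')=1$, and since $i\neq j$ are both incident to $v$ we have $\Delta_G(v)\geq 2$, so $|E(G_A)|\leq s-2$ and $|E(G_A')|\leq s-1$; moreover $\Delta_{G_A'}(w) = \Delta_G(w) \leq \Delta$. Definition~\ref{def:kappa} immediately yields the desired inequality. The main subtlety lies in this auxiliary construction: one must verify, using $\Delta_G(v)\geq 2$, that $|E(G_A')|\leq s-1$, which is exactly what permits invoking $\kappa^E_{s-1}$ instead of $\kappa^E_s$.
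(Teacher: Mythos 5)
Your proof is correct and, at its core, follows the same idea as the paper's, but the organization and final construction differ enough to be worth noting. The paper proves part~2 by a four-term triangle-inequality chain: it first trims $j$ from the scope, removes the redundant condition, and isolates the term $\We{\mu_{E-i-j;\lambda,G}^{j\gets-,i\gets+},\mu_{E-i-j;\lambda,G}^{j\gets+,i\gets-}}$ (the remaining two contribute $0$ and $1$). It then argues that all edges of $E(v)\setminus\{i,j\}$ are unmatched under both conditionings, passes to the reduced graph $G-(E(v)\setminus\{j\})$, and observes that $j$ is pendant there with at most $s-1$ edges, so the definition of $\kappa^E_{s-1,\Delta,\lambda}$ applies. (The paper's displayed final term has notational slips — the superscripts should be $j\gets\pm$ rather than $i\gets\pm$, and the intended scope is $E-(E(v)\setminus\{i,j\})$ before the last deletion — but the intent is as described.) You instead directly identify $\mu_{E-i;\lambda,G}^{i\gets+}$ with the Gibbs distribution on $G_A = G - E(v)$ and $\mu_{E-i;\lambda,G}^{j\gets+,i\gets-}$ with $\{j\}$ plus a sample from $G_B = G - E(v) - E(w)$, deduce the exact identity $\We{\cdot,\cdot} = 1 + \We{\mu_{E(G_A);\lambda,G_A},\mu_{E(G_B);\lambda,G_B}}$ via a bijection on couplings, and then construct the auxiliary graph $G_A'$ by attaching a fresh pendant edge $i' = (u',w)$ so that $i'\gets-$ and $i'\gets+$ reproduce the two Gibbs distributions. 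Your $G_A'$ with pendant edge $i'$ is isomorphic to the paper's $G-(E(v)\setminus\{j\})$ with pendant edge $j$ (your fresh vertex $u'$ plays the role of $v$, which is isolated in $G_A$), so both routes land in the same invocation of the definition of $\kappa$. Your version is arguably more transparent — it avoids the triangle-inequality bookkeeping and makes the size and degree checks ($|E(G_A')|\le s-1$ via $\Delta_G(v)\ge 2$, and $\Delta(G_A')\le\Delta$ via $\Delta_{G_A'}(w)=\Delta_G(w)$) explicit, which the paper leaves implicit.
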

\begin{proof}
    By \Cref{lem:pinning-deletion}, we have
    \[
    \We{\mu_{E-i; \lambda, G}^{i\gets +}, \mu_{E-i; \lambda, G}^{\substack{v\gets -\\i\gets +}}}
    =
    \We{\mu_{E\setminus E(v); \lambda, G\setminus E(V)},
        \mu_{E\setminus E(v); \lambda, G\setminus E(v)}}
    =0,
    \]
    proving the first part of the statement.

    For the second part, first note that
    \newcommand{\mupm}{\mu_{E-i; \lambda, G}^{\substack{j\gets -\\i\gets +}}}
    \newcommand{\mump}{\mu_{E-i; \lambda, G}^{\substack{j\gets +\\i\gets -}}}
    \[
    \We{\mu_{E-i; \lambda, G}^{i\gets +}, \mu_{E-i-j; \lambda, G}^{i\gets +}} = 0,
    \quad
    \We{\mu_{E-i-j; \lambda, G}^{i\gets +}, \mu_{E-i-j; \lambda, G}^{\substack{j\gets - \\ i\gets +}}} = 0,
    \quad
    \We{\mu_{E-i-j; \lambda, G}^{\substack{j\gets +\\i\gets -}}, \mu_{E-i; \lambda, G}^{\substack{j\gets +\\i\gets -}}} = 1.
    \]
    An application of triangle inequality gives
    \begin{align*}
    \We{\mu_{E-i; \lambda, G}^{i\gets +},
        \mu_{E-i; \lambda, G}^{\substack{j\gets + \\ i\gets -}}}
    &\le
    \We{\mu_{E-i  ; \lambda, G}^{i\gets +},
        \mu_{E-i-j; \lambda, G}^{i\gets +}}
    +
    \We{\mu_{E-i-j; \lambda, G}^{                    i\gets + },
        \mu_{E-i-j; \lambda, G}^{\substack{j\gets -\\i\gets +}}}
    \\&\phantom{=}+
    \We{\mu_{E-i-j; \lambda, G}^{\substack{j\gets -\\i\gets +}},
        \mu_{E-i-j; \lambda, G}^{\substack{j\gets +\\i\gets -}}}
    +
    \We{\mu_{E-i-j; \lambda, G}^{\substack{j\gets +\\i\gets -}},
        \mu_{E-i  ; \lambda, G}^{\substack{j\gets +\\i\gets -}}}
    \\&=
    0+0+
    \We{\mu_{E-i-j; \lambda, G}^{\substack{j\gets -\\i\gets +}},
        \mu_{E-i-j; \lambda, G}^{\substack{j\gets +\\i\gets -}}}
    +1.
    \end{align*}
    Notice that all edges in $E(v) - j$ have no chance to be matched in both $\mupm$ and $\mump$, so we may
    remove $E(i)-j$ from the scope to get
    \[
    \We{\mu_{E-i; \lambda, G}^{i\gets +},
        \mu_{E-i; \lambda, G}^{\substack{j\gets + \\ i\gets -}}}
    \le
    \We{\mu_{E-(E(v)-j); \lambda, G-(E(v)-j)}^{                    i\gets + },
        \mu_{E-(E(v)-j); \lambda, G-(E(v)-j)}^{                    i\gets - }}
    +1.
    \]
    Notice that $j$ is pendant in $E-(E(v)-j)$, and hence this distance is bounded from above by $\kappa_{s-1,\lambda,\Delta}$, proving the second part of the statement.
\end{proof}

With \Cref{lem:marginal-bound,lem:to-kappa-edge}, we can prove \Cref{lem:kappa-recursion}.
\begin{proof}[Proof of \Cref{lem:kappa-recursion}]
    Let $G=(V, E)$ be a graph and $i=(u, v)\in E$ be an pendant edge that fits the definition of $\kappa^E_{s,\Delta,\lambda}$.
    We may assume that $\Delta(u)=1$, then by \Cref{lem:expansion} we have
    \begin{align*}
        \mp - \mm
        &=
        \mp - \se\pmp\mmp -\pmm\mmm
        \\&=
        \se \pmp \left( \mp - \mmp \right) \\&\phantom{=}+ \pmm \left( \mp - \mmm \right).
    \end{align*}
    By \Cref{lem:wd-sum}, we have
    \begin{align*}
        \We{\mp , \mm}
        &\le
        \se \pmp \We{ \mp , \mmp} \\&\phantom{=}+ \pmm \We{\mp - \mmm}.
    \end{align*}
    By \Cref{lem:to-kappa-edge,lem:marginal-bound}, we have
    \begin{align*}
        \We{\mp , \mm}
        &\le
        \se \pmp\left(\ke{s-1} + 1\right)
        \\&=\m(v\gets +)\left(\ke{s-1} + 1\right)
        \\&\le \frac{\lambda\Delta}{1+\lambda\Delta}\left(\ke{s-1} + 1\right).
    \end{align*}
\end{proof}

\subsubsection{Sensitivity with Respect to Edge Deletion}\label{subsubsec:deletion}
\begin{proof}[Proof of \Cref{thm:sensitivity}]
    We have
    \begin{align*}
    \megp{E}{G}{} - \megp{E}{G}{i\gets -}
    &=
    \megp E G {} (i\gets +)
    \megp E G    {i\gets +}
    +
    \megp E G {} (i\gets -)
    \left(
    \megp E G    {i\gets -}
    -
    \megp{E}{G}{i\gets -}
    \right)
    \\&=
    \megp E G {} (i\gets +)
    \left(
    \megp E G    {i\gets +}
    -
    \megp E G    {i\gets -}
    \right).
    \end{align*}
    Then by \Cref{lem:wd-sum,lem:pinning}, we have
    \begin{align*}
        \We{\mu_{E; \lambda, G}, \mu_{E; \lambda, G-i}}
        & = \We{
        \megp E G {},
        \megp E G {i\gets -}
        } \\
        & \le
        \megp E G {} (i\gets +)
        \We{
        \megp E G    {i\gets +}
        ,
        \megp E G    {i\gets -}
        }
        \\&\le
        \We{
        \megp E G    {i\gets +}
        ,
        \megp E G    {i\gets -}
        }
        \\&\le 1 + 2\lambda\Delta.
    \end{align*}
\end{proof}

\subsection{Polynomial-Time Algorithm}\label{subsec:matching-polynomial-time}
There are two different bounds for the mixing time of the Glauber dynamics for $\mu$.
The first one is near-linear with the number of edges, while the second one is near-linear on the parameter $\lambda$.
\begin{lemma}[{\cite[Theorem 1.1]{chen2024fast}}]\label{lem:glauber-edge-time}
    Denote the (1-site) Glauber dynamics of $\m$ on edges by $\mathcal P$, then we have
    \begin{align*}
        t_{\mathrm{mix}}(\mathcal P, \delta) \le 
        \exp\left(O(\lambda^3\Delta^3 \log(\lambda^2\Delta^3))\right) m\left(\log n + \log \frac{1}{\delta}\right).
    \end{align*}
\end{lemma}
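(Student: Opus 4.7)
The plan is to derive this mixing-time bound by combining the edge-sensitivity estimate of \Cref{thm:sensitivity} with the spectral/entropic-independence framework developed in \cite{chen2024fast}. The main work is not the development of the framework itself, which is invoked as a black box, but rather the verification of its hypotheses for the monomer--dimer model and the careful bookkeeping of constants through the pipeline.

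First, I would promote the Wasserstein bound of \Cref{lem:pinning} into a uniform bound on the pairwise influence matrix. For any two distinct edges $i,j$, the influence $|\mu_{E;G}^{i\gets +}(j\gets +) - \mu_{E;G}^{i\gets -}(j\gets +)|$ is obtained by testing the $1$-Lipschitz indicator $\mathbf 1[j\in M]$ (Lipschitz under $d_E$) against the signed measure $\mu^{i\gets +} - \mu^{i\gets -}$, so by Kantorovich--Rubinstein duality (\Cref{lem:duality}) its row-sum over $j$ is at most $\We{\mu_{E;G}^{i\gets +}, \mu_{E;G}^{i\gets -}} \le 1 + 2\lambda\Delta$. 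Crucially, by the pinning-deletion equivalence (\Cref{lem:pinning-deletion}), every Gibbs distribution obtained by further pinnings is itself the Gibbs distribution on an induced sub-instance with the same parameter $\lambda$ and maximum degree at most $\Delta$. The same row-sum bound therefore holds under every conditional distribution, yielding uniform \emph{spectral independence} with constant $\eta = O(\lambda\Delta)$.

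Next, I would establish the \emph{marginal-boundedness} condition required by the framework. The proof of \Cref{lem:marginal-bound} shows that for any edge $i=(u,v)$ one has $\mu(i\gets +)/\mu(v\gets -)\le \lambda$; combined with the bound $\mu(v\gets -)\ge 1/(1+\lambda\Delta)$, this forces both $\mu(i\gets +)$ and $\mu(i\gets -)$ to be bounded below by $1/\mathrm{poly}(\lambda\Delta)$. Applying \Cref{lem:pinning-deletion} once more, this lower bound is preserved under arbitrary pinnings, so we obtain $b$-marginal-boundedness with $b^{-1} = \mathrm{poly}(\lambda\Delta)$ on every conditional instance.

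Finally, I would feed the uniform spectral independence and marginal boundedness into the main theorem of \cite{chen2024fast}, which converts them, via a localization-scheme / block-factorization analysis, into an approximate tensorization of entropy and hence a modified log-Sobolev inequality for the edge Glauber dynamics with constant $\rho \ge m^{-1}\exp(-O(\eta^3 b^{-3}\log(\eta^2 b^{-3})))$. Substituting $\eta = O(\lambda\Delta)$ and $b^{-1} = O(\lambda\Delta)$ yields $\rho^{-1}\le m\exp(O(\lambda^3\Delta^3\log(\lambda^2\Delta^3)))$, and the standard MLSI-to-mixing conversion $t_{\mathrm{mix}}(\delta)\le \rho^{-1}(\log\log \mu_{\min}^{-1} + \log(1/\delta))$, together with $\log \mu_{\min}^{-1} = O(n(\log n + \log\lambda))$, gives the claimed bound. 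The main obstacle is bookkeeping rather than conceptual: one must verify that the constants inside the exponent propagate in exactly the form $\lambda^3\Delta^3\log(\lambda^2\Delta^3)$, which requires tracking how $\eta$ and $b^{-1}$ combine through each level of the localization scheme — a delicate but essentially mechanical task once the uniform hypotheses from the previous two steps are in hand.
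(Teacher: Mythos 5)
The statement you are asked to prove is not actually proved in the paper at all: it is imported verbatim from \cite{chen2024fast} as Theorem~1.1, with the authors adding only the remark (in the related-work section) that ``based on our own calculation, the precise mixing time is'' the stated expression. So there is no internal proof to compare against. Your reconstruction is therefore, by necessity, a different route from the paper's, and it is worth assessing on its own terms.

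Your high-level skeleton is sound and is plausibly how \cite{chen2024fast} actually argues: a recursive-coupling bound on the pairwise Wasserstein distance (which this paper re-derives as \Cref{lem:pinning}), propagated to all pinnings via \Cref{lem:pinning-deletion}, gives a uniform bound on the row sums of the influence matrix and hence spectral independence with $\eta = O(\lambda\Delta)$; a crude marginal lower bound then feeds into an SI-plus-boundedness-to-mixing black box. The paper itself remarks that its recursive coupling is ``similar to that used in \cite{chen2023near,chen2024fast}'', so conceptually you are tracking the right argument.

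The gap is in the final step, and it is not merely cosmetic. You posit the conversion $\rho^{-1}\le m\exp(O(\eta^3 b^{-3}\log(\eta^2 b^{-3})))$ and plug in $\eta = O(\lambda\Delta)$ and $b^{-1} = O(\lambda\Delta)$. That substitution yields $\exp(O(\lambda^6\Delta^6\log(\lambda^5\Delta^5)))$, which is qualitatively worse than the target $\exp(O(\lambda^3\Delta^3\log(\lambda^2\Delta^3)))$ --- the powers of $\lambda$ and $\Delta$ are doubled, not just off by constants. So either the SI constant you should be feeding in is smaller (which your Lemma~\ref{lem:pinning}-based derivation does not give), or the marginal boundedness enters only logarithmically, or the template formula you guessed for the black box is wrong. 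As written, the proposal therefore does not establish the stated bound, and the ``delicate but essentially mechanical'' bookkeeping you defer is in fact where the exponent $\lambda^3\Delta^3\log(\lambda^2\Delta^3)$ would have to come from; you would need the actual form of the theorem in \cite{chen2024fast}, not a generic SI-to-MLSI template, to close it.
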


\begin{lemma}[{\cite[Proposition 12.4]{jerrum1989approximating}}]\label{lem:glauber-edge-time-jerrum}
    Denote the (1-site) Glauber dynamics of $\mu_{E;\lambda,G}$ on edges by $\mathcal P$, then we have
    \[
        t_{\mathrm{mix}}(\mathcal P, \delta) =
        O\left(nm \max\{1,\lambda\} \left(n(\log n + \log \max\{1,\lambda\}) + \log \frac{1}{\delta}\right)\right).
    \]
\end{lemma}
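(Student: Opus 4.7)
The plan is to apply the canonical-paths technique of Sinclair, tailored to the monomer--dimer chain by Jerrum and Sinclair in \cite{jerrum1989approximating}. First note that $\mathcal P$ is irreducible on $\mathcal M(G)$ (the empty matching is reachable from any state by single-edge removals) and reversible with respect to $\mu_{E;\lambda, G}$ directly from \Cref{def:glauber-dynamics}, so $\mu_{E;\lambda,G}$ is the unique stationary distribution. It thus suffices to lower-bound the spectral gap $\mathrm{gap}(\mathcal P)$ and apply the standard inequality $t_{\mathrm{mix}}(\mathcal P,\delta) \le \mathrm{gap}(\mathcal P)^{-1}\bigl(\log(1/\pi_{\min}) + \log(1/\delta)\bigr)$, noting that $\log(1/\pi_{\min}) = O(n\log(n\max\{1,\lambda\}))$ since $|\mathcal M(G)| \le n!$ and each weight $\lambda^{|M|}$ lies between $1$ and $\max\{1,\lambda\}^n$.

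For every ordered pair $(I,F) \in \mathcal M(G)^2$, I would build a canonical path in the transition graph of $\mathcal P$ as follows. Fix a total order on the vertices once and for all. The symmetric difference $I\triangle F$ decomposes into vertex-disjoint alternating paths and even-length alternating cycles; order these components by their smallest vertex, and for each one pick a canonical starting edge/endpoint. Process the components in order, unwinding each by single-edge updates: along an alternating path alternately remove an $I$-edge and insert an $F$-edge starting from the canonical endpoint, and for an even cycle first remove a designated $I$-edge to open it into a path and then unwind that path. The resulting transition-graph walk from $I$ to $F$ has length at most $n$.

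The crux is bounding the congestion $\rho$ through an arbitrary transition $t=(M,M')$ via the classical injective encoding of Jerrum and Sinclair: given $(I,F)$ whose canonical path uses $t$, one constructs from $M$, $I\triangle F$, and the unwinding state a near-matching $\eta_t(I,F)$ that, together with the endpoints of $t$, uniquely determines $(I,F)$. A short case analysis on the type of the current unwinding step shows that $\eta_t(I,F)$ is a matching or a matching plus one extra edge, so its weight is within a factor $\max\{1,\lambda\}$ of $\lambda^{|I|+|F|-|M|}$. Summing $\pi(I)\pi(F)|\gamma_{IF}|$ over all pairs routed through $t$ and dividing by $\pi(M)P(M,M') = \Theta\bigl(\lambda^{|M|+1}/(mZ(1+\lambda))\bigr)$ (using path length at most $n$) yields the congestion bound $\rho = O(nm\max\{1,\lambda\})$; Sinclair's theorem then gives $\mathrm{gap}(\mathcal P) \ge 1/\rho$, and combining with the mixing-time conversion above produces the claimed bound. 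The most delicate step is setting up the canonical ordering so that the in-progress component's ``break point'' is recoverable from $(t,\eta_t(I,F))$, and carrying the single-edge weight discrepancy through the weight accounting without losing extra factors of $n$ or $\lambda$.
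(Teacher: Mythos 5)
The paper cites this lemma from the literature as a black box and does not contain its own proof, so there is no in-paper argument to compare against. Your proposal reconstructs the canonical-paths argument that the cited reference uses, and you have the route exactly right: decompose $I\triangle F$ into alternating paths and even cycles, order them canonically and unwind component by component to obtain a walk of length at most $n$ in the transition graph, encode each pair $(I,F)$ routed through a transition $t=(M,M')$ by a near-matching $\eta_t(I,F)$, bound the congestion $\rho$, and convert via Sinclair's spectral-gap estimate and the standard $t_{\mathrm{mix}}\le \mathrm{gap}^{-1}(\log(1/\pi_{\min})+\log(1/\delta))$ conversion.

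Two points worth tightening. First, your claim that each weight $\lambda^{|M|}$ lies between $1$ and $\max\{1,\lambda\}^n$ fails when $\lambda<1$: there $\lambda^{|M|}$ can be as small as $\lambda^{n/2}$, and $\log(1/\pi_{\min})$ picks up an extra $O(n\log(1/\lambda))$ term that the stated bound (which has $\log\max\{1,\lambda\}=0$ in that regime) does not reflect. This is arguably a defect of the lemma statement as quoted rather than of your argument; in any case, the paper only ever instantiates it with $\lambda=\Delta^{\Theta(1/\varepsilon)}\ge 1$, so there is no downstream consequence — but if you want your proof to match the statement as written you should either restrict to $\lambda\ge 1$ or carry the $\log(1/\lambda)$ term. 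Second, to land on a single factor of $\max\{1,\lambda\}$ in the congestion, the weight accounting has to give $\pi(I)\pi(F)\le C\cdot \pi(M)\pi(\eta_t)$ with $C=O(1)$ (not $C=O(\max\{1,\lambda\})$), since dividing by $Q(t)=\Theta(\pi(M)/(m(1+\lambda)))$ already contributes the factor $1+\lambda$; this requires choosing $\eta_t$ so that $|I|+|F|\le \max\{|M|,|M'|\}+|\eta_t|$, which is exactly the ``delicate step'' you flag and is handled carefully in the reference.
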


\begin{proof}[Proof of \Cref{thm:bounded-degree-algorithm}]
    Consider an algorithm that samples a matching by using the $1$-site Glauber dynamics $\mathcal P$ in \Cref{lem:glauber-edge-time} with the error parameter $\delta =1/n^2$ and $\lambda = 2(\varepsilon/2)^{-1}(4\Delta)^{1/\varepsilon}$, and outputs the sampled matching.
    By \Cref{thm:matching-approximation}, its approximation ratio is at least
    \[
        \left(1 - \frac{1}{n^2}\right)\left(1-\frac{\varepsilon}{2}\right) > 1-\varepsilon.
    \]
    By \Cref{thm:sensitivity}, the sensitivity is
    \[
        \left(1 - \frac{1}{n^2}\right) \cdot O(\lambda\Delta ) + \frac{1}{n^2} \cdot n
        = \Delta^{O(1/\varepsilon)}.
    \]
    By \Cref{def:glauber-dynamics}, the $1$-site Glauber dynamics of $\m$ is a Markov chain that does the transition as the following with current state $x\in\mathcal M(G)$:
    \begin{enumerate}
        \item choose $i\in E$ uniformly at random;
        \item transit to $x\mathbin \triangle \{i\}$ with probability $\frac{\m(x\mathbin\triangle \{i\})}{\m(x\mathbin\triangle \{i\}) +\m(x)}$, and stay at $x$ otherwise.
    \end{enumerate}
    Notice that
    \begin{align*}
        \frac{\m(x\mathbin \triangle \{i\})}{\m(x\mathbin \triangle \{i\}) +\m(x)}
        =
        \begin{cases}
           \frac{1}{1+\lambda}
           &
           i\in x
           \\
           \frac{\lambda}{1+\lambda}
           &
           i\notin x, x\cup \{i\}\in\mathcal M(G)
           \\
           0
           &
           i\notin x, x\cup \{i\}\notin\mathcal M(G)
        \end{cases}.
    \end{align*}
    By maintaining a list of matched vertices, we can check whether $x\cup \{i\}\in\mathcal M(G)$ in constant time,
    and hence each step of the Glauber dynamics can be run in $O(1)$ time.
    Moreover, we need $O(n+m)$ time to read the input graph.
    By \Cref{lem:glauber-edge-time},
    the time complexity is 
    \begin{align*}
        \exp(\Delta^{O(1/\varepsilon)})m\log n + O(m+n) = \exp(\Delta^{O(1/\varepsilon)})m\log n + O(n).
    \end{align*}
    On the other hand, \Cref{lem:glauber-edge-time-jerrum} gives
    the time complexity
    \begin{align*}
            nm 
            \Delta^{O(1/\eps)}
            \left(
            n
            (\log n + \log \Delta^{1/\eps})
            + \log n
            \right)
            +
            O(m+n)
            =
            \frac{1}{\eps}
            \Delta^{O(1/\eps)}
            n^2m\log n.
    \end{align*}
    
\end{proof}

\section{Fast Algorithms for Planar and Bipartite Graphs}\label{sec:fast-algorithms}
In this section, we prove the following result: the existence of an algorithm that approximates the number of perfect matchings for a class $\mathcal{H}$ of graphs implies an algorithm that approximately samples from the Gibbs distribution over matchings in time $\mathrm{poly}(nm) \cdot \log \lambda$. 
This significantly improves the dependence on $\lambda$, reducing it from exponential (as in \Cref{thm:bounded-degree-algorithm}) to logarithmic.
\begin{theorem}\label{thm:perfect-reduction}
    Let $\mathcal H$ be a subgraph-closed class of graphs and $Q_{\delta'}$ be a parametrized algorithm
    that counts the number of perfect matchings with multiplicative error $\delta$
    for graphs in $\mathcal H$.
    Given the error parameter $\delta>0$ and the parameter of Gibbs distribution $\lambda>0$,
    let \[t\defeq \frac{n^2}{2}\left( m\log(1+\lambda) + \log\frac{3}{\delta}\right), \quad \delta'\defeq \frac{\log(1+\frac\delta 3)}{t} = \frac{\delta}{O\left(n^2 \left(m\log\lambda + \log\frac{1}{\delta}  \right) \right)}.\]
    Then, there exists an randomized algorithm $A$ that samples from the Gibbs
    distributions over matchings of graphs in $\mathcal H$ such that for any graph $G=(V, E)\in \mathcal H$, its output is within TV distance $\delta$ of $\mu_{E;\lambda,G}$, and makes $O\left(n^2(m\log \lambda+\log\frac1\delta)\right)$ calls
    of $Q_{\delta'}$ and $m$ calls of $Q_{\frac{\delta}{3m}}$
    on subgraphs of $G$.
\end{theorem}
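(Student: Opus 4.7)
The plan is to split the sampling of $\mu_{E;\lambda,G}$ into two stages mirroring the two reductions sketched in Section~1.1. First, sample the matched vertex set $U=V(M)$ from the vertex marginal $\mu_{V;\lambda,G}$; second, sample a uniformly random perfect matching of $G[U]$. Because all perfect matchings of $G[U]$ share the Gibbs weight $\lambda^{|U|/2}$, conditioning $\mu_{E;\lambda,G}$ on $V(M)=U$ is exactly uniform on the perfect matchings of $G[U]$, so this decomposition is lossless.

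For the first stage, I would simulate the $2$-site Glauber dynamics $\mathcal{P}_G$ on $\mu_{V;\lambda,G}$ introduced in the overview. Since $\mu_{V;\lambda,G}(U)$ is proportional to $\lambda^{|U|/2}$ times the number of perfect matchings of $G[U]$, every transition probability is a ratio of perfect-matching counts on induced subgraphs of $G$ and can therefore be evaluated using $O(1)$ calls to $Q_{\delta'}$. Starting from $U_0=\emptyset$ and running $t=\tfrac{n^2}{2}\bigl(m\log(1+\lambda)+\log(3/\delta)\bigr)$ steps with the exact transition kernel gives, by the mixing-time bound stated in the overview, a distribution within TV distance $\delta/3$ of $\mu_{V;\lambda,G}$.

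The main technical hurdle is controlling how the multiplicative counting errors compound along the Glauber chain. If each oracle response lies within factor $e^{\pm\delta'}$ of the truth, then each approximate transition kernel entry is perturbed by a factor $e^{O(\delta')}$, and a standard coupling/chain-rule argument shows that the approximate chain differs from the exact chain in TV distance by at most $(1+\delta')^t-1$ after $t$ steps. The calibration $\delta'=\log(1+\delta/3)/t$ is chosen precisely so that $(1+\delta')^t\le 1+\delta/3$, bounding this additional error by $\delta/3$. Combining the two sources, the sampled $U$ is within TV distance $2\delta/3$ of $\mu_{V;\lambda,G}$.

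For the second stage, given $U$, I would apply the classical reduction from exact sampling to approximate counting on $G[U]$: fix an ordering of the $m$ edges and, for each edge $e$, use one call to $Q_{\delta/(3m)}$ on the appropriate residual subgraph to decide whether $e$ belongs to the output matching. Each call perturbs the corresponding conditional inclusion probability by a factor $e^{\pm\delta/(3m)}$, so telescoping across the $m$ edges yields a matching distribution within TV distance $\delta/3$ of the uniform distribution on perfect matchings of $G[U]$. Summing the two stages gives total TV error at most $\delta$, and the oracle call counts, namely $O\bigl(n^2(m\log\lambda+\log(1/\delta))\bigr)$ to $Q_{\delta'}$ and $m$ to $Q_{\delta/(3m)}$, are exactly what the theorem claims.
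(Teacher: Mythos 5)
Your proposal matches the paper's proof closely: the same two-stage decomposition (sample the matched-vertex set $U$ from $\mu_{V;\lambda,G}$ via the $2$-site Glauber dynamics, then sample a uniform perfect matching of $G[U]$), the same observation that each transition probability is a ratio of perfect-matching counts evaluable by $Q_{\delta'}$, the same calibration of $\delta'$ so that $(1+\delta')^t\le 1+\delta/3$, and the same $m$-call counting-to-sampling reduction for the final edge step. The one ingredient you invoke without establishing is the $t=O(n^2(m\log\lambda+\log(1/\delta)))$ mixing-time bound itself; this is not an off-the-shelf fact but the central new lemma of this section, proved in the paper by a recursive coupling (analogous to the edge-case recursion of Section~3, but tracking the Hamming distance on matched \emph{vertices}) which shows $\kappa^V_s\le 1$ and hence that $\mu_{V;\lambda,G}$ is $2$-spectrally independent regardless of $\lambda$ and $\Delta$; the stated mixing time then follows from the spectral-gap-to-mixing-time conversion and the bound $\mu_{\min}\ge(1+\lambda)^{-m}$. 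Your sketch would be complete if it included that recursive coupling argument rather than citing the overview.
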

As a corollary, if the running time of $Q_{\delta'}$ is $\mathrm{poly}(nm,{\delta'}^{-1})$, the running time of $A$ is also $\mathrm{poly}(nm,\delta^{-1})$,
and $A$ is a fully-polynomial time randomized approximation scheme (FPRAS).
\Cref{thm:perfect-reduction} is proved in \Cref{subsec:2-vertex-Glauber,subsec:mixing-time-of-gp}, and we use it to first prove \Cref{thm:planar-algorithm,thm:bipartite-algorithm}.

We first consider the planar case.
Although it is $\# P$-hard to count the number of perfect matchings on general graphs~\cite{valiant1979complexity},
it can be done on planar graphs in polynomial time:
\begin{lemma}[\cite{kasteleyn1963dimer}]\label{lem:fkt}
    There is an algorithm that computes the number of perfect matchings of a planar graph $G=(V, E)$ in time $O(n^3)$.
\end{lemma}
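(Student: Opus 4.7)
The plan is to use the classical Fisher--Kasteleyn--Temperley (FKT) algorithm, whose correctness and $O(n^3)$ running time can be established in four steps.

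First, I would fix a planar embedding of $G$ (computable in $O(n)$ time by any standard planarity algorithm) and construct a \emph{Pfaffian orientation}: an orientation of the edges such that, for every bounded face, the number of edges oriented clockwise along the facial walk is odd. Kasteleyn's theorem guarantees that such an orientation exists for every planar graph and, more importantly, can be built constructively: orient the edges of a spanning tree $T$ arbitrarily, then process the remaining edges one face at a time (say, in the order given by a depth-first traversal of the dual tree) and fix the orientation of the unique unoriented edge on each face so that the odd-clockwise parity is satisfied. This phase runs in $O(n)$ time since a planar graph has $O(n)$ edges and faces.

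Second, I would form the skew-symmetric $n\times n$ matrix $A$ with $A_{uv}=+1$ if the oriented edge goes from $u$ to $v$, $A_{uv}=-1$ if from $v$ to $u$, and $A_{uv}=0$ if $uv\notin E$. The key combinatorial identity is that the Pfaffian
\[
    \mathrm{Pf}(A) \;=\; \sum_{M \text{ perfect matching}} \mathrm{sgn}(M)\prod_{uv\in M} A_{uv}
\]
has each nonzero summand equal to $\pm 1$, and Kasteleyn's lemma asserts that the Pfaffian orientation makes all nonzero summands share the same sign. Consequently $\lvert \mathrm{Pf}(A)\rvert$ equals the number of perfect matchings of $G$.

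Third, I would compute $\mathrm{Pf}(A)$ using the identity $\mathrm{Pf}(A)^2=\det(A)$ for skew-symmetric $A$: run Gaussian elimination to obtain $\det(A)$ in $O(n^3)$ arithmetic operations and then take a square root. (Alternatively, one can compute the Pfaffian directly in $O(n^3)$ via a skew-symmetric analogue of Gaussian elimination, which avoids the squaring altogether.) Output $\sqrt{\det(A)}$ as the number of perfect matchings.

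The main obstacle is the combinatorial lemma that a Pfaffian orientation forces all nonzero terms in the Pfaffian expansion to carry the same sign; this is where planarity is essential and is the heart of Kasteleyn's argument. I would reduce it to the statement that for any two perfect matchings $M_1,M_2$ their symmetric difference is a disjoint union of even cycles, each of which is \emph{oddly oriented} (odd number of clockwise edges) in a Pfaffian orientation, which in turn follows from the face-parity condition by an inclusion-exclusion over the faces enclosed by the cycle together with an Euler-type count of interior vertices. Everything else is straightforward bookkeeping and linear algebra, yielding the claimed $O(n^3)$ running time.
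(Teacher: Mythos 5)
The paper gives no proof of this lemma; it is a black-box citation to Kasteleyn's 1963 result. Your sketch correctly reconstructs the standard Fisher--Kasteleyn--Temperley argument — embed, build a Pfaffian orientation via a spanning tree and the dual tree, form the signed skew-symmetric adjacency matrix, and read off the number of perfect matchings as $\sqrt{\det A}$ in $O(n^3)$ time — which is precisely the content of the cited reference, so there is nothing to compare. One small precision worth keeping in mind if you were to write this out in full: the sign-consistency argument needs that each alternating cycle (the symmetric difference of two perfect matchings) encloses an \emph{even} number of interior vertices, since these must be perfectly matched among themselves; your phrase ``Euler-type count of interior vertices'' gestures at exactly this, but it is the load-bearing observation that makes the face-parity condition transfer to arbitrary alternating cycles, so it deserves to be made explicit rather than folded into ``bookkeeping.''
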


\begin{proof}[Proof of \Cref{thm:planar-algorithm}]
    Consider the algorithm in \Cref{thm:perfect-reduction} with the error parameter $\delta = 1/n^2$
    and $\lambda = 2(\varepsilon/2)^{-1} (4\Delta)^{1/\varepsilon}$.
    Then by \Cref{thm:perfect-reduction,lem:fkt}, the running time of the algorithm is
    \begin{align*}
        O\left(\frac{1}{\varepsilon}n^5m\log\Delta\right).
    \end{align*}
    By \Cref{thm:matching-approximation}, its approximation ratio is at least
    \[
        \left(1 - \frac{1}{n^2}\right)\left(1-\frac{\varepsilon}{2}\right) > 1-\varepsilon.
    \]
    By \Cref{thm:sensitivity}, the sensitivity is
    \[
        \left(1 - \frac{1}{n^2}\right) \cdot O(\lambda\Delta ) + \frac{1}{n^2} \cdot n
        = \Delta^{O(\varepsilon^{-1})}.
    \]
\end{proof}
To handle the bipartite case, we use the following lemma that shows the existence of a fully-polynomial time randomized approximation scheme (FPRAS) for the number of perfect matchings in bipartite graphs.
\begin{lemma}[\cite{bezakova2008accelerating,jerrum1989approximating}]\label{lem:bipartite-perfect}
    For any $\delta>0$, there is an algorithm $Q_\delta$ that outputs the number of perfect matchings of a bipartite graph within multiplicative error $\delta$
    and runs in time $O(n^{7}\log^4n + \delta^{-2}n^{6} \log^{5}n)$.
\end{lemma}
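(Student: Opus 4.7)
The plan is to reduce the problem to approximating the permanent of a $0/1$ matrix via a classical equivalence, and then combine the standard self-reducibility reduction from counting to sampling with a rapidly mixing Markov chain on (near-)perfect matchings. Specifically, counting perfect matchings of a bipartite graph $G$ with bipartition $(U, W)$ equals $\mathrm{perm}(A_G)$, where $A_G \in \{0,1\}^{U \times W}$ is the biadjacency matrix, so it suffices to give an FPRAS for the permanent of a $0/1$ matrix.

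For the counting-to-sampling reduction, I would order the nonzero entries of $A_G$ as $e_1,\ldots,e_m$ and form a telescoping product
\[
\mathrm{perm}(A_G) \;=\; \mathrm{perm}(A_m) \cdot \prod_{i=0}^{m-1} \frac{\mathrm{perm}(A_i)}{\mathrm{perm}(A_{i+1})},
\]
where $A_i$ is obtained from $A_G$ by zeroing out the first $i$ of the $e_j$'s, and $A_m$ has a trivially computable permanent. Each ratio lies in a bounded range and can be estimated within multiplicative error $O(\delta / m)$ using $\tilde{O}(m / \delta^2)$ samples drawn (approximately) from a distribution that places mass proportional to $\prod_{e \in M} w(e)$ on each perfect matching $M$, for a suitable weight vector $w$ derived from $A_i$.

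The main obstacle, and the technical heart of the proof, is sampling matchings rapidly. The natural Glauber dynamics on perfect matchings alone is not irreducible for arbitrary bipartite graphs, so I would use the Jerrum--Sinclair--Vigoda chain on the union of perfect matchings and near-perfect matchings (matchings leaving exactly two vertices unmatched), equipping near-perfect states with ``hole weights'' chosen so that the stationary mass on perfect matchings is $1/\mathrm{poly}(n)$. Since the ideal hole weights involve sub-permanents that are themselves unknown, they are learned via a simulated-annealing cooling schedule of activities $\lambda^{(0)}, \lambda^{(1)}, \ldots$, with each successive sampler initialized from samples produced by the previous one.

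Rapid mixing at each temperature is established by the canonical-paths method: for every pair of (near-)perfect matchings $M, M'$, one routes a path through the symmetric difference $M \triangle M'$, decomposing it into alternating cycles and paths processed one at a time, and bounds edge congestion using an encoding argument that charges each transit to a single ``witness'' matching. The refined cooling schedule and chain design of \cite{bezakova2008accelerating} yield a per-chain mixing time of $\tilde{O}(n^6)$ and an annealing schedule of length $\tilde{O}(n)$, producing an $O(n^7 \log^4 n)$ warm-up phase to reach the final temperature, followed by the $O(\delta^{-2} n^6 \log^5 n)$ sampling cost needed to achieve relative error $\delta$ in the telescoping estimate. The hard part is the congestion bound, as the running-time improvement over the $O(n^{10})$ bound of JSV depends sensitively on the geometry of canonical paths and the warm-start analysis through the annealing schedule.
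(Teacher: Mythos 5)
This is a \emph{citation lemma}: the paper does not prove it, but imports it directly from \cite{bezakova2008accelerating,jerrum1989approximating}. Your sketch is therefore a reconstruction of the cited papers' argument rather than a comparison against anything in this paper, and on the whole it correctly identifies the key ingredients — the permanent reduction, the Jerrum--Sinclair--Vigoda chain on perfect and near-perfect matchings with hole weights, simulated annealing to learn the weights, canonical paths for mixing, and Bezáková et al.'s tighter cooling schedule as the source of the improved $O(n^7\log^4 n + \delta^{-2} n^6 \log^5 n)$ bound.

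One inaccuracy is worth flagging in the counting step. You describe an edge-by-edge telescoping product $\mathrm{perm}(A_G) = \mathrm{perm}(A_m)\prod_i \mathrm{perm}(A_i)/\mathrm{perm}(A_{i+1})$ in the style of the classical Jerrum--Valiant--Vazirani self-reducibility reduction. That is not how JSV and Bezáková et al. extract the count. In their algorithms the telescoping runs over the \emph{annealing schedule}: the partition function at the initial (high) temperature is trivially $n!$, and at each cooling step one estimates the ratio of successive partition functions $Z_{i+1}/Z_i$ from samples of the current chain; the final permanent estimate is $Z_0\prod_i Z_{i+1}/Z_i$. The hole weights themselves encode sub-permanent ratios and are updated during annealing, so counting and weight-learning are interleaved rather than separated into a sampling phase followed by an independent JVV reduction. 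Your edge-by-edge scheme could in principle be made to work, but it would need its own analysis and would not obviously reproduce the claimed time bound — in particular, the $O(n^7\log^4 n)$ term in the statement arises precisely from the length of the refined cooling schedule ($\tilde O(n)$ temperatures) times the per-temperature mixing cost ($\tilde O(n^6)$), a structure that is specific to the annealing-ratio telescoping.

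Beyond that, the sketch is at the right level of detail for a citation of this kind; filling in the canonical-paths congestion bound and the warm-start analysis through the schedule would be the bulk of a complete proof, but those belong to the cited papers and are not reproduced here.
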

\begin{proof}[Proof of \Cref{thm:bipartite-algorithm}]
    Consider the algorithm in \Cref{thm:perfect-reduction} with the error parameter $\delta = 1/n^2$
    Then by \Cref{thm:perfect-reduction,lem:bipartite-perfect}, we need to
    call $O\left(\frac{1}{\varepsilon}n^2m\log\Delta\right)$ times of $Q_{\delta'}$, where
    \[
    \delta' = 
    \frac{1}{O\left(\frac1\varepsilon n^4 \left(m\log\Delta + \log n\right)\right)},
    \]
    and $m$ times of $Q_{\frac{\delta}{3m}}$ on subgraphs of $G$.
    Overall, the running time is
    \begin{align*}
        O\left(\frac{1}{\varepsilon^3}n^{16}m^3\log^3\Delta\log^5n\right).
    \end{align*}
    By \Cref{thm:matching-approximation}, its approximation ratio is at least
    \[
        \left(1 - \frac{1}{n^2}\right)\left(1-\frac{\varepsilon}{2}\right) > 1-\varepsilon.
    \]
    By \Cref{thm:sensitivity}, the sensitivity is
    \[
        \left(1 - \frac{1}{n^2}\right) \cdot O(\lambda\Delta ) + \frac{1}{n^2} \cdot n
        = \Delta^{O(\varepsilon^{-1})}.
    \]
\end{proof}

\subsection{The \texorpdfstring{$2$}{2}-vertex Glauber Dynamics}\label{subsec:2-vertex-Glauber}
The key idea behind our algorithm is to sample the \emph{matched vertices} instead of sampling directly from the Gibbs distribution over matchings. We then use the counting algorithm—whose existence is guaranteed by the theorem—to sample the matched edges.
We start with defining the Gibbs distribution of matched vertices.
\begin{definition}[Gibbs distribution of matched vertices]\label{def:gibbs-vertex}
    Let $G=(V, E)$ be a graph, $\lambda>0$ be a parameter and $V'\subseteq V$ be a subset of vertices.
    We define the \emph{Gibbs distribution of vertices} as the distribution on $2^{V'}$ such that, for any $U\subseteq V'$, we have
    \begin{align*}
        \mu_{V';\lambda,G}(U)
        \defeq
        \frac1Z
        \sum_{\substack{M\in \mathcal M(G) \\ V(M)\cap V'=U}} \lambda^{|M|},
    \end{align*}
    where $Z\defeq \sum_{M\in\mathcal M(G)}\lambda^{|M|}$ is the partition function.
\end{definition}
Here we slightly abuse the notation to use $\mu$ for both the Gibbs distribution of matched edges and vertices,
and we distinguish them by whether the first parameter in the subscript is an edge or vertex set.

\newcommand{\gp}{\mathcal P_{G}}
\newcommand{\mv}{\mu_{V; \lambda, G}}
The following Markov process is what we use to sample from $\mu_{V;\lambda,G}$.
\begin{definition}[$2$-vertex Glauber dynamics]\label{def:glauber-vertex}
    Given a graph $G=(V,E)$, a parameter $\lambda > 0$ we define the \emph{$2$-vertex Glauber dynamics} $\gp$ as a Markov chain on $\set{U\subseteq V\mid \mv(U)> 0}\subseteq 2^V$ such that
    \begin{align*}
        \gp(U\to U')
        \defeq
        \begin{cases}
            \frac{2}{n(n-1)}
            \frac{\mv(U'')}{\mv(U)+\mv(U')}
            &
            \text{if }|U\mathbin \triangle U'| = 2,
            \\
            \sum_{U''\in 2^V, |U\mathbin \triangle U''|=2}
            \frac{2}{n(n-1)}
            \frac{\mv(U)}{\mv(U)+\mv(U'')}
            &
            \text{if }U=U',
            \\
            0
            &
            \text{otherwise}.
        \end{cases}
    \end{align*}
\end{definition}
Intuitively, for the current state $U\subseteq V$, the dynamics does the following.
\begin{enumerate}
    \item Uniformly choose a random pair of vertices $\set{v_1, v_2}$ from $V$.
    \item Let $U' =  U\mathbin \triangle \set{v_1, v_2}$, then transit to $U'$ with probability $\frac{\mv(U')}{\mv(U)+\mv(U')}$ and stay at $U$ otherwise.
\end{enumerate}
The reason for using the $2$-site Glauber dynamics instead of the $1$-site version is to ensure the connectivity of the chain.
We first verify that this chain actually converges to $\mv$.
\begin{lemma}
    For any graph $G=(V,E)$, the following detailed balance condition hold:
    \begin{enumerate}
        \item 
        $\gp$ is connected;
        \item For any $U, U'\in 2^V$:
        $\mv(U) \gp(U\to U') = \mv(U') \gp(U'\to U)$.
    \end{enumerate}
    As a corollary, $\gp$ converges to $\mv$ (see, e.g.,~\cite{levin2017markov}).
\end{lemma}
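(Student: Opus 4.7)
My plan is to establish the two parts of the lemma separately: the detailed balance is a direct computation, while connectivity is the substantive claim, provable by a \emph{peeling} argument that reduces any state in the support to $\emptyset$. The convergence corollary then follows from the standard convergence theorem for finite irreducible, aperiodic, reversible Markov chains, once aperiodicity is noted.

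For detailed balance, I would split into cases on $|U \triangle U'|$. When $|U \triangle U'|=2$, unwinding \Cref{def:glauber-vertex} yields
\[
\mv(U)\gp(U\to U')
=\frac{2}{n(n-1)}\cdot\frac{\mv(U)\mv(U')}{\mv(U)+\mv(U')},
\]
which is symmetric in $U,U'$. When $U=U'$ the identity is trivial, and when $|U\triangle U'|\notin\{0,2\}$ both sides vanish by definition.

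For connectivity, the key observation is that $\mv(U)>0$ if and only if $G[U]$ admits a perfect matching; in particular $\emptyset$ is always in the support. I would then show that every nonempty state $U$ in the support reaches $\emptyset$ in finitely many $\gp$-steps. Fix a perfect matching $M=\{e_1,\ldots,e_k\}$ on $G[U]$ and define $U_0:=U$ and $U_i:=U_{i-1}\setminus V(e_i)$ for $1\le i\le k$. Each $U_i$ is in the support because $\{e_{i+1},\ldots,e_k\}$ is a perfect matching on $G[U_i]$, and the step $U_{i-1}\to U_i$ is a legal $2$-vertex move of positive probability by \Cref{def:glauber-vertex}. Since this argument is reversible, any two states $U,U'$ in the support are connected via $\emptyset$, giving irreducibility.

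For the corollary, it only remains to note aperiodicity. From \Cref{def:glauber-vertex}, whenever $\mv(U)>0$ the self-loop probability is
\[
\gp(U\to U)=\sum_{\substack{U''\in 2^V\\|U\triangle U''|=2}}\frac{2}{n(n-1)}\cdot\frac{\mv(U)}{\mv(U)+\mv(U'')}>0,
\]
since every term in the sum is strictly positive. Combined with irreducibility and the detailed balance identity, the standard theorem then yields $\mu_0\gp^t\to\mv$ for any initial distribution $\mu_0$ supported on $\{U\subseteq V\mid \mv(U)>0\}$. I expect the peeling construction to be the only place requiring care; verifying that each $G[U_i]$ still admits a perfect matching is immediate from the choice of peeling order, so no real obstacle is anticipated.
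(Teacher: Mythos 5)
Your proof is correct. The detailed-balance computation matches what the paper says is a "direct calculation," and your connectivity argument is a legitimate variant of the paper's: the paper connects two arbitrary support states $U, U'$ directly by taking matchings $M, M'$ with $V(M)=U$, $V(M')=U'$ and first deleting the edges of $M\setminus M'$, then inserting those of $M'\setminus M$, one at a time; you instead peel any support state down to $\emptyset$ and route through $\emptyset$. Your peeling version is a special case of the paper's construction (take $M'=\emptyset$) and is arguably a bit cleaner since you never need to argue that intermediate edge sets remain matchings during an insertion phase — at the cost of a path through $\emptyset$ that is typically longer. You also explicitly verify aperiodicity via the positive self-loop probability, which the paper leaves implicit when invoking the convergence theorem; that is a small but genuine improvement in rigor. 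One minor wording caution: the self-loop sum is strictly positive because for each term the denominator $\mv(U)+\mv(U'')$ is positive (as $\mv(U)>0$ on the support) and the numerator $\mv(U)$ is positive — some terms may individually vanish only when $n(n-1)=0$, so the phrase "every term in the sum is strictly positive" is accurate for $n\ge 2$, which is the relevant case.
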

\begin{proof}
    We first prove the connectivity.
    For any $U, U'\in 2^V$, with non-zero probabilities in $\mv$,
    we need to find a sequence $U_0 = U, U_1, U_2,\ldots, U_k=U'\in 2^V$ such that
    $\gp(U_j\to U_{j+1}) > 0$ for all $0\le j \le k-1$.
    
    Since $\mv(U),\mv(U') > 0$, there exist $M, M'\in \mathcal M(G)$ such that $V(M)=U, V(M')=U'$.
    By deleting the edges in $M\setminus M'$ one by one then adding the edges in $M'\setminus M$ one by one
    we get a sequence $M=M_0, M_1, \ldots, M=M_k\in \mathcal M(G)$ such that each neighboring pair is different by a single edge.
    Then we have a sequence $U=V(M), V(M_1),\ldots, V(M_k), U'=V(M')$ such that
    \begin{enumerate}
        \item $\mv(V(M_j)) > 0$ for all $0\le j\le k$;
        \item $|V(M_j)\mathbin \triangle V(M_{j+1})| = 2$ for all $0\le j\le k-1$.
    \end{enumerate}
    Then by \Cref{def:glauber-vertex}, we have $\gp(V_j\to V_{j+1}) > 0$ for all $0\le j \le k-1$, proving the connectivity.

    Then we consider the second part. If either $\mv(U)$ or $\mv(U')$ is zero then the equation trivially holds, 
    so we only need to consider the case that both of them have positive probability, which can be verified by
    direct calculation.
\end{proof}

With the convergence guarantee, we can start with an arbitrary vertex set with positive probability in $\mv$,
say $\varnothing$,
and run $\gp$ for sufficiently many times, then the resulting distribution would be close to the limit $\mv$.
To obtain \Cref{thm:perfect-reduction}, we need to prove that
\begin{enumerate}
    \item each step of $\gp$ can be simulated by polynomially many calls of $Q_{\delta'}$;
    \item it takes $\mathrm{poly}(nm,\delta)$ many steps of $\gp$ to converge to $\mv$ within TV distance $\delta$.
\end{enumerate}

The key component of this section is proving the second claim, which is presented in the next subsection. 
The proofs of the remaining claims are postponed until the end of this section.
\subsection{Mixing Time of 2-Site Glauber Dynamics }\label{subsec:mixing-time-of-gp}
In this section, we analyze the mixing time of the $2$-site Glauber dynamics $\gp$.
\subsubsection{Spectral Independence and Mixing Time}
\newcommand{\mk}{\mathcal P}

The mixing time of Glauber dynamics is controlled by a property of the stationary distribution called \emph{spectral independence}.
There are multiple variants of the definition, and here we follow the definition in \cite{chen2022localization}.
\begin{definition}[Spectral independence]
    Let $\mu$ be a distribution on $2^\Omega$ for some finite set $\Omega$. We say a event $\mathcal F$ is a \emph{pinning}
    if it is in the form of
    \[
    \mathcal F = \set{A\in 2^\Omega\mid A\cap B_- = \varnothing, B_+\subseteq A}
    \]
    for some $B_+, B_-\in 2^\Omega. 
    $
    For a pinning $\tau$, we denote the conditional distribution of $\mu$ on $\tau$ by $\mu^\tau$.
    We define the \emph{influence matrix} $\Psi^\tau_\mu (i, j) \in \mathbb R^{\Omega\times\Omega}$ of $\mu$ under a pinning $\tau$ as
    \[
        \Psi^\tau_\mu (i, j) \defeq
        \begin{cases}
        \Pr[A\sim \mu^\tau]{j\in A \mid i\in A}
        -
        \Pr[A\sim \mu^\tau]{j\in A \mid i\notin A}
        &
        \text{if }
        \Pr[A\sim \mu^\tau]{i\in A}
        \Pr[A\sim \mu^\tau]{i\notin A}
        >0,
        \\
        0 & \text{otherwise}.
        \end{cases}
    \]
    for all $i, j\in \Omega$.
    We say $\mu$ is $\eta$-spectral independent if for any pinning $\tau$, we have 
    $\rho(\Psi_\mu^\tau)\le \eta$,
    where $\rho$ denotes spectral norm.
\end{definition}
The following lemma characterizes the relation between spectral independence and mixing time.
The spectral gap of a Markov chain $\mathcal P$ is defined as $1-\lambda_2(P)$, where $\lambda_2(\mathcal P)$ is the second largest eigenvalue of $\mathcal P$.
\begin{lemma}[Theorem 24 in~\cite{chen2022localizationfull}]
    Let $\mu$ be an $\eta$-spectral independent distribution on $2^\Omega$ for some finite set $\Omega$.
    Then the $l$-site Glauber dynamics $\mathcal P$ has spectral gap 
    \[\lambda(\mathcal P)\ge \prod_{j=0}^{|\Omega|-l-1}\left( 1-\frac{\eta}{|\Omega|-j}, \right)\]
    and mixing time
    \[
    t_{\mathrm{mix}}(\mathcal P,\delta)
    \le \frac1{\lambda}\left( \log\frac{1}{\mu_{\min}} + \log\frac1{\delta} \right),
    \]
    where $\mu_{\min} = \min_{A\in 2^\Omega, \mu(A)\neq 0}\mu(A)$ is the minimum non-zero probability over $\Omega$.
\end{lemma}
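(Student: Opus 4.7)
The plan is to prove the two assertions separately, using the framework of weighted simplicial complexes and the local-to-global principle for down-up walks. For the spectral-gap bound, I would first reinterpret the $l$-site Glauber dynamics as a down-up walk on the simplicial complex $X$ whose $k$-faces are pinnings of $k$ coordinates supported by $\mu$, with weights induced by marginalization. Writing $n \defeq |\Omega|$, one step of the $l$-site Glauber dynamics coincides (up to reindexing) with one step of $U_{n-l\to n}\,D_{n\to n-l}$, where $D_{n\to n-l}$ drops a uniformly chosen size-$l$ subset of coordinates and $U_{n-l\to n}$ resamples those coordinates from the conditional distribution; this reformulation is standard.

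Second, I would translate the spectral-independence hypothesis into a uniform spectral-expansion bound on the links of $X$. The influence matrix $\Psi_\mu^\tau$ at a pinning $\tau$ agrees, after a standard similarity transform and after removing the trivial top eigenvalue, with the one-step-up-one-step-down local walk at the link of $\tau$ in $X$; thus the assumption $\rho(\Psi_\mu^\tau) \le \eta$ for every pinning $\tau$ implies that the second eigenvalue of every link walk at level $j$ is at most $\eta/(n-j)$. Invoking the local-to-global theorem for weighted simplicial complexes (in the style of Alev--Lau and Anari--Liu--Oveis Gharan, building on Oppenheim's trickling-down lemma) then produces the spectral-gap bound
\[
\lambda(\mathcal P) \ge \prod_{j=0}^{n-l-1}\left(1-\frac{\eta}{n-j}\right),
\]
which matches the stated product exactly.

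Third, the mixing-time bound follows from the spectral gap by a standard chi-squared contraction argument: for any reversible chain with spectral gap $\lambda$ and stationary distribution $\mu$, one has $\chi^2(\mu_0 \mathcal P^t \| \mu) \le (1-\lambda)^{2t}\chi^2(\mu_0\|\mu) \le (1-\lambda)^{2t}/\mu_{\min}$, and combining this with $d_{\mathrm{TV}} \le \tfrac{1}{2}\sqrt{\chi^2}$ and $1-\lambda \le e^{-\lambda}$ yields the claimed bound $t_{\mathrm{mix}}(\mathcal P,\delta) \le (1/\lambda)(\log(1/\mu_{\min}) + \log(1/\delta))$. The main obstacle is the local-to-global step: one must construct the complex, its weights, and the link walks carefully so that the influence matrix of $\mu$ under every conditional distribution recovers exactly the local-walk eigenvalue that the local-to-global theorem requires, and one must verify that the pinnings appearing in the definition of spectral independence exhaust all the links that arise when unrolling the recursion. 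Once this identification is pinned down, the rest of the argument is a clean composition of known tools.
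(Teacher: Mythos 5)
The paper does not actually prove this lemma: it is imported verbatim as Theorem~24 of Chen and Eldan's localization-schemes paper~\cite{chen2022localizationfull}, so there is no in-paper argument to compare yours against. Your sketch is nonetheless a sound alternative derivation, and it follows the \emph{other} standard route --- the weighted simplicial-complex / high-dimensional-expander framework of Oppenheim, Alev--Lau, and Anari--Liu--Oveis Gharan. You reinterpret the $l$-site Glauber dynamics as the down--up walk $U_{n-l\to n}\,D_{n\to n-l}$ on the pinning complex, translate $\eta$-spectral independence into the bound $\eta/(n-j)$ on the second eigenvalue of every level-$j$ link walk, and invoke local-to-global to obtain the product-form spectral gap; the mixing-time bound then drops out of the usual $\chi^2$-contraction argument. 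By contrast the cited source derives the same product formula inside the localization-schemes framework, where the argument is phrased as approximate conservation of variance along a coordinate-by-coordinate stochastic localization process rather than link expansion. The two routes are equivalent for this particular statement, but Chen--Eldan's framework is strictly more general (continuous state spaces, entropy contraction, other localization processes), whereas the HDX route you take is the historically original one and arguably more concrete for discrete product spaces. Two small things worth making explicit if you flesh this out: (i) the $\chi^2$-decay step needs the chain to be positive semi-definite --- which the down--up walk is, so the contraction factor is $(1-\lambda)^2$ per step rather than merely $(1-|\lambda_{\pm}|)^2$; and (ii) you are right to flag that the pinnings in the paper's definition of spectral independence (events of the form $B_+\subseteq A$, $A\cap B_-=\emptyset$) must be matched up exactly with the links of the complex that the local-to-global recursion visits --- this bookkeeping is routine but is precisely where such arguments tend to go wrong.
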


The following shows that spectral independence is bounded by Wasserstein distance. 
\begin{lemma}[Proposition 4.1 in~\cite{chen2023near}]
    Let $\mu$ be a distribution on $2^\Omega$ for some finite set $\Omega$ and $\tau$ be an arbitrary pinning, then
    \[
    \rho(\Psi_\mu^\tau) \le \max_{\substack{i\in \Omega\\ 0<\mu^\tau(i\gets +)<1}} \W{\mu^{\tau, i\gets +}, \mu^{\tau, i\gets -}},
    \]
    where the metric used in the Wasserstein distance is the Hamming distance.
\end{lemma}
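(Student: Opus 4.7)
The plan is to bound the spectral radius of $\Psi_\mu^\tau$ by its induced $\ell_\infty$ operator norm (i.e., the maximum $\ell_1$-norm over rows), and then to match each row's $\ell_1$-norm against an appropriately chosen $1$-Lipschitz test function in the Kantorovich-Rubinstein dual formulation.

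First, I would fix an index $i \in \Omega$ with $0 < \mu^\tau(i \gets +) < 1$, since on all other rows the matrix is zero by definition. For such an $i$, a direct unfolding of the definition gives $\Psi_\mu^\tau(i,j) = \mu^{\tau, i \gets +}(j \in A) - \mu^{\tau, i \gets -}(j \in A)$ for every $j$. Partition $\Omega$ by sign: $S_+ \defeq \set{j : \Psi_\mu^\tau(i,j) > 0}$ and $S_- \defeq \set{j : \Psi_\mu^\tau(i,j) < 0}$. These two sets are disjoint, so the function $f(A) \defeq |A \cap S_+| - |A \cap S_-|$ changes by at most one whenever a single element of $A$ is flipped: exactly one of the two terms moves, and only by one. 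Hence $f$ is $1$-Lipschitz with respect to the Hamming metric $d(A,B) = |A \triangle B|$.

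Next, applying the Kantorovich-Rubinstein duality (\Cref{lem:duality}) to this choice of $f$ yields
\[
\W{\mu^{\tau, i \gets +}, \mu^{\tau, i \gets -}} \;\ge\; \E[A \sim \mu^{\tau, i \gets +}]{f(A)} - \E[A \sim \mu^{\tau, i \gets -}]{f(A)} \;=\; \sum_{j \in \Omega} \abs{\Psi_\mu^\tau(i,j)},
\]
where the last equality follows by linearity of expectation together with the sign partition $S_+, S_-$ (the contribution from $i$ itself, where $\Psi_\mu^\tau(i,i) = 1$, is handled correctly since $i \in S_+$).

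Finally, for any square matrix, the spectral radius is bounded above by every induced operator norm; in particular, $\rho(\Psi_\mu^\tau) \le \|\Psi_\mu^\tau\|_\infty = \max_i \sum_j \abs{\Psi_\mu^\tau(i,j)}$. Combining this with the per-row bound above and taking the maximum over valid $i$ gives the claim. The only real subtlety is verifying $1$-Lipschitzness when both $S_+$ and $S_-$ are nonempty, which I expect to be the main point to justify carefully; the remaining steps are standard facts about matrix norms and Wasserstein duality.
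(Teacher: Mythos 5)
Your proof is correct, and the paper does not include a proof of this lemma (it is quoted from the reference \cite{chen2023near}); the two-step argument you give — bound the spectral radius by the maximum absolute row sum $\|\Psi_\mu^\tau\|_\infty$, then bound each row's $\ell_1$-norm by the Wasserstein distance via the signed-indicator test function $f(A)=|A\cap S_+|-|A\cap S_-|$ in the Kantorovich--Rubinstein dual — is precisely the standard proof of this result. Your sign-partition bookkeeping, the check that $f$ is $1$-Lipschitz because $S_+$ and $S_-$ are disjoint, and the observation that rows with $\mu^\tau(i\gets+)\in\{0,1\}$ are identically zero, are all fine.

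One caution on wording: the paper defines $\rho$ as the ``spectral norm,'' but your inequality $\rho(M)\le\|M\|_\infty$ is a bound on the \emph{spectral radius}, not on the operator $\ell_2$-norm (for a non-symmetric matrix the latter can exceed $\|M\|_\infty$, e.g.\ a single all-ones column). This is harmless here and is what the spectral-independence literature actually uses: $\Psi_\mu^\tau=D^{-1}C$ with $C$ the covariance matrix of $(\mathbf 1_{i\in A})_{i}$ under $\mu^\tau$ and $D$ a positive diagonal, so $\Psi_\mu^\tau$ is similar to the symmetric PSD matrix $D^{-1/2}CD^{-1/2}$, has real nonnegative eigenvalues, and its spectral radius equals its largest eigenvalue, which is exactly the quantity needed for the mixing-time bound that follows. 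It would strengthen the write-up to make this point explicit, since the bare claim ``spectral radius is at most every induced norm'' does not in itself justify bounding the spectral norm.
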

\begin{corollary}\label{lem:coupling-to-spectral}
    Let $\mu$ be a distribution on $2^\Omega$ for some finite set $\Omega$.
    If for any pinning $\tau$ and $i\in \Omega$ such that $0<\mu^\tau(i\gets +)<1$ there is
    \[
    \W{\mu^{\tau, i\gets +}, \mu^{\tau, i\gets -}} \le \eta,
    \]
    then $\mu$ is $\eta$-spectral independent.
\end{corollary}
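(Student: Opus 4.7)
The statement is an immediate consequence of the preceding lemma (Proposition 4.1 of \cite{chen2023near}), so the proof plan is essentially a one-step deduction rather than an involved argument. The plan is to fix an arbitrary pinning $\tau$ and then invoke the preceding lemma to bound the spectral radius of the influence matrix $\Psi_\mu^\tau$ by the maximum Wasserstein distance
\[
\max_{\substack{i\in \Omega\\ 0<\mu^\tau(i\gets +)<1}} \W{\mu^{\tau, i\gets +}, \mu^{\tau, i\gets -}}.
\]
By the hypothesis of the corollary, every term appearing in this maximum is at most $\eta$, so the maximum itself is at most $\eta$, giving $\rho(\Psi_\mu^\tau) \le \eta$.

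Since $\tau$ was arbitrary, this bound holds uniformly over all pinnings, which is exactly the definition of $\eta$-spectral independence. There is no real obstacle here: the content of the corollary is just a convenient reformulation of the preceding lemma in the form actually used later (where one has a uniform Wasserstein bound obtained from a coupling argument and wants to conclude spectral independence without re-deriving the bound on $\rho(\Psi_\mu^\tau)$). The only mild care needed is to observe that the hypothesis quantifies over all pinnings $\tau$ and all $i$ with $0<\mu^\tau(i\gets+)<1$, which matches exactly the range of the maximum in the preceding lemma, so no further case analysis is required.
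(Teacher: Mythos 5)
Your proposal is correct and matches the intended deduction: the paper states this as a corollary of the preceding lemma (Proposition 4.1 of \cite{chen2023near}) without a separate proof, and the argument is exactly the one-step application you describe — bound $\rho(\Psi_\mu^\tau)$ by the maximum Wasserstein distance, observe each term is at most $\eta$ by hypothesis, and conclude uniformly over all pinnings $\tau$.
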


\subsubsection{Recursive Coupling}
By \Cref{lem:coupling-to-spectral}, to analyze the mixing time of $\gp$ defined in \Cref{def:glauber-vertex}, it suffices to 
analyze the Wasserstein distance between pinnings as in the condition of \Cref{lem:coupling-to-spectral}.
To achieve this goal, we adopt a similar procedure to that in \Cref{subsec:matching-sensitivity}.
A crucial difference from the argument in \Cref{subsec:matching-sensitivity} is that we need to analyze the Wasserstein distances under all different vertex pinnings.

We start by defining upper bounds of Wasserstein distances as in \Cref{def:kappa}.
\begin{definition}[$\kappa^V_{s, \Delta, \lambda}$]\label{def:kappa-V}
    For each $s, \Delta\in \mathbb Z_{>0}$, $\lambda>0$, we define
    \begin{align*}
    \kappa^V_{s, \lambda}
    &\defeq
    \max_{\substack{G=(E, V)\\i=(u, v)\in E\\i\text{ is pendant}\\\Delta_G(u)=1\\ \tau:\text{ vertex pinning}}}
    \Wv{\mu_{V-u;\lambda, G}^{\tau,i\gets +},\mu_{V-u;\lambda, G}^{\tau ,i\gets -}},
    \end{align*}
    where $G=(V,E)$ ranges over all graphs with $|E|\le s$.
\end{definition}

\newcommand{\kv}[1]{\kappa^V_{#1, \lambda}}
Notice that there is no restriction on the maximum degree of $G$, in contrast with \Cref{def:kappa}, since the calculation here
doesn't need a bound like \Cref{lem:marginal-bound},
as we will see later in this section.
The following is a counterpart to \Cref{lem:kappa-recursion}.
\begin{lemma}\label{lem:kappa-v-recursion}
For all $s\in \mathbb Z_{\ge 1}$ and $\lambda>0$ we have
\[
    \kv{s+1}
    \le \max\set{\kv{s}, 1}.
\]
\end{lemma}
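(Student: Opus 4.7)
The plan is to mirror the recursive-coupling argument of \Cref{lem:kappa-recursion} but replace the edge-Hamming metric with the vertex one; the key saving is that swapping the matching partner of a single vertex contributes at most an additive constant in vertex Hamming distance, which is why the recursion now carries no multiplicative decay but also needs none. Fix $G = (V, E)$ with $|E| \le s+1$, a pendant edge $i = (u, v)$ with $\Delta_G(u) = 1$, and a vertex pinning $\tau$ attaining $\kv{s+1}$. If $|E| \le s$ the bound is immediate, so assume $|E| = s+1$. First I expand $\mu_{V-u;\lambda,G}^{\tau, i\gets -}$ by the matching status of $v$: because $\Delta_G(u) = 1$ and $i$ is unmatched, $v$'s match (if any) must come from some $j \in E(v) - i$, so
\[
    \mu_{V-u;\lambda,G}^{\tau, i\gets -}
    = \sum_{j \in E(v) - i} p_j\, \mu_{V-u;\lambda,G}^{\tau, j\gets +}
      + p_-\, \mu_{V-u;\lambda,G}^{\tau, v\gets -},
\]
with $\sum_j p_j + p_- = 1$. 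Subtracting this from $\mu_{V-u;\lambda,G}^{\tau, i\gets +}$ and invoking \Cref{lem:wd-sum} reduces the target to a convex combination of two kinds of pairwise Wasserstein distances.

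For the $v\gets -$ summand, \Cref{lem:pinning-deletion} shows that the marginals on $V - \{u, v\}$ coincide: both pinnings yield the same Gibbs distribution $\mu^\tau_{V-\{u,v\};\lambda,G-\{u,v\}}$ (the first because conditioning on $i\gets +$ deletes both $u$ and $v$, the second because $v\gets -$ deletes $E(v)$ and isolates $u$ and $v$). The two distributions differ only on whether $v$ belongs to the sampled set, so they admit a trivial coupling of Hamming distance exactly $1$, giving $\Wv{\mu_{V-u;\lambda,G}^{\tau, i\gets +}, \mu_{V-u;\lambda,G}^{\tau, v\gets -}} = 1$.

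The heart of the argument is bounding $\Wv{\mu_{V-u;\lambda,G}^{\tau, i\gets +}, \mu_{V-u;\lambda,G}^{\tau, j\gets +}}$ for each $j = (v, w) \in E(v) - i$. The trick is to realize both distributions as the two pinnings of a single pendant edge in an auxiliary graph. Let $u'$ be a fresh vertex and set $G' := (G - \{u, v\}) + i'$ with $i' = (u', w)$; then $i'$ is pendant in $G'$ with $\Delta_{G'}(u') = 1$, and $|E(G')| = |E| - |E(v)| + 1 \le s$ since $|E(v)| \ge 2$ whenever such a $j$ exists. Repeated application of \Cref{lem:pinning-deletion} yields the distributional identities
\[
    \mu_{V-u;\lambda,G}^{\tau, i\gets +}\big|_{V-\{u,v\}}
    = \mu_{V(G')-u'; \lambda, G'}^{\tau, i'\gets -},
    \qquad
    \mu_{V-u;\lambda,G}^{\tau, j\gets +}\big|_{V-\{u,v\}}
    = \mu_{V(G')-u'; \lambda, G'}^{\tau, i'\gets +}.
\]
Because $v$ lies in both distributions on $V-u$, it contributes $0$ to the vertex Hamming distance, so the Wasserstein distance on $V-u$ equals that of the two projections, which is at most $\kv{|E(G')|} \le \kv{s}$ by \Cref{def:kappa-V}. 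Combining the two cases gives $\Wv{\mu_{V-u;\lambda,G}^{\tau, i\gets +}, \mu_{V-u;\lambda,G}^{\tau, i\gets -}} \le (1 - p_-)\kv{s} + p_- \cdot 1 \le \max\{\kv{s}, 1\}$, since the right-hand side is a convex combination of $\kv{s}$ and $1$.

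The main obstacle is the construction of $G'$ and the bookkeeping needed to verify the two distributional identities: each conversion of a pinning into a graph modification rests on \Cref{lem:pinning-deletion}, and one must check that $\tau|_{V-\{u,v\}}$ descends to a legitimate vertex pinning on $V(G')$, that $i'$ remains pendant with degree $1$ at its private endpoint $u'$, and that the isolated vertex $u'$ drops out cleanly upon marginalization to $V(G')-u'$.
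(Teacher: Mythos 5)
Your proof is correct and follows essentially the same strategy as the paper: expand $\mu_{V-u;\lambda,G}^{\tau,i\gets-}$ over the matching status of $v$, apply the convex-combination bound (\Cref{lem:wd-sum}), show the $v\gets-$ branch has Wasserstein distance exactly $1$, and show each $j\gets+$ branch reduces to $\kv{s}$. The one place you deviate is the reduction step for the $j\gets+$ branch: the paper (in \Cref{lem:to-kappa-v}) removes $v$ from the vertex scope and observes that $j$ is already pendant in $G\setminus F - i$ with $v$ as its degree-1 endpoint (the isolated vertex $u$ in the scope being harmless), whereas you graft a fresh degree-1 vertex $u'$ and edge $i'=(u',w)$ onto $G-\{u,v\}$ and re-express both pinnings as $i'\gets\pm$ there. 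Both reach a graph with at most $s$ edges in the exact form of \Cref{def:kappa-V}; your auxiliary-graph surgery makes the invocation of $\kv{s}$ syntactically immediate at the cost of one extra identity to verify, while the paper's version is terser but requires the reader to notice that the isolated vertex $u$ can be dropped from the scope. These are cosmetic differences on top of the same underlying idea.
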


In the rest of this section, we prove \Cref{lem:kappa-v-recursion}.
We need the following lemma similar to \Cref{lem:to-kappa-edge}.
\newcommand{\mvgp}[3]{\mu_{#1 ; \lambda, #2}^{\substack{\tau \\ #3 }}}
\begin{lemma}\label{lem:to-kappa-v}
    Let $G=(V, E)$ be a graph with $|E|\le s$, $i = (u, v) \in E$ be a pendant edge with $\Delta_G(u)=1$, 
    $j\in E(v)$ be an edge other than $i$, and $\tau$ be an arbitrary vertex pinning.
    Then
    \begin{enumerate}
        \item 
        $\Wv{\mvgp {V-u} G {i\gets +}, \mvgp{V-u} G {v\gets -\\i\gets -}} = 1$;
        \item 
        $\Wv{\mvgp {V-u} G {i\gets +}, \mvgp{V-u} G {j\gets +\\i\gets -}} \le \kv{s-1}$.
    \end{enumerate}
\end{lemma}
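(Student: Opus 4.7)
My plan is to handle the two parts separately, using \Cref{lem:pinning-deletion} to convert edge pinnings into graph deletions and then coupling on the shared marginals.

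For part~1, I will compute both distributions explicitly. The pinning $i\gets+$ forces $v$ to be in the output on $V-u$, and by \Cref{lem:pinning-deletion}, since $\Delta_G(u)=1$ implies $\{i\}\cup N(i)=E(v)$, the $V-u-v$ marginal is $\mu^{\tau}_{V-u-v;\lambda,G\setminus E(v)}$. The pinning $v\gets-,\,i\gets-$ forces $v$ out and forbids every edge of $E(v)$ from appearing in $M$, so the $V-u-v$ marginal is again $\mu^{\tau}_{V-u-v;\lambda,G\setminus E(v)}$. Hence the two distributions agree on $V-u-v$ and deterministically disagree on $v$; coupling them along the common marginal gives Hamming distance exactly one, which is also the minimum since any coupling must put $v$ in the symmetric difference, so $\Wv=1$.

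For part~2, my plan is to reduce to $\kv{s-1}$ by passing to the auxiliary graph $G'\defeq G-(E(v)\setminus\{j\})$, in which $j=(v,w)$ is pendant with $\Delta_{G'}(v)=1$ and $|E(G')|=s-(\Delta_G(v)-1)\le s-1$, using $\Delta_G(v)\ge 2$ since $i,j\in E(v)$ are distinct. Applying \Cref{lem:pinning-deletion} inside both $G$ and $G'$, I will verify two distributional identities: both $\mvgp{V-u}{G}{i\gets+}$ and $\mvgp{V-v}{G'}{j\gets-}$ have $V-u-v$ marginal equal to $\mu^{\tau}_{V-u-v;\lambda,G\setminus E(v)}$; and both $\mvgp{V-u}{G}{j\gets+\\i\gets-}$ and $\mvgp{V-v}{G'}{j\gets+}$ force $w$ into the output with $V-u-v-w$ marginal equal to $\mu^{\tau}_{V-u-v-w;\lambda,G\setminus(E(v)\cup E(w))}$.

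The key observation is that the only coordinates distinguishing the scopes $V-u$ and $V-v$ have deterministic values in the relevant distributions: $v$ is always present on the $G$-side, and $u$ is always absent on the $G'$-side, since $u$ is isolated in $G'$. Under the natural bijection between outcomes these coordinates contribute zero to $d_V$, and therefore
\[
\Wv{\mvgp{V-u}{G}{i\gets+},\mvgp{V-u}{G}{j\gets+\\i\gets-}}
=\Wv{\mvgp{V-v}{G'}{j\gets-},\mvgp{V-v}{G'}{j\gets+}}
\le\kv{|E(G')|}\le\kv{s-1},
\]
where the last inequality uses \Cref{def:kappa-V} together with the fact that $j$ is pendant in $G'$ with $v$ of degree one. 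The main obstacle I anticipate is carrying the ambient vertex pinning $\tau$ cleanly through the iterated pinning-to-deletion reductions and making the scope-switch argument precise; it is exactly there that, in contrast to the edge analogue \Cref{lem:to-kappa-edge}, the deterministic status of $v$ and $u$ prevents an additive $+1$ from appearing.
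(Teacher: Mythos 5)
Your proposal is correct and follows essentially the same route as the paper's proof. For part 1 you observe that both conditional distributions, after applying \Cref{lem:pinning-deletion}, share the same marginal $\mu^\tau_{V-u-v;\lambda,G\setminus E(v)}$ and deterministically disagree only on $v$, which is exactly the paper's bijection argument. For part 2 you pass to $G'=G\setminus(E(v)\setminus\{j\})$, which is precisely the paper's graph $G\setminus F - i$ with $F=E(v)-i-j$, check that both sides match the two pinnings of the pendant edge $j$ in $G'$ after restricting to $V-u-v$, and use the fact that $v$ is deterministically present on the $G$-side and $u$ deterministically absent on the $G'$-side to align the scopes with \Cref{def:kappa-V}. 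Your closing remark, that the deterministic status of $u$ and $v$ is what removes the additive $+1$ that appears in \Cref{lem:to-kappa-edge}, is accurate and captures the key structural difference between the edge and vertex versions.
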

\begin{proof}
    For the first part of the lemma, denote $E(v) - i$ by $F$.
    An application of \Cref{lem:pinning-deletion} gives
    \begin{align*}
        \mvgp {V-u} {G           } {i\gets +}
        =
        \mvgp {V-u} {G\setminus F} {i\gets +}
        \quad \text{and} \quad
        \mvgp {V-u} {G           } {v\gets - \\ i\gets -}
        =
        \mvgp {V-u} {G\setminus F} {i\gets -}.
    \end{align*}
    Since $\tau$ contains no pinning on $u, v$ by definition, there is a bijection:
    \begin{align*}
        \set{M\in \mathcal M(G\setminus F) \mid M \text{ satisfies } \tau, i\in M} \ni M
        & \leftrightarrow
        M - i \in \set{M\in \mathcal M(G\setminus F) \mid M \text{ satisfies } \tau, i\notin M}.
    \end{align*}
    Hence, we have $\Wv{\mvgp {V-u} G {i\gets +}, \mvgp{V-u} G {v\gets -\\i\gets -}} = 1$.

    For the second part, denote $F \defeq E(v) - i - j$. Notice that 
    \begin{align*}
        \mvgp {V-u} G {i\gets +} (j\gets-, F\gets -) = 1,\qquad
        \mvgp {V-u} G {j\gets + \\ i\gets -} (F\gets - ) = 1,
    \end{align*}
    so we have
    \begin{align*}
        \mvgp {V-u} {G\setminus F} {j\gets - \\ i\gets + } = \mvgp {V-u} G {i\gets +}, \qquad
        \mvgp {V-u} {G\setminus F} {j\gets + \\ i\gets - } = \mvgp {V-u} G {j\gets + \\ i\gets -}.
    \end{align*}
    Moreover, since either the pinning $i\gets +$ or $j\gets +$ makes $v$ to appear in the matching, we have
    \begin{align*}
        \mvgp {V-u} {G\setminus F} {j\gets - \\ i\gets + }(v\gets +) =
        \mvgp {V-u} {G\setminus F} {j\gets + \\ i\gets - }(v\gets +) = 1.
    \end{align*}
    So we may remove $v$ from the vertex set without affecting the Wasserstein distance between the distributions.
    That is
    \begin{align*}
         \Wv{\mvgp { { V-u}} G {i\gets +}, \mvgp{V-u} G {j\gets +\\i\gets -}}
        = 
         \Wv{
             \mvgp {V-u-v} {G\setminus F} {j\gets - \\ i\gets + },
             \mvgp {V-u-v} {G\setminus F} {j\gets + \\ i\gets - }         
             }.
    \end{align*}
    For both distributions on the right hand side, $i$ is a pendant edge whose only neighbour is $j$.
    Since $j$ is also pinned, the pinning on $i$ does not affect the distribution on $V-u-v$, and we may
    remove $i$ from both the distributions without changing them. Finally we have
    \begin{align*}
         \Wv{\mvgp { { V-u}} G {i\gets +}, \mvgp{V-u} G {j\gets +\\i\gets -}}
        = 
         \Wv{
             \mvgp {V-u-v} {G\setminus F - i} {j\gets -},
             \mvgp {V-u-v} {G\setminus F - i} {j\gets +}         
             }.
    \end{align*}
    Since $j$ is a pendant edge in $G\setminus F - i$, a graph with at most $s-1$ edges, the last term is bounded by $\kv{s-1}$, 
    concluding the proof.
\end{proof}

\begin{proof}[Proof of \Cref{lem:kappa-v-recursion}]
\renewcommand{\m}{\mu_{E;\lambda, G}}
\renewcommand{\mp}{ \mvgp{V-u}{G}{            i\gets + }}
\renewcommand{\mm}{ \mvgp{V-u}{G}{            i\gets - }}
\renewcommand{\mmp}{\mvgp{V-u}{G}{j\gets + \\ i\gets - }}
\renewcommand{\mmm}{\mvgp{V-u}{G}{v\gets - \\ i\gets - }}
\renewcommand{\pmp}{\mvgp{V-u}{G}{            i\gets - }(j\gets +)}
\renewcommand{\pmm}{\mvgp{V-u}{G}{            i\gets - }(v\gets -)}
\renewcommand{\se}{\sum_{j\in E(v)}}
Let $G=(V, E)$ be a graph and $i=(u, v)\in E$ be an pendant edge that fit into the definition of $\kv s$.
By \Cref{lem:expansion} we have
\begin{align*}
    \mp - \mm
    &=
    \mp - \se\pmp\mmp -\pmm\mmm
    \\&=
    \se \pmp \left( \mp - \mmp \right)
    \\&\phantom{=}+
    \pmm\left( \mp - \mmm \right).
\end{align*}
By \Cref{lem:wd-sum}, we have
\begin{align*}
    \Wv{\mp , \mm}
    &\le
    \se \pmp \Wv{ \mp , \mmp}
    \\&\phantom{=}+
     \pmm\Wv{\mp , \mmm}
    \\&\le \max\set{\Wv{ \mp , \mmp},\Wv{\mp , \mmm}}.
\end{align*}
By \Cref{lem:to-kappa-v}, we have
\begin{align*}
    \Wv{\mp , \mm}
    \le \max\set{\kv{s-1}, 1}.
\end{align*}
Taking the supremum over all possible $G$ and $i$ that fit the definition of $\kv s$ finishes the proof.
\end{proof}

Finally, we prove \Cref{thm:perfect-reduction}.
\begin{proof}[Proof of \Cref{thm:perfect-reduction}]
    For any $G=(V, E)$, $\lambda > 0$, $v\in V$ and vertex pinning $\tau$, we need to bound
    \begin{align}\label{eq:wd-v-independence}
        \Wv{\mvgp{V}{G}{v\gets +}, \mvgp{V}{G}{v\gets -}}.
    \end{align}
    We have the decomposition:
    \begin{align*}
        \mvgp{V}{G}{v\gets +} - \mvgp{V}{G}{v\gets -}
        =
        \sum_{i\in E(v)}
        \mvgp{V}{G}{v\gets +}(i\gets +)
        \left(
        \mvgp{V}{G}{v\gets +} - \mvgp{V}{G}{v\gets -}
        \right).
    \end{align*}
    Then by \Cref{lem:wd-sum}, we have
    \begin{align*}
    \Wv{
        \mvgp{V}{G}{v\gets +}, \mvgp{V}{G}{v\gets -}
    }
    &\le
        \sum_{i\in E(v)}
        \mvgp{V}{G}{v\gets +}(i\gets +)
        \Wv{
        \mvgp{V}{G}{i\gets +}, \mvgp{V}{G}{v\gets -}
        }
    \\&\le
        \max_{i\in E(v)}
        \Wv{
        \mvgp{V}{G}{i\gets +}, \mvgp{V}{G}{v\gets -}
        }.
    \end{align*}
    By \Cref{lem:pinning-deletion}, denoting $E(v)-i$ by $F$, we have
    \begin{align*}
        \Wv{
        \mvgp{V}{G}{i\gets +}, \mvgp{V}{G}{v\gets -}
        }
        &=
        \Wv{
        \mvgp{V}{G\setminus F}{i\gets +}, \mvgp{V}{G\setminus F}{i\gets -}
        }
        \\&=
        1+
        \Wv{
        \mvgp{V-v}{G\setminus F}{i\gets +}, \mvgp{V-v}{G\setminus F}{i\gets -}
        }
        \\&\le 1+\kv{|E|},
    \end{align*}
    since $i$ is pendant in $G\setminus F$.
    Finally, \Cref{lem:kappa-v-recursion} together with the fact that $\kv 1 =1$ indicate that $\kv s\le 1$ for all $s\in \mathbb Z_{\ge 1}$.
    So we know the Wasserstein distance in \eqref{eq:wd-v-independence} is universally bonded by $2$.
    Then by \Cref{lem:coupling-to-spectral},
    $\mu_{V;\lambda, G}$ is $2$-spectral independent, and the $2$-vertex Glauber dynamics $\gp$ has the following bound on spectral gap:
    \begin{align*}
        \lambda(\gp) \ge \prod_{j=0}^{n-3} \left( 1 - \frac{2}{n-j} \right) > \frac 2 {n^2}.
    \end{align*}
    Notice that when $\lambda > 1$, the minimum non-zero probability in the distribution $\mu_{V;\lambda, G}$ is
    \begin{align*}
        \mu_{V;\lambda, G}(\varnothing) =
        \frac{1}{\sum_{M\in \mathcal M(G)}|\lambda|^{|M|}}
        \le 
        \frac{1}{\sum_{M\in \{0,1\}^E}|\lambda|^{|M|}}
        =\frac{1}{(1+\lambda)^{m}}.
    \end{align*}
    Then we get the mixing time of $\gp$ as
    \begin{align*}
        t_{\mathrm{mix}}(\gp, \delta) \le \frac{n^2}{2}\left(m\log(\lambda +1) +\log\frac1\delta\right)  =  O\left(n^2\left(m\log\lambda + \log\frac{1}{\delta}\right)\right).
    \end{align*}

    In the following discussion, we fix $t = t_{\mathrm{mix}}(\gp,\frac{\delta}{3})$.
    Since $Q_{\delta'}$ is an approximation algorithm, we are unable to run the Glauber dynamics $\gp$ directly.
    \newcommand{\pgp}{\mathcal P'_G}
    Instead, we use the following Markov chain $\pgp$ to simulate it. If the current state is $U\in 2^V$, then
    $\pgp$ does the following.
    \begin{enumerate}
        \item Uniformly sample a pair of distinct vertices $\set{v_1,v_2}$ from $V$.
        \item Let $U' = U\mathbin\triangle\set{v_1, v_2}$,
        transit to $V\mathbin\triangle \set{v_1, v_2}$ with probability $\frac{Q_{\delta'}(G[U'])}{Q_{\delta'}(G[U']) + Q_{\delta'}(G[U])}$ and stay at $U$ otherwise.
    \end{enumerate}
    Under this definition, the following two facts hold.
    \begin{enumerate}
        \item $\pgp(U\to U') = 0$ if $\mv(U') = 0$ since in this case $G[U']$ has no perfect matching and our approximation algorithm $Q_{\delta'}$ always outputs $0$.
        \item For any $U, U'$ in the support of $\mv$, we have 
        \[
        |\pgp(U\to U') - \gp(U\to U')|\le \frac{2\delta'}{1-\delta'}\gp(U\to U).
        \]
    \end{enumerate}
    It takes $2$ calls of $Q_{\delta'}$.
    Now, for any vector $x\in\mathbb R^{2^V}$, we have
    \begin{align*}
    \Vert x\pgp - x\gp\Vert_1
    &=
    \sum_{U'\in 2^V}\sum_{U\in 2^V} x(U)|\pgp(U\to U') - \gp(U\to U')|
    \\&\le
    \frac{2\delta'}{1-\delta'}
    \sum_{U'\in 2^V}\sum_{U\in 2^V} x(U) \gp(U\to U').
    \\&=
    \frac{2\delta'}{1-\delta'}\Vert x\Vert_1.
    \end{align*}
    Moreover, for any $x\in \mathbb R^{2^V}$, we have
    $\Vert x\gp\Vert_1\le\Vert x\Vert_1$ since the operator norm of $\gp$ is $1$.
    Then, the total variation error after running $\pgp$ for $t$ times with an initial distribution $\mu_0$ is
    \begin{align*}
    \Vert \mu (\pgp)^t - \mv \Vert_1
    &=
    \Vert \mu(\gp + (\pgp-\gp))^t - \mv \Vert_1
    \\&=
    \Vert \mu\gp^t  +  \mu\left((\gp + (\pgp-\gp))^t - \gp^t\right) - \mv\Vert_1
    \\&\le
    \Vert \mu\gp^t -\mv\Vert_1  + \Vert \mu\left((\gp + (\pgp-\gp))^t - \gp^t\right) \Vert_1
    \\&\le
    \frac{\delta}{3} + (1+\delta')^t -1
    \le
    \frac{\delta}{3} + e^{t\delta'} -1 = \frac23\delta.
    \end{align*}

    By the standard reduction from counting to sampling,
    after sampling a vertex set $U$ from $\mv$, we can call $Q_{\frac{\delta}{3m}}$ for $m$ times
    on subgraphs of $G[U]$ to sample the set of matched edges within total variation distance $\frac{\delta}{3}$.
    Finally we have the overall total variation  error $\delta$.
    
    Since it takes $2$ calls of $Q_{\delta'}$ to  perform one step of $\gp$ as discussed above, in total we need 
    $O(n^2\left(m\log\lambda + \log\frac{1}{\delta})\right)$ calls of $Q_{\delta'}$
    and $m$ calls of $Q_{\frac{\delta}{3m}}$.
\end{proof}

\section{Low-Sensitivity Algorithms for General Graphs}\label{sec:sparsification}

The algorithm presented in \Cref{sec:sensitivity} has high sensitivity and running time when the maximum degree is large. 
In this section, we address this issue by sparsifying the input graph into one with bounded degree while approximately preserving the matching size. 
We then prove \Cref{thm:unbounded-degree-algorithm} by applying a version of \Cref{thm:bounded-degree-algorithm} obtained by using a slightly different bound on the mixing time of the Gibbs distribution to the resulting sparsified graph.

We begin in \Cref{subsec:lp-relaxation} by showing how to solve the LP relaxation of the maximum matching problem with low sensitivity, which is a key step in constructing the sparsifier.
Then in \Cref{subsec:sparsification}, we describe the sparsification procedure using the obtained LP solution, and complete the proof of \Cref{thm:unbounded-degree-algorithm}.

\subsection{LP Relaxation}\label{subsec:lp-relaxation}
Our sparsification algorithm is based on linear programming, and we start with explaining the LP relaxation of the maximum matching problem.
For an undirected graph $G = (V,E)$, the matching polytope $\mathscr M_G$ of $G$ is the convex hull of the indicator vectors of matchings in $G$. 
Consider the following polytope, defined by constraints ensuring that the total weight of edges incident to any vertex in the matching is at most one:
\[
    \mathscr P_G := \left\{\sum_{e \in E(v)}x(e) \leq 1, \forall v\in V\right\} \cap \mathbb R_{\geq 0}^E.
\]
It is standard that $\mathscr M_G = \mathscr P_G$ holds when $G$ is bipartite.
When $G$ is non-bipartite, we need to further consider odd-set constraints. 
For $x \in \mathbb R_{\geq 0}^E$ and $B\subseteq V$, let $x(B) := \sum_{e \in E(G[B])}x(e)$.
Then, we let
\[
    \mathcal O_G := \{B \subseteq V : |B|\geq 3\text{ and }|B|\text{ is odd}\}
\]
be the collections of odd sets.
Then it is known that we have
\[
    \mathscr M_G= \mathscr P_G \cap \left\{x(B) \leq \left\lfloor \frac{|B|}{2}\right\rfloor, \forall B \in \mathcal O_G\right\}.
\]
Hence, we can solve the maximum matching problem by maximizing $x(E)$ subject to $x \in \mathscr M_G$.

To solve the LP relaxation in a stable fashion, we introduce an entropy function $H:\mathbb R_{\geq 0}^F \to \mathbb R_{\geq 0}$ defined as 
\[
    H(x) = \sum_{e \in F} x(e) \log \frac{1}{x(e)}
\]
It is well known that $H$ is concave and $1$-strongly concave with respect to the $\ell_1$ norm in the simplex $\{x \in \mathbb R_{\geq 0}^F: x(F) \leq 1 \}$, where $x(F) = \sum_{e \in F}x(e)$.
Let $K_V$ denote the complete graph on the vertex set $V$.
For a parameter $\alpha \geq 0$, we consider the following entropy-regularized LP:
\begin{align}
    \begin{array}{lll}
        \text{maximize} & \displaystyle x(E) + \alpha \sum_{v \in V}H(\{x(e)\}_{e \in E(v)}) \\
        \text{subject to} & x \in \mathscr M_{K_V} \\
        & x \in \mathbb R_{\geq 0}^{\binom{V}{2}}
    \end{array}
    \label{eq:regularized-LP}
\end{align}
We consider the matching polytope over $K_V$, instead of $G$, to ensure that the constraints remain independent of the graph, which is crucial for the stability of the LP solution.
Note that $x(e)$ is defined even for $e \not \in E$.
Since the objective is a concave function and we can construct a separation oracle for the constraints, the problem can be solved in polynomial time using the ellipsoid method.

For a vector $x \in \mathbb R^{\binom{V}{2}}_{\geq 0}$ and an edge set $E \subseteq \binom{V}{2}$, let $x|_E \in \mathbb R^{\binom{V}{2}}_{\geq 0}$ denote the vector obtained by truncating $x$ to $E$, i.e., $x|_E(e) = x(e)$ if $e \in E$ and $x|_E(e) = 0$ otherwise.
\begin{lemma}\label{lem:lp-approximation}
    Let $x \in \mathbb R^{\binom{V}{2}}_{\geq 0}$ be the optimal solution to the regularized LP~\eqref{eq:regularized-LP} and let $x' = x|_E$.
    Then, we have 
    \[
        x'(E) \geq \nu(G) - \alpha \nu(G)\log \nu(G).
    \]
\end{lemma}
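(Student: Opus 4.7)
The plan is to lower-bound the regularized objective at the optimum by its value at the integer maximum matching, and then to transfer this into a lower bound on $x'(E)$ by controlling the entropy contribution.

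Let $M^* \subseteq E$ be a maximum matching of $G$. Extending $\chi_{M^*}$ to a vector in $\mathbb R^{\binom{V}{2}}_{\geq 0}$ by zero, this vector lies in $\mathscr M_G \subseteq \mathscr M_{K_V}$ and has all coordinates in $\{0,1\}$, so every term $x(e)\log(1/x(e))$ vanishes and the regularized objective at $\chi_{M^*}$ equals $\nu(G)$. By optimality of $x$,
\[
x(E) + \alpha \sum_{v \in V} H(\{x(e)\}_{e \in E(v)}) \;\geq\; \nu(G).
\]
Next I would argue that one may assume $x(f)=0$ for all $f\in\binom{V}{2}\setminus E$: both $x(E)$ and each entropy $H(\{x(e)\}_{e \in E(v)})$ depend only on $\{x(e)\}_{e \in E}$, while the constraints defining $\mathscr M_{K_V}$ are monotone under zeroing out coordinates; hence such an optimum exists. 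For it, $x'=x$, so
\[
x'(E) \;\geq\; \nu(G) - \alpha \sum_{v \in V} H(\{x(e)\}_{e \in E(v)}).
\]

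The remaining and main step is to prove $\sum_{v\in V} H(\{x(e)\}_{e\in E(v)}) \leq \nu(G)\log\nu(G)$. Rewriting the left-hand side as $2\sum_{e\in E} x(e)\log(1/x(e))$ (since each edge contributes to both of its endpoints) and applying Jensen's inequality to the concave function $-\log$ gives $\sum_e x(e)\log(1/x(e)) \leq X\log(k/X)$, where $X = x(E) \leq \nu(G)$ and $k$ is the support size of $x$. To reach the claimed $\nu(G)\log\nu(G)$ bound, the plan is to invoke the KKT conditions for the regularized LP, which force every $x(e)$ in the support to take the explicit form $\exp((1-\gamma_e)/(2\alpha)-1)$, where $\gamma_e$ aggregates the Lagrange multipliers of the vertex and odd-set constraints containing $e$; combined with LP duality for the unregularized matching polytope, which bounds the total dual weight $\sum_v \lambda_v + \sum_B \lfloor |B|/2\rfloor \mu_B$ by $\nu(G)$, this controls both $X$ and $k$ in terms of $\nu(G)$. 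The hard part will be pushing the Jensen estimate down to $\log\nu(G)$ rather than the weaker $\log n$ or $\log|E|$, which requires precisely the structural information supplied by the KKT analysis; combining the three steps then yields the claimed inequality.
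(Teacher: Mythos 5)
Your first two steps match the paper exactly: compare the regularized objective at the optimum against its value at the indicator vector of a maximum matching to get $x(E) + \alpha\sum_{v\in V}H(\{x(e)\}_{e\in E(v)}) \geq \nu(G)$, and observe that the optimal $x$ may be taken to vanish off $E$ (so $x'(E)=x(E)$), since the objective ignores coordinates outside $E$ while zeroing them only relaxes the constraints. So the reduction to bounding $\alpha\sum_{v\in V}H(\{x(e)\}_{e\in E(v)})$ is the same.

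The gap is in your third step, and you have correctly diagnosed it yourself: Jensen applied to $-\log$ gives $\sum_{e\in E} x(e)\log(1/x(e)) \leq X\log(k/X)$ with $k=|\supp(x)|$, which is of order $\nu\log|E|$ rather than $\nu\log\nu$, and the KKT-based refinement you propose to close that gap is only sketched, not carried out. Your proof as written therefore does not establish the stated bound. For contrast, the paper does \emph{not} go through KKT; it instead rewrites $\sum_{e}x(e)\log(1/x(e)) = \nu\sum_{e}\frac{x(e)}{\nu}\bigl(\log\frac{\nu}{x(e)}+\log\frac{1}{\nu}\bigr)$ and then asserts $\sum_{e}\frac{x(e)}{\nu}\log\frac{\nu}{x(e)}\le \log\nu$ while discarding the second piece using $\log(1/\nu)\le 0$. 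Note that this asserted inequality is exactly the entropy bound that your Jensen argument shows is controlled by $\log k$, not by $\log\nu$; so the concern you raise is genuine, and the paper's proof does not address it by additional structure either. For what it is worth, a bound of $O(\nu\log n)$ in place of $\nu\log\nu$ would follow immediately from your Jensen step (since $k\le\binom n2$) and would still suffice for the downstream application in \Cref{thm:sparsification}, where $\alpha=\Theta(\varepsilon/\log n)$; if you want to salvage your write-up, you could either weaken the lemma's conclusion to $\nu(G)-O(\alpha\,\nu(G)\log n)$ and finish cleanly with Jensen, or actually carry out the KKT analysis you gesture at to control the effective support size. As it stands, the proof is incomplete.
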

\begin{proof}
    Let $M$ be the maximum matching of $G$ and $\bm 1_{M}$ be its indicator vector.
    We have
    \[
        x(E) + \alpha \sum_{v \in V}H(\{x(e)\}_{e \in E(v)}) 
        \geq 
        \bm 1_{M}(E) + \alpha \sum_{v \in V}H(\{\bm 1_{M}(e)\}_{e \in E(v)})
        \geq |M|.       
    \]
    Hence, we have 
    \begin{align*}
        & x'(E) = x(E) \geq |M| - \alpha \sum_{v \in V}H(\{x(e)\}_{e \in E(v)}) 
        = |M| - \alpha \sum_{e \in E} x(e) \log \frac{1}{x(e)} \\
        & = |M| - \alpha |M| \sum_{e \in E} \frac{x(e)}{|M|} \log \left(\frac{|M|}{x(e)} \frac{1}{|M|}\right)
        = |M| - \alpha |M| \sum_{e \in E} \frac{x(e)}{|M|} \left(\log \frac{|M|}{x(e)} + \log \frac{1}{|M|}\right) \\
        & \geq |M| - \alpha |M| \log |M| - \alpha |M| \sum_{e \in E} \frac{x(e)}{|M|} \log \frac{1}{|M|} \\
        & \geq |M| - \alpha |M| \log |M|.
    \end{align*}
\end{proof}

\begin{lemma}\label{lem:lp-sensitivity}
    Let $G=(V,E)$ be a graph and let $\tilde G=(V,\tilde E= E-e)$ for some $e =(u,v)\in E$.
    Let $x$ and $\tilde x$ be the optimal solutions of the regularized LP~\eqref{eq:regularized-LP} for $G$ and $\tilde G$, respectively.
    Then, we have
    \[
        \sum_{v \in V} \left(\sum_{f \in E_G(v)}|\tilde x(f) - x(f)|\right)^2 \leq O\left(\frac{\log n}{\alpha}\right).
    \]
\end{lemma}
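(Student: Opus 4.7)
The plan is to invoke strong concavity of the entropy regularizer. Define
\[
F_G(y) \defeq y(E) + \alpha \Omega_G(y), \quad \text{where} \quad \Omega_G(y) \defeq \sum_{v \in V} H\bigl(\{y(f)\}_{f \in E_G(v)}\bigr),
\]
so that $x$ and $\tilde x$ are the maximizers of $F_G$ and $F_{\tilde G}$, respectively, on the common feasible region $\mathscr{M}_{K_V} \cap \mathbb R_{\ge 0}^{\binom{V}{2}}$. The first step is to show that $\Omega_G$ is $1$-strongly concave on this region with respect to the norm $\|z\|_* \defeq \sqrt{\sum_{v\in V}\bigl(\sum_{f \in E_G(v)}|z(f)|\bigr)^2}$, whose square is exactly the quantity on the left-hand side of the lemma. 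Each vertex-local Shannon entropy $H(\{y(f)\}_{f \in F})$ is $1$-strongly concave on $\{y \ge 0 : y(F) \le 1\}$ with respect to $\|\cdot\|_1$: the Hessian is $-\mathrm{diag}(1/y_f)$ and Cauchy--Schwarz gives $d^\top \mathrm{diag}(1/y_f) d = \sum d_f^2/y_f \ge (\sum |d_f|)^2/\sum y_f \ge \|d\|_1^2$, using that $\mathscr{M}_{K_V}$ enforces $y(E_{K_V}(v)) \le 1$. Summing this inequality over $v \in V$ gives $1$-strong concavity of $\Omega_G$ in $\|\cdot\|_*$, and hence $F_G$ is $\alpha$-strongly concave in $\|\cdot\|_*$.

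Next, I derive the key stability inequality. By $\alpha$-strong concavity and optimality of $x$ for $F_G$,
\[
F_G(x) - F_G(\tilde x) \ge \tfrac{\alpha}{2}\|x - \tilde x\|_*^2,
\]
and by optimality of $\tilde x$ for $F_{\tilde G}$, $F_{\tilde G}(\tilde x) \ge F_{\tilde G}(x)$. Adding these yields
\[
\tfrac{\alpha}{2}\|x - \tilde x\|_*^2 \;\le\; (F_G - F_{\tilde G})(x) - (F_G - F_{\tilde G})(\tilde x).
\]
I then compute the discrepancy pointwise: for the deleted edge $e = (u,v)$, $y(E) - y(\tilde E) = y(e)$, and since removing $e$ deletes exactly one term from each of $H(\{y(f)\}_{f \in E_G(u)})$ and $H(\{y(f)\}_{f \in E_G(v)})$, $\Omega_G(y) - \Omega_{\tilde G}(y) = 2 y(e) \log(1/y(e))$. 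Hence $(F_G - F_{\tilde G})(y) = y(e) + 2\alpha\, y(e) \log(1/y(e))$. Since $y(e) \in [0,1]$ and $t \log(1/t) \le 1/e$ on this interval, both terms are $O(1)$ (absorbing the small $\alpha$), so the right-hand side is $O(1)$. Dividing by $\alpha/2$ yields $\|x - \tilde x\|_*^2 \le O(1/\alpha)$, which is stronger than the claimed $O(\log n/\alpha)$.

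The main subtlety lies in justifying the $1$-strong concavity of $\Omega_G$ in the block-structured norm $\|\cdot\|_*$: this critically relies on the capacity constraints of $\mathscr{M}_{K_V}$ to place each vertex-local marginal on a sub-simplex with total mass at most $1$. Once this is in place, the rest of the proof is a standard application of the two-sided optimality gap for strongly concave objectives, combined with the observation that the discrepancy $F_G - F_{\tilde G}$ is essentially supported on the single deleted edge $e$ (up to the harmless regularizer correction $2\alpha\, y(e)\log(1/y(e))$).
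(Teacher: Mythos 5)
Your proof is correct and follows the same overall plan as the paper: lower-bound the two-sided optimality gap
\[
D \;=\; \bigl(F_G - F_{\tilde G}\bigr)(x) - \bigl(F_G - F_{\tilde G}\bigr)(\tilde x)
\]
via strong concavity of the entropy regularizer in the block norm $\|z\|_*^2 = \sum_v \bigl(\sum_{f\in E_G(v)}|z(f)|\bigr)^2$, and upper-bound $D$ directly. The difference is in the upper bound on $D$: the paper expands $D$ into the linear part plus eight leftover vertex-entropy terms and bounds each entropy term crudely by $O(\log n)$, arriving at $D \le O(\log n)$. You instead observe that $F_G - F_{\tilde G}$ depends only on the single coordinate $y(e)$, namely $(F_G - F_{\tilde G})(y) = y(e) + 2\alpha\, y(e)\log(1/y(e))$, which is uniformly $O(1)$ on $[0,1]$ (for $\alpha = O(1)$). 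This gives the strictly tighter conclusion $\|x-\tilde x\|_*^2 \le O(1/\alpha)$ rather than $O(\log n/\alpha)$; the improvement is only a $\log n$ factor and does not change the downstream $\tilde O(\sqrt{n/\alpha})$ sensitivity bound, but it is a cleaner accounting. You also re-derive the $1$-strong concavity of the per-vertex entropy in the block $\ell_1$-$\ell_2$ norm from the Hessian plus Cauchy--Schwarz (using $\sum_{f\in E_{K_V}(v)}y(f)\le 1$), whereas the paper cites Lemma~7.13 of~\cite{kumabe2022lipschitz-arxiv} for this step. One small technicality you should acknowledge explicitly: since the optimizer $\tilde x$ may have $\tilde x(e)=0$, the Hessian of the entropy blows up at the endpoint of the segment $[x,\tilde x]$; this only reinforces concavity, so the quadratic-growth inequality still holds, but it merits a one-line justification.
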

\begin{proof}    
    Let $h_G:\mathbb R_{\geq 0}^{\binom{V}{2}} \to \mathbb R_{\geq 0}$ be the objective function of \eqref{eq:regularized-LP}.
    Define $D := h_G(x) - h_G(\tilde x) + h_{\tilde G}(\tilde x) - h_{\tilde G}(x)$.
    First note that 
    \begin{align*}
        & D 
        = x(E) - \tilde  x(E) + \tilde  x(\tilde  E) - x(\tilde  E) \\
        & \quad + H(\{x(e)\}_{e \in E_G(u)}) + H(\{x(e)\}_{e \in E_G(v)})
        - (H(\{\tilde x(e)\}_{e \in E_G(u)}) + H(\{x(e)\}_{\tilde e \in E_G(v)})) \\
        & \qquad + H(\{\tilde x(e)\}_{e \in E_{\tilde G}(u)}) + H(\{\tilde x(e)\}_{e \in E_{\tilde G}(v)})
        - (H(\{x(e)\}_{e \in E_{\tilde G}(u)}) + H(\{x(e)\}_{e \in E_{\tilde G}(v)})) \\
        & \leq x(e) - \tilde x(e) + 8\log n
        \leq 9\log n.
    \end{align*}
    By strong concavity of the entropy function with respect to $\ell_1$, we have by Lemma 7.13 of \cite{kumabe2022lipschitz-arxiv} that 
    \begin{align*}
        & h_G(x)  - h_G(\tilde x) \geq \alpha \sum_{v \in V} \left(\sum_{e \in E_G(v)}|\tilde x(e) - x(e)|\right)^2 \\
        & h_{\tilde G}(\tilde x) - \tilde h_{\tilde G}(x) \geq \alpha \sum_{v \in V} \left(\sum_{e \in E_{\tilde G}(v)}|\tilde x(e) - x(e)|\right)^2.
    \end{align*}        
    Hence, we have
    \begin{align*}
        D 
        & \geq \alpha \sum_{v \in V} \left(\sum_{e \in E_G(v)}|\tilde x(e) - x(e)|\right)^2 + \alpha \sum_{v \in V} \left(\sum_{e \in E_{\tilde G}(v)}|\tilde x(e) - x(e)|\right)^2  \\
        & \geq \alpha \sum_{v \in V} \left(\sum_{e \in E_G(v)}|\tilde x(e) - x(e)|\right)^2,
    \end{align*}
    which implies
    \[
        \sum_{v \in V} \left(\sum_{e \in E(v)}|\tilde x(e) - x(e)|\right)^2
        \leq O\left(\frac{\log n}{\alpha}\right).
    \]
\end{proof}

\subsection{Sparsification}\label{subsec:sparsification}

To sparsify the graph preserving the matching size, we use a slight variant of the notion of degree sparsifier~\cite{chen2025entropy}.
For a graph $G=(V,E)$, a vertex $v \in V$, and a vector $x \in \mathbb R^E$, let $x(v) = \sum_{e \in E(v)}x(e)$.
\begin{definition}[Degree sparsifier]\label{def:degree-sparsifier}
    Let $G=(V,E)$ be a graph.
    For a fractional matching $x \in \mathscr M_G$, a subgraph $H = (V,F) \subseteq \mathrm{supp}(x)$ is an $(\Delta,\varepsilon)$-degree-sparsifier for $\Delta = \Delta(n,m,\varepsilon)$ of $G$ if $\Delta(v) \leq \Delta$ for every $v \in V$\footnote{In the original definition, we require that $|F| \leq s \|x\|_1$ for some $s = s(n,m,\varepsilon)$.} and there exists a fractional matching $x^{(H)} \in \mathscr M_G$ supported on $F$, called a certificate of $H$, such that
    \begin{itemize}
        \item $\|x^{(H)}\|_1 \geq(1-\varepsilon)\|x\|_1$.
        \item $x^{(H)}(v) \geq x(v)-\varepsilon$ for all $v\in V$.
        \item $x^{(H)}(B) \geq x(B)-\varepsilon |B|/3 \geq x(B)-\varepsilon \lfloor|B|/2\rfloor$ for all odd sets $B \in \mathcal O_G$.    
    \end{itemize}
\end{definition}
Note that the maximum matching size of an $(\Delta,\varepsilon)$-degree-sparsifier $H$  for a fractional matching $x \in \mathscr M_G$ is at least $(1-\varepsilon)\|x\|_1$.

\begin{lemma}[Lemma 6.15 of~\cite{chen2023entropy-arxiv}]\label{lem:chen}
    Let $G=(V,E)$ be a graph.
    For any fractional matching $x \in \mathscr M_G$ with $\|x\|_1 \geq 1$ and constant $c > 0$, the random subgraph $H$ obtained by sampling each edge $e$ independently with probability $p_e$ defined as
    \[
        p_e := \min\left\{1,\frac{x(e)}{\gamma}\right\}, \text{ where } \gamma := \frac{\varepsilon^2}{320 \max\{1,c\} \log n},
    \]
    is an $(O(\varepsilon^{-2}\log n),\varepsilon)$-degree-sparsifier with probability at least $1-n^{-c}$.
\end{lemma}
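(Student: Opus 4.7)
My plan is to construct an explicit random certificate
\[
x^{(H)}(e) := \begin{cases} \gamma \cdot \mathbf{1}[e\in F] & \text{if } x(e)\le \gamma,\\ x(e) & \text{if } x(e)>\gamma,\end{cases}
\]
which is supported on $F$ (edges in the second case have $p_e=1$ and are always sampled) and satisfies $\E{x^{(H)}(e)}=x(e)$. By linearity, $\E{x^{(H)}(v)}=x(v)$ and $\E{x^{(H)}(B)}=x(B)$ for every vertex $v$ and every odd set $B$. The vertex-level analysis is the easy part: the deterministic contribution to $\Delta_H(v)$ is at most $x(v)/\gamma\le 1/\gamma$, the random contribution has expectation at most $1/\gamma$, so a multiplicative Chernoff bound yields $\Delta_H(v)=O(1/\gamma)=O(\varepsilon^{-2}\log n)$ with failure probability $n^{-\Omega(c)}$. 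A similar additive Chernoff bound on $x^{(H)}(v)$, whose summands lie in $[0,\gamma]$, gives $|x^{(H)}(v)-x(v)|\le \varepsilon/4$ with failure probability $\exp(-\Omega(\varepsilon^2/\gamma))=n^{-\Omega(c)}$, and a union bound over the $n$ vertices handles them all.

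The main obstacle is the odd-set constraint, since $|\mathcal O_G|$ is exponential in $n$ and naive union bounding cannot work. The key idea is that the Chernoff tail scales with $|B|$: for a fixed odd $B$ of size $k$, each summand lies in $[0,\gamma]$ and $x(B)\le k/2$, so Bernstein's inequality yields
\[
\Pr{|x^{(H)}(B)-x(B)|>\tfrac{\varepsilon k}{12}}\le \exp\!\left(-\Omega\!\left(\tfrac{\varepsilon^2 k}{\gamma}\right)\right)=\exp(-\Omega(c\,k\log n)).
\]
Since there are at most $\binom{n}{k}\le n^k$ odd sets of size $k$, union bounding contributes $n^{-\Omega(ck)}$ per size class, and summing over $k\ge 3$ yields total failure probability $n^{-\Omega(c)}$. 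The constant $320\max\{1,c\}$ in the definition of $\gamma$ is calibrated precisely so that $\varepsilon^2/\gamma$ dominates $\log n$ by a sufficient margin to outpace the $n^k$ from the union bound; I expect this tight bookkeeping to be where all the constants in the lemma are pinned down.

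Finally, conditioning on all three concentration events, I would rescale $x^{(H)}\leftarrow x^{(H)}/(1+\varepsilon/2)$. Using $|B|\le 3\lfloor|B|/2\rfloor$ for odd $|B|\ge 3$, the upper deviations $x^{(H)}(v)\le 1+\varepsilon/4$ and $x^{(H)}(B)\le \lfloor k/2\rfloor(1+\varepsilon/4)$ become feasible for $\mathscr M_G$ after rescaling, while the lower deviations survive rescaling with enough slack to give $x^{(H)}(v)\ge x(v)-\varepsilon$ and $x^{(H)}(B)\ge x(B)-\varepsilon|B|/3$, as required by \Cref{def:degree-sparsifier}. The total-mass bound $\|x^{(H)}\|_1\ge (1-\varepsilon)\|x\|_1$ then follows by summing the vertex bound, using the hypothesis $\|x\|_1\ge 1$ to absorb the additive $\varepsilon$ into a multiplicative $(1-\varepsilon)$ factor.
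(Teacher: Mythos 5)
The paper does not reprove this lemma; it cites Lemma~6.15 of \cite{chen2023entropy-arxiv} and only appends a short remark extracting the degree bound (which the original definition did not include) from Claim~6.17 of that reference. Your blind reconstruction nonetheless follows essentially the same route as the cited proof: you build the explicit random certificate $x^{(H)}$, which up to the $\ge$-versus-$>$ convention at the boundary $x(e)=\gamma$ is exactly the $x'$ appearing in Claim~6.17 of \cite{chen2023entropy-arxiv}; the degree bound can be read off even more directly than you do, since $\Delta_H(v)\le x^{(H)}(v)/\gamma$ and the vertex concentration already controls $x^{(H)}(v)$; and you handle the exponentially many odd-set constraints via a Bernstein tail scaling linearly in $|B|$ combined with a union bound over the at most $n^k$ odd sets of each size $k$, which is indeed the key structural idea.

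The one genuine gap is in your total-mass argument. You claim that $\|x^{(H)}\|_1\ge(1-\varepsilon)\|x\|_1$ ``follows by summing the vertex bound,'' but summing $|x^{(H)}(v)-x(v)|\le\varepsilon/4$ over all $n$ vertices yields only $\bigl|\|x^{(H)}\|_1-\|x\|_1\bigr|\le n\varepsilon/8$, an additive error of order $n\varepsilon$. This is far larger than $\varepsilon\|x\|_1$ whenever $\|x\|_1\ll n$, which is the typical case, and the hypothesis $\|x\|_1\ge 1$ cannot rescue it (that hypothesis only turns an $O(\varepsilon)$ additive error into a multiplicative one, not an $O(n\varepsilon)$ one). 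The fix is to apply a concentration bound directly to $\|x^{(H)}\|_1=\sum_e x^{(H)}(e)$: this sum has mean $\|x\|_1$, each random summand lies in $[0,\gamma]$, and its variance is at most $\gamma\|x\|_1$, so Bernstein gives
\[
\Pr{\left|\,\|x^{(H)}\|_1-\|x\|_1\,\right|>\tfrac{\varepsilon}{2}\|x\|_1}
\le \exp\!\left(-\Omega\!\left(\frac{\varepsilon^2\|x\|_1}{\gamma}\right)\right)
\le n^{-\Omega(c)},
\]
where the last step uses $\|x\|_1\ge 1$ and $\varepsilon^2/\gamma=\Theta(c\log n)$. Adding this to the list of conditioning events and then rescaling exactly as you describe closes the proof.
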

We remark on the maximum degree guarantee, which we modified from the original definition of a degree sparsifier and which is not addressed in \cite{chen2025entropy}.  
In Claim 6.17 of \cite{chen2023entropy-arxiv}, they define  
\[
x'(e) = \begin{cases}
x(e) & \text{if } x(e) \geq \gamma, \\
\gamma & \text{if } x(e) < \gamma \text{ and } e \in H, \\
0 & \text{if } x(e) < \gamma \text{ and } e \notin H,
\end{cases}
\]  
and show that $|x'(v) - x(v)| \leq \varepsilon$ with probability at least $1 - 2n^{-20\max\{1,c\}}$.  
Note that $x'(v)/\gamma$ is an upper bound on the degree of $v$ in $H$.  
Hence, this implies that $\Delta(v) \leq (x(v)+\varepsilon)/\gamma = O(1/\gamma)$ with high probability, and we obtain \Cref{lem:chen}.

Our sparsification algorithm is given in Algorithm~\ref{alg:sparsification}.
After computing $x \in \mathbb R^{\binom{V}{2}}$, we first truncate it so it is supported by $E$.
Then for each $e \in E$, we add $e$ to the sparsifier $H=(V,F)$ with probability $p_e$ independently from others.

\begin{algorithm}[t!]
  \caption{Sparsify$(G, \varepsilon, \alpha)$}
  \label{alg:sparsification}
  \begin{algorithmic}[1]
    \Procedure{Sparsify}{$G, \varepsilon, \alpha$}
      \State Solve LP~\eqref{eq:regularized-LP} with $\alpha$ and let $x \in \mathbb{R}^{\binom{V}{2}}$ be the obtained solution.
      \State Let $x' := x|_E \in \mathbb{R}^{\binom{V}{2}}$ be the fractional matching over $E$.
      \State $F \gets \emptyset$
      \For{$e \in E$}
        \State \text{Add $e$ to $F$ with probability $p_e$}
      \EndFor
      \State \Return $H = (V, F)$
    \EndProcedure
  \end{algorithmic}
\end{algorithm}

\begin{theorem}\label{thm:sparsification}
    Let $H= \Call{Sparsify}{G,\varepsilon,\alpha}$.
    Then, 
    \begin{itemize}
        \item With probability at least $1-n^{-2}$, the maximum degree of $H$ is $O(\varepsilon^{-2}\log n)$ and 
        \[
            \nu(H) \geq (1-\varepsilon)(\nu(G)-\alpha \nu(G)\log \nu(G)).
        \]
        \item  The sensitivity of \Call{Sparsify}{} is 
        \[
            O\left(\frac{\sqrt{n} \log^{3/2}n}{\varepsilon^2 \sqrt{\alpha}}\right).
        \]
    \end{itemize}
\end{theorem}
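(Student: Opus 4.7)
The proof splits cleanly into the two claims.  For the first claim, observe that $x' = x|_E$ lies in $\mathscr M_G$ since $x \in \mathscr M_{K_V}$, and by \Cref{lem:lp-approximation} satisfies $\|x'\|_1 = x'(E) \ge \nu(G) - \alpha\nu(G)\log\nu(G) \ge 1$ in the only nontrivial regime (the case $\nu(G)=0$ is trivial, and $\alpha$ is taken small enough that $\alpha\log\nu(G) < 1/2$).  Applying \Cref{lem:chen} to $x'$ with $c = 2$ yields that with probability at least $1 - n^{-2}$, the sampled $H$ is an $(O(\varepsilon^{-2}\log n), \varepsilon)$-degree sparsifier for $x'$; the maximum-degree bound is immediate, and \Cref{def:degree-sparsifier} furnishes a certificate $x^{(H)} \in \mathscr M_G$ supported on the edges of $H$ with $\|x^{(H)}\|_1 \ge (1-\varepsilon)\|x'\|_1$.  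Since $x^{(H)}$ also lies in $\mathscr M_H$ (the degree and odd-set constraints defining $\mathscr M_H$ agree with those of $\mathscr M_G$ on vectors supported in $E(H)$), we obtain $\nu(H) \ge \|x^{(H)}\|_1 \ge (1-\varepsilon)(\nu(G) - \alpha\nu(G)\log\nu(G))$.

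For the sensitivity bound, fix any $e \in E$, let $\tilde G = G - e$, and let $\tilde x$ and $\tilde p_f := \min\{1, \tilde x(f)/\gamma\}$ be the corresponding LP solution and sampling probabilities.  Strict concavity of the entropy regularizer in the coordinates indexed by $E$ ensures that $x|_E$ (and similarly $\tilde x|_{\tilde E}$) is uniquely determined by the input graph, so we may view $\{p_f\}_f$ and $\{\tilde p_f\}_f$ as deterministic functions of the inputs (after a fixed tie-break on the unregularized coordinates outside $E$, which do not affect any $p_f$).  Use the natural edge-wise coupling that draws independent uniforms $U_f \in [0,1]$ for every $f \in \binom{V}{2}$, includes $f$ in $H$ iff $f \in E$ and $U_f < p_f$, and includes $f$ in $\tilde H$ iff $f \in \tilde E$ and $U_f < \tilde p_f$.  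Under this coupling,
\begin{align*}
    \We{\mathrm{Sparsify}(G), \mathrm{Sparsify}(\tilde G)}
    \le \E{d_E(H, \tilde H)}
    \le p_e + \sum_{f \in \tilde E}|p_f - \tilde p_f|
    \le 1 + \frac{1}{\gamma}\sum_{f \in E}|x(f) - \tilde x(f)|,
\end{align*}
where the last inequality uses the $1/\gamma$-Lipschitzness of $t \mapsto \min\{1, t/\gamma\}$.

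To finish, I convert the second-moment bound of \Cref{lem:lp-sensitivity} into an $\ell_1$ bound via Cauchy--Schwarz:
\begin{align*}
    \left(\sum_{v \in V}\sum_{f \in E_G(v)} |x(f) - \tilde x(f)|\right)^{\!2}
    \le n \sum_{v \in V}\left(\sum_{f \in E_G(v)}|x(f) - \tilde x(f)|\right)^{\!2}
    \le O\!\left(\frac{n\log n}{\alpha}\right).
\end{align*}
Each edge of $E$ is counted exactly twice in the double sum, so $\sum_{f \in E}|x(f) - \tilde x(f)| = O(\sqrt{n\log n/\alpha})$.  Substituting this into the previous display together with $\gamma = \Theta(\varepsilon^2/\log n)$ (from \Cref{lem:chen}) yields the claimed sensitivity $O(\sqrt{n}\,\log^{3/2} n/(\varepsilon^2\sqrt{\alpha}))$.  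The only genuine subtlety is treating the LP output as a deterministic function of the graph, which is handled by the strict concavity of the entropy on $\mathbb{R}_{\ge 0}^E$; once that is in place, the sensitivity bound is essentially a Lipschitz transfer from LP stability to the independent-edge sampling distribution.
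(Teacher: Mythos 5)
Your proof is correct and follows essentially the same route as the paper's: the first bullet is an application of \Cref{lem:chen} with $c=2$ plus \Cref{lem:lp-approximation}, and the sensitivity bound comes from transferring the second-moment estimate of \Cref{lem:lp-sensitivity} to an $\ell_1$ bound via Cauchy--Schwarz and then to the Wasserstein distance of the independent edge-sampling distributions. Your explicit independent-uniforms coupling and the $1/\gamma$-Lipschitzness of $t\mapsto\min\{1,t/\gamma\}$ are just a cleaner spelling-out of the same estimate the paper uses, and your aside on uniqueness of the LP optimum (so that $p_f$ is a well-defined function of the input graph) fills a small gap the paper leaves implicit.
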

\begin{proof}
    We instantiate \Cref{lem:chen} with $c=2$, and we obtain that $H$ is an $O(\varepsilon^{-2})$-sparse $\varepsilon$-degree sparsifier with probability at least $1-n^{-2}$ by \Cref{lem:chen}.
    Then by \Cref{lem:lp-approximation} and the observation below \Cref{def:degree-sparsifier}, the maximum matching size of $H$ is at least $(1-\varepsilon)(\nu(G)-\alpha \nu(G)\log \nu(G))$.

    Next, we analyze the sensitivity of \Call{Sparsify}{}.
    Let $G=(V,E)$ be a graph and $\tilde G = (V,\tilde E = E-e)$ for an edge $e \in E$.
    Let $x$ and $\tilde x$ be the solutions of the regularized LP~\eqref{eq:regularized-LP} for $G$ and $G'$, respectively.
    Let $x' = x|_E$ and $\tilde x' = \tilde x|_{\tilde E}$ be the truncated LP solutions.
    Then by \Cref{lem:lp-sensitivity}, we have 
    \[
        \sum_{v \in V} \left(\sum_{f \in E_G(v)}|\tilde x'(f) - x'(f)|\right)^2
        = O\left(\frac{\log n}{\alpha}\right).
    \]
    Then, we have
    \begin{align*}        
        & \sum_{f \in E} |x'(f) - \tilde x'(f)|
        =
        \frac{1}{2}\sum_{v \in V}  \sum_{e \in E_G(v)} |x'(f) - \tilde x'(f)| \\
        & \leq 
        \frac{1}{2}\sqrt{\sum_{v \in V} \left(\sum_{f \in E_G(v)} |x'(f) - \tilde x'(f)|\right)^2} \sqrt{\sum_{v \in V}1^2} \tag{by Cauchy-Schwarz} \\
        & \leq O\left(\sqrt{\frac{n\log n}{\alpha} }\right).
    \end{align*}
    Let $H= \Call{Sparsify}{G,\varepsilon,\alpha}$ and $\tilde H= \Call{Sparsify}{\tilde G,\varepsilon,\alpha}$.
    In the construction of a degree sparsifier of \Cref{lem:chen}, we sample edges independently from others.
    Hence by \Cref{lem:lp-sensitivity}, we have 
    \begin{align*}
        & \We{E(H), E(\tilde H)}
        \leq 
        \frac{1}{2}\sum_{f \in \tilde E} \left|\Pr{f \in E(H)} - \Pr{f \in E(\tilde H)}\right| + \frac{1}{2}\Pr{f \in E} \\
        & \leq \frac{1}{2\gamma} \sum_{f \in E} |x'(f) - \tilde x'(f)| 
        = O\left(\frac{1}{\gamma}\sqrt{\frac{n\log n}{\alpha}}\right)
        = O\left(\frac{\sqrt{n} \log^{3/2} n}{\varepsilon^2 \sqrt{\alpha}}\right).
    \end{align*}
\end{proof}

As the maximum degree of the degree-sparsifier we construct is $O(\log n)$, we use \Cref{lem:glauber-edge-time-jerrum} instead of \Cref{lem:glauber-edge-time}.

\begin{proof}[Proof of~\Cref{thm:unbounded-degree-algorithm}]
    Consider the algorithm first apply \Cref{thm:sparsification} with $\alpha = O(\varepsilon/\log n)$, and then apply  the algorithm of \Cref{thm:bounded-degree-algorithm} on the resulting graph.
    However, we use \Cref{lem:glauber-edge-time-jerrum} instead of \Cref{lem:glauber-edge-time} for the analysis of the running time.

    As the resulting graph is $(O(\varepsilon^{-2}\log n),\varepsilon)$-degree sparsifier with probability at least $1-1/n^2$, the expected size of the output matching is 
    \[
        (1-\varepsilon) \left(1-\frac{1}{n^2}\right)  (1-\varepsilon)\left(\nu(G) - \frac{\varepsilon}{\log n} \nu(G)\log \nu(G)\right)   
        = (1-O(\varepsilon))\nu(G).
    \]
    The sensitivity is
    \[
        O\left(\frac{\sqrt{n}\log^2 n}{\varepsilon^{5/2} }\right) 
        \cdot 
        \left(\left(1-\frac{1}{n^2}\right) \left(\frac{\log n}{\varepsilon^2}\right)^{O(1/\varepsilon)} + \frac{1}{n^2} \cdot n \right)
        = \sqrt{n} \left(\frac{\log n}{\varepsilon}\right)^{O(1/\varepsilon)}.
    \]
    The running time bound of the algorithm is 
    \[
        O\left(nm \max\{1,\lambda\} \left(n(\log n + \log \max\{1,\lambda\}) + \log \frac{1}{\delta}\right)\right)
    \]
    for $\lambda = (\varepsilon^{-2}\log n)^{O(1/\varepsilon)}$ and $\delta=1/n^2$, which is 
    \[
        O\left(n^2 m (\varepsilon^{-1}\log n)^{O(1/\varepsilon)}\right).
    \]    
\end{proof}

\section{Lower Bounds}\label{sec:lower-bound}

In this section, we show that the exponential dependence on $1/\varepsilon$ in the parameter $\lambda$ of our Gibbs-distribution-based algorithm (\Cref{thm:bounded-degree-algorithm}) is unavoidable in order to achieve a $(1 - \varepsilon)$-approximation:
\begin{theorem}\label{thm:lower-bound}
    Let $\varepsilon > 0$ be sufficiently small and $\lambda \geq 0$.
    If $\E[M\sim \mu_{E;\lambda,G}]{|M|} \geq (1-\varepsilon)\nu(G)$ for any graph $G$, then we have $\lambda = 2^{\Omega(1/\varepsilon)}$.
\end{theorem}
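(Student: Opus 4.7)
The plan is to exhibit, for each sufficiently small $\varepsilon > 0$, a specific graph $G_k$ whose matching generating polynomial has a root of magnitude $2^{\Omega(1/\varepsilon)}$, and then convert this into the desired lower bound on $\lambda$ via the identity $\E[M \sim \mu_{E;\lambda,G}]{|M|} = \sum_i \lambda/(\lambda + |\lambda_i|)$ from \Cref{lem:matching-size-root}. For $G_k$ I will take the \emph{upper-triangular bipartite graph} on vertex set $L \sqcup R$ with $L = \{l_1, \ldots, l_k\}$, $R = \{r_1, \ldots, r_k\}$, and edge set $\{l_i r_j \mid i \leq j\}$.

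The first step is to establish two combinatorial facts about $G_k$. First, $G_k$ has a unique perfect matching $\{l_i r_i \mid i \in [k]\}$, so $\nu(G_k) = k$ and $m_k(G_k) = 1$: this is because $r_1$ has $l_1$ as its only neighbor, forcing $l_1 r_1$ into every perfect matching, and the claim follows by induction on $k$. Second, letting $f_k \defeq m_{k-1}(G_k)$, I claim $f_k = 2^k - 1$, which I will prove via the recurrence $f_k = 2 f_{k-1} + 1$ (with $f_1 = 1$) by case-splitting on the match status of $l_1$. If $l_1$ is unmatched, then $r_1$ is also unmatched and the remainder is the unique perfect matching of $G_{k-1}$, contributing $1$. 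If $l_1 r_1 \in M$, then the remainder is an arbitrary $(k-2)$-matching of $G_{k-1}$, contributing $f_{k-1}$. Finally, if $l_1 r_j \in M$ with $j \geq 2$, then $r_1$ is again unmatched, and summing over $j$ is equivalent---via the natural bijection identifying $r_j$ as the unique unmatched right-vertex---to counting all $(k-2)$-matchings of $G_{k-1}$, which again contributes $f_{k-1}$.

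With $m_k(G_k) = 1$ and $m_{k-1}(G_k) = 2^k - 1$, I write $m_{G_k}(\lambda) = \prod_{i=1}^k (\lambda - \lambda_i)$ (a monic polynomial with real negative roots, by \Cref{cor:lower-bounds-of-roots-of-MGP}) and apply Vieta's formula to obtain $\sum_{i=1}^k |\lambda_i| = m_{k-1}(G_k) = 2^k - 1$. By averaging, there exists a root $\lambda_\ast$ with $|\lambda_\ast| \geq (2^k - 1)/k$. Now set $k \defeq \lfloor 1/(2\varepsilon) \rfloor$, so $\varepsilon k \leq 1/2$ and $k = \Theta(1/\varepsilon)$. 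Applying \Cref{lem:matching-size-root},
\[
\nu(G_k) - \E[M\sim \mu_{E;\lambda,G_k}]{|M|} \;=\; \sum_{i=1}^k \frac{|\lambda_i|}{\lambda + |\lambda_i|} \;\geq\; \frac{|\lambda_\ast|}{\lambda + |\lambda_\ast|}.
\]
A $(1-\varepsilon)$-approximation forces the left-hand side to be at most $\varepsilon k$, so in particular $|\lambda_\ast|/(\lambda + |\lambda_\ast|) \leq \varepsilon k$; rearranging this and using $1 - \varepsilon k \geq 1/2$ gives $\lambda \geq |\lambda_\ast|(1-\varepsilon k)/(\varepsilon k) \geq |\lambda_\ast|/(2\varepsilon k) \geq (2^k - 1)/(2\varepsilon k^2) = 2^{\Omega(1/\varepsilon)}$, where the polynomial prefactor $1/(2\varepsilon k^2)$ is harmlessly absorbed into the exponent.

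The main obstacle will be verifying the exponential identity $m_{k-1}(G_k) = 2^k - 1$; in particular, the bijection in the third case above---between $(k-2)$-matchings of $G_{k-1}$ with $r_j$ deleted (summed over $j \geq 2$) and all $(k-2)$-matchings of $G_{k-1}$---must be checked carefully, using the fact that every $(k-2)$-matching of $G_{k-1}$ leaves exactly one right-vertex unmatched. Once this combinatorial step is in place, the passage from the root identity to the $2^{\Omega(1/\varepsilon)}$ lower bound is a direct application of \Cref{lem:matching-size-root}, Vieta's formula, and elementary rearrangement.
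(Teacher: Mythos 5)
Your proof is correct, and it takes a genuinely different route from the paper's. The paper builds a ``chain of hexagons'' $G_\ell$ that is bipartite and planar, shows it has a unique perfect matching of size $2+2\ell$ and at least $2^\ell$ matchings of size one less, and then argues directly on the probability $p$ of sampling the unique maximum: $(1-\varepsilon)$-approximation forces $p \geq 1/4$, while the Gibbs weights give $p \leq \lambda/(\lambda + 2^\ell)$, so $\lambda \geq 2^\ell/3$. Your upper-triangular bipartite graph shares the two essential combinatorial features---unique perfect matching ($m_k=1$) and exponentially many near-maximum matchings ($m_{k-1}=2^k-1$)---but your analysis routes through the algebraic machinery: Vieta's formula gives $\sum_i|\lambda_i| = m_{k-1}$, averaging extracts a root of magnitude $\geq (2^k-1)/k$, and \Cref{lem:matching-size-root} converts this into the deficit $\nu - \E{|M|} \geq |\lambda_\ast|/(\lambda+|\lambda_\ast|)$, which a $(1-\varepsilon)$-approximation forces to be at most $\varepsilon k$, yielding $\lambda \geq |\lambda_\ast|(1-\varepsilon k)/(\varepsilon k) = 2^{\Omega(1/\varepsilon)}$. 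Your recurrence $f_k = 2f_{k-1}+1$ is correctly justified; in particular the third-case bijection works because every $(k-2)$-matching of $G_{k-1}$ leaves exactly one right-vertex unmatched, so the sum over $j \geq 2$ of matchings with $r_j$ unmatched is $f_{k-1}$ without over- or under-counting. What the paper's argument buys is brevity and a planar example; what yours buys is a tighter connection to the root-location picture developed in \Cref{subsec:matching-approximation}, showing that the $\varepsilon$-fraction of ``bad'' roots in \Cref{lem:upper-bounds-of-roots-of-MGP} can actually be as large as $2^{\Omega(1/\varepsilon)}$ even on a single extremal graph.
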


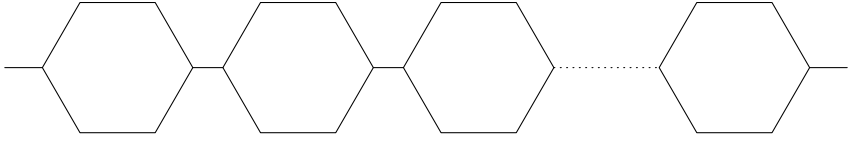
\begin{figure}
    \centering
    \begin{tikzpicture}[line join=round, line cap=round]
      \begin{scope}[shift={(0,0)}]
        \draw (0:1) -- (60:1) -- (120:1) -- (180:1) 
              -- (240:1) -- (300:1) -- cycle;
      \end{scope}
      \draw (-1,0) -- (-1.5,0);
      \draw (1,0) -- (1.4,0);
    
      \begin{scope}[shift={(2.4,0)}]
        \draw (0:1) -- (60:1) -- (120:1) -- (180:1) 
              -- (240:1) -- (300:1) -- cycle;
      \end{scope}
      \draw (3.4,0) -- (3.8,0);
    
      \begin{scope}[shift={(4.8,0)}]
        \draw (0:1) -- (60:1) -- (120:1) -- (180:1) 
              -- (240:1) -- (300:1) -- cycle;
      \end{scope}
      \draw[dotted] (5.8,0) -- (7.2,0);
    
      \begin{scope}[shift={(8.2,0)}]
        \draw (0:1) -- (60:1) -- (120:1) -- (180:1)
              -- (240:1) -- (300:1) -- cycle;
      \end{scope}
      \draw (9.2,0) -- (9.7,0);
    
    \end{tikzpicture}
    
    \caption{A chain of hexagons}\label{fig:hexagon}
\end{figure}

Let $H$ be the hexagon graph with two distinguished vertices $s$ and $t$, connected by two internally vertex-disjoint paths of length $3$.  
Formally, let  
\[
    V(C_k) = \{s, t, u_1, v_1, u_2, v_2\}
\]  
and  
\[
    E(H) = \left\{(s, u_i), (u_i, v_i), (v_i, t) \mid i \in [2]\right\}.
\]
For positive integer $\ell$, we construct a chain-of-hexagon graph $G_\ell$ by chaining together $\ell$ copies of $H$, denoted $H^{(1)}, \ldots, H^{(\ell)}$, and then connecting them with additional edges to form a path-like structure from a new source $s$ to a new target $t$.
Specifically, let  
\[
    V(G_\ell) = \{s, t\} \cup \bigcup_{j=1}^\ell V(H^{(j)}),
\]  
where each $H^{(j)}$ is an independent copy of $H$, with endpoints $s^{(j)}$ and $t^{(j)}$. 
The edge set is defined as  
\[
    E(G_\ell) = \{(s, s^{(1)}), (t^{(\ell)}, t)\} \cup \bigcup_{j=1}^{\ell-1} \{(t^{(j)}, s^{(j+1)})\} \cup \bigcup_{j=1}^\ell E(H^{(j)}).
\]
See \Cref{fig:hexagon} for a schematic illustration.
We note that this graph has been already appeared in \cite[Section 5.4]{jerrum2003counting}.

\begin{lemma}\label{lem:chain-of-hexagons}
    We have the following:
    \begin{itemize}
    \item The graph $G_\ell$ is bipartite and planar.
    \item $G_\ell$ has a unique perfect matching with size $\nu(G_\ell) = 2 + 2\ell$.
    \item The number of matchings of size $\nu(G_\ell)-1$ is at least $2^\ell$.
    \end{itemize}
\end{lemma}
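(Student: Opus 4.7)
My plan is to verify the three claims by direct combinatorial inspection of $G_\ell$, proceeding in the stated order.

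For bipartiteness and planarity, I would exhibit an explicit $2$-coloring of each hexagon: place $s^{(j)}, v_1^{(j)}, v_2^{(j)}$ in color class $A$ and $u_1^{(j)}, u_2^{(j)}, t^{(j)}$ in color class $B$, so that all six edges of $H^{(j)}$ cross the bipartition. Assigning $s \in B$ and $t \in A$, the bridging edges $(s, s^{(1)})$, $(t^{(\ell)}, t)$, and $(t^{(j)}, s^{(j+1)})$ all go between $A$ and $B$, establishing bipartiteness globally. Planarity is witnessed by drawing the hexagons sequentially along a horizontal line and joining consecutive copies by straight bridging segments, which introduces no crossings.

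For the uniqueness of the perfect matching, I would use an iterative forcing argument along the chain. Since $s$ has degree $1$ in $G_\ell$ with unique neighbor $s^{(1)}$, the edge $(s, s^{(1)})$ must lie in every perfect matching. Having consumed $s^{(1)}$, the vertex $u_1^{(1)}$ now has $v_1^{(1)}$ as its only remaining neighbor, which forces $(u_1^{(1)}, v_1^{(1)})$; symmetrically $(u_2^{(1)}, v_2^{(1)})$ is forced. Then $t^{(1)}$ has lost its hexagon-internal neighbors $v_1^{(1)}, v_2^{(1)}$, so it must be matched via $(t^{(1)}, s^{(2)})$. Iterating this forcing through $H^{(2)}, \ldots, H^{(\ell)}$ determines every edge, giving a single perfect matching and pinning down $\nu(G_\ell)$.

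For the exponential lower bound on matchings of size $\nu(G_\ell) - 1$, I would construct $2^\ell$ explicit matchings as follows: leave $s$ and $t$ unmatched and use no bridging edge; in each hexagon $H^{(j)}$, independently pick one of the two perfect matchings of the underlying $6$-cycle (the two alternating edge sets around the cycle). Distinct sequences of choices give distinct matchings since they differ on the internal edges of some $H^{(j)}$. Each resulting matching has the same size and covers every internal vertex of every hexagon, so its deficiency from a perfect matching is exactly the two vertices $s, t$; comparing with the size computed in part (2) shows this size equals $\nu(G_\ell) - 1$.

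The step I expect to require the most care is the forcing argument in part (2): one must rule out \emph{mixed} perfect matchings in which some hexagon $H^{(j)}$ uses an internal perfect matching of the $6$-cycle while its neighbors use bridging edges. The argument hinges on two combinatorial observations: once both $s^{(j)}$ and $t^{(j)}$ are taken by bridging edges, the remaining four interior vertices of $H^{(j)}$ induce exactly the two disjoint edges $(u_1^{(j)}, v_1^{(j)})$ and $(u_2^{(j)}, v_2^{(j)})$, leaving no alternative; conversely, if $s^{(j)}$ is matched internally within $H^{(j)}$, then the bridging edge on the left side of $H^{(j)}$ is unavailable and a parity count along the chain shows the remainder cannot be perfectly matched. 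Once this is handled, the other two parts are essentially book-keeping.
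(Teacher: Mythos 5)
Your proof is correct. For part (2), you prove uniqueness by degree-forcing starting from $s$, rather than the paper's route of exhibiting $M$ and asserting the absence of $M$-alternating cycles; your argument is more elementary and fully self-contained. For part (3), you directly construct $2^\ell$ matchings by choosing one of the two perfect matchings of each hexagonal $6$-cycle and dropping $s$, $t$, and all bridge edges; the paper instead toggles $2^\ell$ alternating paths from $s^{(1)}$ to $t$ against $M - (s, s^{(1)})$. After unwinding the toggles, these produce exactly the same family of matchings, so the two are equivalent --- yours just avoids the alternating-path detour.

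Your forcing argument also silently corrects an arithmetic slip in the paper's statement. Since $|V(G_\ell)| = 2 + 6\ell$, any perfect matching has size $3\ell + 1$, not $2 + 2\ell$ (the two agree only at $\ell = 1$). Accordingly, the paper's exhibited $M$ omits the $\ell - 1$ bridge edges $(t^{(j)}, s^{(j+1)})$, which your cascade correctly identifies as forced. Your $2^\ell$ matchings then have size $3\ell = \nu(G_\ell) - 1$, as required. The downstream use in \Cref{thm:lower-bound} survives this correction --- $\nu(G_\ell) = \Theta(1/\varepsilon)$ either way, so the conclusion $\lambda = 2^{\Omega(1/\varepsilon)}$ still holds with different constants.

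One small simplification: the ``mixed'' case you flag as requiring care in part (2) is already excluded by your argument as written. Once $(s, s^{(1)})$ is forced, every subsequent edge is forced with no branching whatsoever, so there is never a point at which a hexagon could instead adopt an internal perfect matching of its $6$-cycle. The separate parity-count discussion you sketched is unnecessary.
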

\begin{proof}
    The first claim is clear.
    
    For the second claim, observe that the edge set  
    \[
        M := \{(s, s^{(1)}), (t^{(\ell)}, t)\} \cup \left\{(u_i^{(j)}, v_i^{(j)}) \mid i \in [2],\ j \in [\ell]\right\}
    \]  
    forms a perfect matching of size $2 + 2\ell$. Since there is no alternating cycle with respect to $M$, this matching is unique.
    
    For the third claim, let $M$ be the maximum matching described above. To construct a matching of size $\nu(G_\ell) - 1$, we remove the edge $(s, s^{(1)})$ from $M$.  
    Now consider an alternating path from $s^{(1)}$ to $t$, where for each $j \in [\ell]$, we select one of the two paths in $H^{(j)}$. Each such alternating path corresponds to a matching of size $\nu(G_\ell) - 1$, and all these matchings are distinct.  
    Hence, there are at least $2^\ell$ such matchings.
\end{proof}

\begin{proof}[Proof of \Cref{thm:lower-bound}]
    Consider a graph $G$ of $n$ vertices consisting of $G_\ell$ with $\ell = 1/(4\varepsilon)$ and $n-|V(G_\ell)|$ isolated vertices.
    Let $p$ be the probability that the we sample the unique perfect matching $M$ of size $\nu(G) = 2+1/(2\varepsilon)$.
    Then, we have
    \begin{align*}
        & p\nu(G) + (1-p)(\nu(G)-1) \geq (1-\varepsilon)\nu(G) \\
        \Rightarrow \quad & 
        p \geq 1 -\varepsilon\nu(G)
        = 1- \left(2\varepsilon+\frac{1}{2}\right) 
        = \frac{1}{2}- 2\varepsilon
        \geq \frac{1}{4},
    \end{align*}
    where the last inequality is by the assumption that $\varepsilon$ is sufficiently small.
    This implies that 
    \begin{align*}
        & \frac{\lambda^{\nu(G)}}{\lambda^{\nu(G)} + 2^\ell \lambda^{\nu(G)-1}} \geq p \geq \frac{1}{4} \\
        \Rightarrow \quad & \lambda^{\nu(G)} \geq \frac{1}{3} \cdot 2^\ell \lambda^{\nu(G)-1}\\
        \Rightarrow \quad & \lambda \geq \frac{1}{3} \cdot 2^\ell = 2^{\Omega(1/\varepsilon)}. 
    \end{align*}

\end{proof}

\bibliographystyle{siamplain}
\bibliography{main}

\appendix
\section{Sensitivity of Randomized Greedy upon Deleting a Vertex}\label{sec:flaw}

Given a graph $G = (V, E)$, the randomized greedy algorithm $A$ for the maximum matching problem works as follows. 
First, it chooses a random ordering $\pi$ over edges. 
Then starting with an empty matching $M$, it iteratively adds the $i$-th edge in the ordering $\pi$ to $M$ if it is not adjacent to any edge in $M$. 
Then, it is claimed in \cite[Theorem 12]{yoshida2021sensitivity} that, for any graph $G = (V, E)$ and a vertex $v \in V$, the sensitivity of $A$ upon deleting $v$ is one, that is, $\We{A(G), A(G - v)} \leq 1$, and this claim was crucially used to bound the sensitivity of their $(1-\epsilon)$-approximation algorithm for the maximum matching.
Unfortunately, their proof has a flaw, which we describe below.

First, we review their argument.
For a permutation $\pi$ over edges, let $M_\pi$ be the (deterministic) matching constructed by the randomized greedy algorithm when it uses $\pi$ as the ordering of edges.
For an edge $e \in E$, we call $\pi(e)$ the \emph{rank} of $e$, and let $I_\pi(e)$ be the set of edges sharing endpoints with $e$ with smaller rank, that is, $I_\pi(e) = \left\{e' \in N(e) \mid \pi(e') < \pi(e)\right\}$, where $N(e)$ is the set of edges that share an endpoint with $e$.
Note that the matching $M_\pi$ can be described by the following invariant:
An edge $e$ is in $M_\pi$ if and only if all of its neighbors $e' \in N(e) \cap I_\pi(e)$ are not in $M_\pi$.

Let $e_\pi \in E$ be the edge incident to $v$ with the smallest rank with respect to $\pi$.
We define $S_\pi \subseteq E$ to intuitively be the set of edges in $G$ that need to be changed to maintain the invariant. Formally, we set $S_{\pi,0} = \{e_\pi\}$ if $e_\pi \in M$ and $S_{\pi,0} = \emptyset$ otherwise. 
Then for $i > 0$, recursively set
\[
    S_{\pi,i} = \left\{e \in M_\pi \mid S_{\pi,i-1} \cap I_\pi(e) \neq \emptyset\right\} \cup \left\{e \not \in M_\pi \mid I_\pi(e)\cap M_\pi \subseteq \bigcup_{j=0}^{i-1}S_{\pi,j}\right\}
\]
We then define $S_\pi = \bigcup_i S_{\pi,i}$.
They showed that $\E[\pi]{|S_\pi|} \leq 1$, and claimed that it immediately implies $\We{A(G), A(G - v)} \leq 1$.
However, this argument ignores the case that some edge incident to $v$ with a higher rank than $e_\pi$ is in $M_\pi$ and deleting $v$ causes a cascade of changes to the matching through the edge.
A naive approach to address this issue is to define $S_\pi$ for every edge incident to $v$ and bound their sizes.
However, this only shows that $\We{A(G), A(G - v)} \leq \Delta(v)$, where $\Delta(v)$ is the degree of $v$, and it is not sufficient to obtain sensitivity independent of the maximum degree.

\section{Proofs in \Cref{sec:prelim}}
\label{sec:prelim-proof}
\renewcommand{\mp}[3]{\ifblank{#3}{\mu_{#1; \lambda, #2}^{\tau }}{\mu_{#1; \lambda, #2}^{\tau, #3} }}
\renewcommand{\mt}[2]{\ifblank{#2}{ \mathcal M^{\tau} (#1) }{ \mathcal M^{\tau, #2} (#1) }}

\begin{proof}[Proof of \Cref{lem:wd-sum}]
    This proof is an application of \Cref{lem:duality}, we have
    \begin{align*}
        \W{\mu, \mu'}
        &= \sup_{f\in L^1(\Omega)} \inner{f}{\mu - \mu'}
        = \sup_{f\in L^1(\Omega)} \sum_{i=1}^n\lambda_i\inner{f}{\mu_i - \mu'_i}
        \\&\le \sum_{i=1}^{n}
          \sup_{f_i\in L^1(\Omega)} \lambda_i\inner{f_i}{\mu_i - \mu'_i}
        = \sum_{i=1}^n \lambda_i\W{\mu_i, \mu'_i}.
    \end{align*}
\end{proof}

\begin{proof}[Proof of \Cref{lem:triangle}]
    Apply \Cref{lem:wd-sum} to the equation $\mu_1 -\mu_n = \sum_{i=1}^{n-1}(\mu_i - \mu_{i+1})$.
\end{proof}

\begin{proof}[Proof of \Cref{lem:pinning-deletion}]

For the first claim,
let $\mathcal M^{\tau, i \gets -}(G) = \set{M \in \mathcal M(G)\mid i\notin M, 
M \text{ satisfies }\tau}$.
Then for any $N\in \mathcal M^\tau(G[E'-i])=\mathcal M^\tau((G-i)[E'-i])$:
\begin{align*}
\mp {E'} G {i\gets -} (N)
&=
\left(\sum_{\substack{M\in \mathcal M^{\tau, i \gets -}(G)\\ M\cap (E'-i)= N}} \lambda^{|M|}\right)
\cdot
\left(\sum_{M\in \mathcal M^{\tau, i\gets -}(G)} \lambda^{|M|}\right)^{-1}.
\end{align*}
Notice that there is a one-to-one correspondence between the following sets:
\begin{align*}
    \mathcal M^{\tau, i \gets -}(G) \ni N &\leftrightarrow N \in  \mathcal M^\tau(G-i).
\end{align*}
Hence, we have
\begin{align*}
\mp{E'}{G}{i\gets -}(N)
&=
\left(\sum_{\substack{M\in \mathcal M^\tau(G-i)\\M\cap E' = N}} \lambda^{|M|}\right)
\cdot
\left(\sum_{\substack{M\in \mathcal M^\tau(G-i)}} \lambda^{|M|}\right)^{-1}\\
&=
\left(\sum_{\substack{M\in \mathcal M^\tau(G-i)\\M\cap (E'-i) = N}} \lambda^{|M|}\right)
\cdot
\left(\sum_{\substack{M\in \mathcal M^\tau(G-i)}} \lambda^{|M|}\right)^{-1}
=
\mp{E-i}{G-i}{}(N),
\end{align*}
where the second equation holds because 
\[
    \set{M\in \mathcal M^\tau(G-i)\mid M\cap E' = N} =\set{M\in \mathcal M^\tau(G-i)\mid M\cap (E'-i) = N}
\]
as $i$ is impossible to be matched.

The second claim is a corollary of the first one.
Notice that
$\mp{E-i}{G}{i\gets +} = \mp{E-i}{G}{i\gets +, F\gets -}$,
then we can apply the first equation to all edges in $F$. It gives
$\mp{E-i}{G}{i\gets +} = \mp{E\setminus F-i}{G\setminus F}{i\gets +}$
as desired.

Then we prove the third claim.
Let $\mathcal M^{\tau, i\gets +} = \set{M \in \mathcal M(G)\mid i\in M, M\text{ satisfies }\tau}$.
By the one-to-one correspondence
\begin{align*}
    \mathcal M^{\tau, i \gets +}(G) \ni N\cup\set i & \leftrightarrow N \in  M^\tau(G\setminus N(i)-i),
\end{align*}
we have
\begin{align*}
\mp{E-i} G {i\gets +}(N)
&=
\left(\sum_{\substack{M\in \mathcal M^{i \gets +}(G)\\ M\cap (E-i) = N}} \lambda^{|M|}\right)
\cdot
\left(\sum_{\substack{M\in \mathcal M^{i \gets +}(G)}} \lambda^{|M|}\right)^{-1}
\\&=
\left(\sum_{\substack{M\in \mathcal M(G\setminus N(i) - i)\\M\cap(E-i) = N}} \lambda^{|M'|+1}\right)
\cdot
\left(\sum_{\substack{M\in \mathcal M(G\setminus N(i) - i)}} \lambda^{|M|+1}\right)^{-1}
\\&=
\mp{E-N(i)-i} {G-N(i)-i} {}(N),
\end{align*}
where the second
equation holds because $\set{M\in \mathcal M^\tau(G\setminus N(i) - i)\mid M\cap (E-i) = N} =N$,
as all edges in $N(i)$ cannot be matched.
\end{proof}

\begin{proof}[Proof of \Cref{lem:expansion}]
Notice that the events $\set{i\gets - \mid i\in E(v)}\cup \set{N(i)\gets -}$ are mutually exclusive and 
collectively exhaustive. Then the lemma follows from the law of total probability.
\end{proof}

\section{Proof of \Cref{lem:DP}}\label{sec:dp}
\begin{proof}[Proof of \Cref{lem:DP}]
    We use the coupling constructed in the proof of \Cref{thm:bounded-degree-algorithm}.
    Recall that, in that proof, we split an edge into two pendant edges and applied inductive coupling to each one.
    For each pendant edge, at each inductive step, the probability that the discrepancy increasing is bounded by $\frac{\Delta\lambda}{1+\Delta\lambda}$.
    Hence, the probability that total discrepancy exceeds $\frac{c\log n}{\log\left(1 + \frac{1}{\lambda\Delta}\right)}$ is bounded by $n^{-c}$.
    Setting $\lambda = \Delta^{O(1/\eps)}$ completes the proof.
\end{proof}

\end{document}